\DeclareSymbolFont{cmmathcal}{OMS}{cmsy}{m}{n}
\DeclareSymbolFontAlphabet{\mathcal}{cmmathcal}
\renewcommand*{\@fnsymbol}[1]{\ensuremath{\ifcase#1\or \star\or \dagger\or \ddagger\or
       \mathsection\or \mathparagraph\or \|\or **\or \dagger\dagger
       \or \ddagger\ddagger \else\@ctrerr\fi}}
\let\oldmarginpar\marginpar
\renewcommand\marginpar[1]{\-\oldmarginpar[\raggedleft\scriptsize #1]%
{\raggedright\scriptsize #1}}
\newcommand{\abs}[1]{\ensuremath{\vert #1\vert} }
\newcommand{\states}{\ensuremath{S} }
\newcommand{\statesOne}{\ensuremath{S_{1}} }
\newcommand{\statesTwo}{\ensuremath{S_{2}} }
\newcommand{\initState}{\ensuremath{s_{{\sf init}}} }
\newcommand{\edges}{\ensuremath{E} }
\newcommand{\dimension}{\ensuremath{k} }
\newcommand{\weight}{\ensuremath{w} }
\newcommand{\largestW}{\ensuremath{W} }
\newcommand{\game}{\ensuremath{G} }
\newcommand{\gameFull}{\ensuremath{\game = \left( \statesOne, \statesTwo, \edges, \dimension, \weight\right)} }
\newcommand{\gameFullOneDim}{\ensuremath{\game = \left( \statesOne, \statesTwo, \edges,  \weight\right)} }
\newcommand{\integ}{\ensuremath{\mathbb{Z}} }
\newcommand{\nat}{\ensuremath{\mathbb{N}} }
\newcommand{\rat}{\ensuremath{\mathbb{Q}} }
\newcommand{\player}{\ensuremath{\mathcal{P}} }
\newcommand{\playerOne}{\ensuremath{\mathcal{P}_{1}} }
\newcommand{\playerTwo}{\ensuremath{\mathcal{P}_{2}} }
\newcommand{\play}{\ensuremath{\pi} }
\newcommand{\plays}{\ensuremath{{\sf Plays}(\game)} }
\newcommand{\prefixes}{\ensuremath{\mathsf{Prefs}(\game)} }
\newcommand{\prefixesArg}[1]{\ensuremath{\mathsf{Prefs}_{#1}(\game)} }
\newcommand{\prefix}{\ensuremath{\rho} }
\newcommand{\last}{\ensuremath{\mathsf{Last}} }
\newcommand{\mpay}{\ensuremath{\mathsf{MP}} }
\newcommand{\tpay}{\ensuremath{\mathsf{TP}} }
\newcommand{\tpaySup}{\ensuremath{\overline{{\sf TP}}} }
\newcommand{\tpayInf}{\ensuremath{\underline{{\sf TP}}} }
\newcommand{\mpaySup}{\ensuremath{\overline{{\sf MP}}} }
\newcommand{\mpayInf}{\ensuremath{\underline{{\sf MP}}} }
\newcommand{\tpayFiniteFull}{\ensuremath{\mathsf{TP}(\prefix) = \sum_{i = 0}^{i = n - 1} w(s_{i}, s_{i+1})} }
\newcommand{\mpayFiniteFull}{\ensuremath{\mathsf{MP}(\prefix) = \frac{1}{n} \tpay(\prefix)} }
\newcommand{\mpaySupFull}{\ensuremath{\mpaySup(\play) = \limsup_{n \rightarrow \infty} \mpay (\play(n))}}
\newcommand{\mpayInfFull}{\ensuremath{\mpayInf(\play) = \liminf_{n \rightarrow \infty} \mpay (\play(n))}}
\newcommand{\tpaySupFull}{\ensuremath{\tpaySup(\play) = \limsup_{n \rightarrow \infty} \tpay (\play(n))}}
\newcommand{\tpayInfFull}{\ensuremath{\tpayInf(\play) = \liminf_{n \rightarrow \infty} \tpay (\play(n))}}
\newcommand{\strat}{\ensuremath{\lambda} }
\newcommand{\strats}{\ensuremath{\Lambda} }
\newcommand{\stratsMemoryless}{\ensuremath{\Lambda^{M}} }
\newcommand{\stratsFinite}{\ensuremath{\Lambda^{F}} }
\newcommand{\outcome}[3]{\ensuremath{\mathsf{Outcome}_{\game}(#1,#2,#3)} }
\newcommand{\objective}{\ensuremath{\phi} }
\newcommand{\objMPInf}{\ensuremath{{\sf MeanInf}_{\game}(v)} }
\newcommand{\objTPInf}{\ensuremath{{\sf TotalInf}_{\game}(v)} }
\newcommand{\objTPInfT}[1]{\ensuremath{{\sf TotalInf}_{\game}(#1)} }
\newcommand{\objMPSup}{\ensuremath{{\sf MeanSup}_{\game}(v)} }
\newcommand{\objMPInfT}[1]{\ensuremath{{\sf MeanInf}_{\game}(#1)} }
\newcommand{\objMPSupT}[1]{\ensuremath{{\sf MeanSup}_{\game}(#1)} }
\newcommand{\objTPSup}{\ensuremath{{\sf TotalSup}_{\game}(v)} }
\newcommand{\objTPSupT}[1]{\ensuremath{{\sf TotalSup}_{\game}(#1)} }
\newcommand{\objMPInfFull}{\ensuremath{\objMPInf = \left\lbrace \play \in \plays \;\vert\; \mpayInf(\play) \geq v\right\rbrace} }
\newcommand{\objMPSupFull}{\ensuremath{\objMPSup = \left\lbrace \play \in \plays \;\vert\; \mpaySup(\play) \geq v\right\rbrace} }
\newcommand{\objTPInfFull}{\ensuremath{\objTPInf = \left\lbrace \play \in \plays \;\vert\; \tpayInf(\play) \geq v\right\rbrace} }
\newcommand{\objTPSupFull}{\ensuremath{\objTPSup = \left\lbrace \play \in \plays \;\vert\; \tpaySup(\play) \geq v\right\rbrace} }
\newcommand{\NPinter}{\ensuremath{\text{NP} \cap \text{coNP}}}
\newcommand{\windowMP}{\text{window mean-payoff}}
\newcommand{\goodWindowMPObj}{\ensuremath{\mathsf{GW}_{\game}(v, \sizeMax)} }
\newcommand{\goodWindowMPObjT}[1]{\ensuremath{\mathsf{GW}_{\game}(#1, \sizeMax)} }
\newcommand{\directFixedWindowMPObj}{\ensuremath{\mathsf{DirFixWMP}_{\game}(v, \sizeMax)} }
\newcommand{\fixedWindowMPObj}{\ensuremath{\mathsf{FixWMP}_{\game}(v, \sizeMax)} }
\newcommand{\fixedWindowMPObjT}[1]{\ensuremath{\mathsf{FixWMP}_{\game}(#1, \sizeMax)} }
\newcommand{\fixedWindowMPObjTW}[2]{\ensuremath{\mathsf{FixWMP}_{\game}(#1, #2)} }
\newcommand{\directFixedWindowMPObjT}[1]{\ensuremath{\mathsf{DirFixWMP}_{\game}(#1, \sizeMax)} }
\newcommand{\directFiniteWindowMPObj}{\ensuremath{\mathsf{DirBndWMP}_{\game}(v)} }
\newcommand{\finiteWindowMPObj}{\ensuremath{\mathsf{BndWMP}_{\game}(v)} }
\newcommand{\finiteWindowMPObjT}[1]{\ensuremath{\mathsf{BndWMP}_{\game}(#1)} }
\newcommand{\directFiniteWindowTPObjT}[1]{\ensuremath{\mathsf{DirBndWTP}_{\game}(#1)} }
\newcommand{\cycle}{\ensuremath{\mathcal{C}} }
\newcommand{\twoCM}{\ensuremath{\mathcal{M}} }
\newcommand{\winnings}{\ensuremath{\mathcal{W}} }
\newcommand{\winningsDir}{\ensuremath{\mathcal{W}} }
\newcommand{\winningsPos}{\ensuremath{\mathcal{W}_{g}} }
\newcommand{\wDirect}{\ensuremath{W_{d}} }
\newcommand{\wAttr}{\ensuremath{W_{attr}} }
\newcommand{\wWin}{\ensuremath{W} }
\newcommand{\wPos}{\ensuremath{W_{gw}} }
\newcommand{\wFP}{\ensuremath{W_{bp}} }
\newcommand{\posCost}[2]{\ensuremath{C_{#1}(#2)} }
\newcommand{\sizeMax}{\ensuremath{l_{\max}} }
\newcommand{\size}{\ensuremath{l} }
\newcommand{\windowMPAlg}{\ensuremath{\mathsf{FWMP}} }
\newcommand{\finiteProblemAlg}{\ensuremath{\mathsf{BoundedProblem}} }
\newcommand{\unbNegWindowAlg}{\ensuremath{\mathsf{UnbOpenWindow}} }
\newcommand{\negSupTPAlg}{\ensuremath{\mathsf{NegSupTP}} }
\newcommand{\directWinAlg}{\ensuremath{\mathsf{DirectFWMP}} }
\newcommand{\posSumAlg}{\ensuremath{\mathsf{GoodWin}} }
\newcommand{\natStrict}{\ensuremath{\mathbb{N}_{0}} }
\newcommand{\attr}{\ensuremath{\mathsf{Attr}} }
\newcommand{\state}{\ensuremath{s} }
\newcommand{\gameCB}{\ensuremath{\game^{c}} }
\newcommand{\playCB}{\ensuremath{\play^{c}} }
\newcommand{\statesCB}{\ensuremath{\states^{c}} }
\newcommand{\sumCB}{\ensuremath{\sigma} }
\newcommand{\stepsCB}{\ensuremath{\tau} }
\newcommand{\edgesCB}{\ensuremath{\edges^{c}} }
\newcommand{\gameCBFull}{\ensuremath{\game^{c} = (\states^{c}_{1}, \states^{c}_{2}, \edgesCB)} }
\newcommand{\we}{\ensuremath{\weight_{e}} }
\newcommand{\sink}{\ensuremath{\varsigma} }
\newcommand{\yes}{\ensuremath{\textsc{Yes}} }
\newcommand{\zeroVector}{\ensuremath{\{0\}^{\dimension}} }
\newcommand{\statesSize}{\ensuremath{\vert\states\vert} }
\newcommand{\statesSizeP}[1]{\ensuremath{\vert\states^{#1}\vert} }
\newcommand{\edgesSize}{\ensuremath{\vert\edges\vert} }
\newcommand{\bits}{\ensuremath{V} }
\newcommand{\directWinComp}{\ensuremath{\mathbb{C}_{{\sf DW}}} }
\newcommand{\posSumComp}{\ensuremath{\mathbb{C}_{{\sf GW}}} }
\newcommand{\coBuchi}{\ensuremath{\text{co-Büchi}} }
\newcommand{\gadgets}{\ensuremath{K} }
\newcommand{\cdGame}{\ensuremath{\mathcal{C}} }
\newcommand{\cdStates}{\ensuremath{\mathcal{S}} }
\newcommand{\cdTransitions}{\ensuremath{\mathcal{T}} }
\newcommand{\aptm}{\ensuremath{\mathcal{M}} }
\newcommand{\word}{\ensuremath{\zeta} }
\newcommand{\wordSize}{\ensuremath{\vert\zeta\vert} }
\newcommand{\alphabet}{\ensuremath{\Sigma} }
\newcommand{\tapeCell}{\ensuremath{h} }
\newcommand{\resetNet}{\ensuremath{\mathcal{N}} }
\newcommand{\resetNetPlaces}{\ensuremath{P} }
\newcommand{\resetNetTrans}{\ensuremath{T} }
\newcommand{\resetNetTransSize}{\ensuremath{\vert T\vert} }
\newcommand{\resetNetPlacesSize}{\ensuremath{\vert P\vert} }
\newcommand{\resetNetInitMarking}{\ensuremath{\overline{m_{0}}} }
\newcommand{\resetNetMarking}{\ensuremath{\overline{m}} }
\newcommand{\resetNetMarkingBis}{\ensuremath{\overline{n}} }
\newcommand{\resetNetLargestM}{\ensuremath{M} }
\newcommand{\resetNetOneVector}{\ensuremath{\overline{\mathbf{1}}} }
\newcommand{\resetNetVector}{\ensuremath{\overline{\mathbf{a}_{p\rightarrow b}}} }
\newcommand{\resetNetOneZeroVector}{\ensuremath{\overline{\mathbf{1}_{p\rightarrow 0}}} }
\newcommand{\resetNetZeroOneVector}{\ensuremath{\overline{\mathbf{0}_{q\rightarrow 1}}} }
\newcommand{\resetNetZeroMinusOneVector}{\ensuremath{\overline{\mathbf{0}_{q\rightarrow -1}}} }
\newcommand{\resetNetZeroVector}{\ensuremath{\mathbf{\overline{0}}} }
\newcommand{\resetNetInputT}[1]{\ensuremath{\mathbf{I}(#1)} }
\newcommand{\resetNetOutputT}[1]{\ensuremath{\mathbf{O}(#1)} }
\newcommand{\resetNetInput}{\ensuremath{\mathbf{I}} }
\newcommand{\resetNetOutput}{\ensuremath{\mathbf{O}} }
\newcommand{\resetNetReset}{\ensuremath{r} }
\newcommand{\bounded}{\text{bounded}}
\newcommand{\Bounded}{\text{Bounded}}
\newcommand{\playSuffix}[1]{\ensuremath{\play(#1,\infty)} }
\renewcommand{\arraystretch}{1.2}
\let\doendproof\endproof
\renewcommand\endproof{~\hfill\qed\doendproof}
\title{Looking at Mean-Payoff and Total-Payoff through Windows\thanks{Work partially supported by European project CASSTING (FP7-ICT-601148).}}
\author{Krishnendu Chatterjee\inst{1}$^{,}$\thanks{Author supported by Austrian Science Fund (FWF) Grant No P 23499-N23, FWF NFN Grant No S11407 (RiSE), ERC Start Grant (279307: Graph Games), Microsoft faculty fellowship.} \and Laurent Doyen\inst{2} \and Mickael Randour\inst{3}$^{,}$\thanks{Author supported by F.R.S.-FNRS fellowship.} \and \mbox{Jean-Fran\c{c}ois Raskin\inst{4}}$^{,}$\thanks{Author supported by ERC Starting Grant (279499: inVEST).}}
\institute{
IST Austria (Institute of Science and Technology Austria)\\ \and LSV - ENS Cachan, France\\ \and Computer Science Department, Université de Mons (UMONS), Belgium\\
\and D\'epartement d'Informatique, Universit\'e Libre de Bruxelles (U.L.B.), Belgium
}
\begin{document}

\maketitle

\begin{abstract}
We consider two-player games
played 
on weighted directed graphs with 
mean-payoff and total-payoff objectives,
two classical quantitative
objectives. While for single-dimensional
games the complexity and memory bounds for 
both objectives coincide, we show that in contrast to multi-dimensional
mean-payoff games that are known to be coNP-complete, multi-dimensional total-pay\-off
games are undecidable. We introduce conservative approximations of these objectives, where 
the payoff is considered over a local finite window sliding along a play,
instead of the whole play. For single dimension, we show that $(i)$~if the window size
is polynomial, 
deciding the winner 
takes polynomial time, and $(ii)$~the existence
of a $\bounded$ window can be decided in NP~$\cap$~coNP, and is at least as hard as 
solving mean-payoff games.
For multiple dimensions, we show that $(i)$~the problem with fixed window size is EXPTIME-complete,
and $(ii)$~there is no primitive-recursive algorithm to decide the existence of a $\bounded$ window.
\end{abstract}

\section{Introduction}
\smallskip\noindent{\bf Mean-payoff and total-payoff games.}
Two-player mean-payoff and total-payoff ga\-mes are played on finite weighted directed graphs 
(in which every edge has an integer weight) with two types of vertices:
in player-$1$ vertices,  player~$1$ chooses the successor vertex from the set
of outgoing edges; in player-$2$ vertices,  player~$2$ does likewise.
The game results in an infinite path through the graph, called a \emph{play}.
The mean-payoff (resp. total-payoff) value of a play is the long-run average 
(resp. sum) of the edge-weights along the path.
While traditionally games on graphs with $\omega$-regular objectives have been 
studied for system analysis, research efforts have recently focused on 
quantitative extensions to model resource constraints of embedded systems, 
such as power consumption, or buffer size~\cite{CdAHS03}.
Quantitative games, such as mean-payoff games, are crucial for the formal analysis of
resource-constrained reactive systems. 
For the analysis of systems with multiple resources, multi-dimension games, where edge weights are integer vectors, provide the appropriate framework.

\smallskip\noindent{\bf Decision problems.} The decision problem for mean-payoff
and total-payoff games asks, given a starting vertex, whether player~1 has a strategy
that against all strategies of the opponent ensures a play with value at least~0.
For both objectives, \emph{memoryless} winning strategies
exist for both players (where a memoryless strategy is independent of the 
past and depends only on the current state)~\cite{EM79,gimbert2004}. 
This ensures that the decision problems belong to NP~$\cap$~coNP; and they belong
to the intriguing class of problems that are in NP~$\cap$~coNP but whether
they are in P (deterministic polynomial time) are long-standing open questions.
The study of mean-payoff games has also been extended to multiple dimensions 
where the problem is shown to be coNP-complete~\cite{VR11,chatterjee_FSTTCS10}.
While for one dimension all the results for mean-payoff and total-payoff coincide,
our first contribution shows that quite unexpectedly (in contrast to multi-dimensional
mean-payoff games) the multi-dimensional total-payoff games are undecidable.

\smallskip\noindent{\bf Window objectives.} On the one hand, the complexity of
single-dimensional mean-payoff and total-payoff games is a long-standing open
problem, and on the other hand, the multi-dimensional problem is undecidable
for total-payoff games.
In this work, we propose to study variants of these objectives, namely,
\emph{$\bounded$ window mean-payoff} and \emph{fixed window mean-payoff} objectives.
In a $\bounded$ window mean-payoff objective instead of the long-run average along the 
whole play we consider payoffs over a local $\bounded$ window sliding along a play,
and the objective is that the average weight must be at least zero over every 
$\bounded$ window from some point on. 
This objective can be seen as a strengthening of the mean-payoff objective 
(resp. of the total-payoff objective if we require that the window objective is 
satisfied from the beginning of the play rather than from some point on), i.e., 
winning for the $\bounded$ window mean-payoff objective implies winning for the 
mean-payoff objective.
In the fixed window mean-payoff objective the window length is fixed and
given as a parameter.
Observe that winning for the fixed window objective implies winning for the 
$\bounded$ window objective.

\smallskip\noindent{\bf Attractive features for window objectives.} 
First, they are a strengthening of the mean-payoff objectives
and hence provide conservative approximations for mean-payoff objectives.
Second, the window variant is very natural to study in system analysis. 
Mean-payoff objectives require average to satisfy certain threshold in the long-run 
(or in the limit of the infinite path), whereas the window objectives require to 
provide guarantee on the average, not in the limit, but within a bounded time, 
and thus provide better time guarantee than the mean-payoff objectives. 
Third, the window parameter 
provides flexibility, as it can be adjusted specific to applications requirement of strong or weak 
time guarantee for system behaviors.
Finally, we will establish that our variant in the single dimension is more computationally 
tractable, which makes it an attractive alternative to mean-payoff objectives.

\renewcommand{\arraystretch}{1.2}
\begin{table}[tb]
  \centering   
\begin{footnotesize}
\begin{tabular}{|c||c|c|c||c|c|c|}
\cline{2-7} \multicolumn{1}{c|}{} &  \multicolumn{3}{c||}{~one-dimension~} & \multicolumn{3}{c|}{~$\dimension$-dimension~} \\ 
\cline{2-7} \multicolumn{1}{c|}{} &  ~~complexity~~ & ~~$\playerOne$ mem.~~ & ~~$\playerTwo$ mem.~~ & ~~complexity~~ & ~~$\playerOne$ mem.~~ & ~~$\playerTwo$ mem.~~\\ 
\hline ~~$\mpayInf$ / $\mpaySup$~~ & \NPinter & \multicolumn{2}{c||}{mem-less} & coNP-c. / $\NPinter$ & ~~infinite~~ & ~~mem-less~~\\
\hline ~~$\tpayInf$ / $\tpaySup$~~ & \NPinter & \multicolumn{2}{c||}{mem-less} & ~~\textbf{undec.} (Thm.~\ref{thm:undecidableTP})~~ & - & - \\
\hline ~~WMP: fixed~~ & \multirow{2}{*}{\textbf{P-c.} (Thm.~\ref{thm:oneDimFixed})} & \multicolumn{2}{c||}{\multirow{4}{*}{\begin{tabular}{c}\textbf{mem. req.}\\$\leq$ \textbf{linear($\vert\states\vert \cdot \sizeMax$)}\\(Thm.~\ref{thm:oneDimFixed})\end{tabular}}}  & \textbf{PSPACE-h.} (Thm.~\ref{thm:multiDimFixed}) & \multicolumn{2}{c|}{}\\
~polynomial window~ & & \multicolumn{2}{c||}{} & \textbf{EXP-easy} (Thm.~\ref{thm:multiDimFixed}) & \multicolumn{2}{c|}{\textbf{exponential}} \\
\cline{1-2}\cline{5-5} ~~WMP: fixed~~ & \multirow{2}{*}{\textbf{P($\vert\states\vert, \bits, \sizeMax$)}~~(Thm.~\ref{thm:oneDimFixed})} & \multicolumn{2}{c||}{} & \multirow{2}{*}{\textbf{EXP-c.} (Thm.~\ref{thm:multiDimFixed})} & \multicolumn{2}{c|}{(Thm.~\ref{thm:multiDimFixed})} \\
~~arbitrary window~~ &  & \multicolumn{2}{c||}{} & & \multicolumn{2}{c|}{}\\
\hline ~~WMP: $\bounded$~~ & \multirow{2}{*}{\textbf{\NPinter}~~(Thm.~\ref{thm:oneDimFinite})} & \textbf{mem-less} & \textbf{infinite} & \multirow{2}{*}{\textbf{NPR-h.} (Thm.~\ref{thm:multiDimFinite})} & \multirow{2}{*}{-} & \multirow{2}{*}{-}\\
~~window problem~~ & & (Thm.~\ref{thm:oneDimFinite}) & (Thm.~\ref{thm:oneDimFinite}) & & &\\
\hline
\end{tabular}
\end{footnotesize}
\vspace*{2mm}
\caption{Complexity of deciding the winner and memory required, with $\statesSize$ the number of states of the game (vertices in the graph), $\bits$ the length of the binary encoding of weights, and $\sizeMax$ the window size. New results in bold (h. for hard and c. for complete).}
\label{table:complexityAndMemory}
\end{table}

\smallskip\noindent{\bf Applicability.} 
In the context of $\omega$-regular objectives, the traditional infinitary notion of liveness has been strengthened to finitary liveness~\cite{AH98}, where instead of requiring that good events happen eventually, they are required to happen within a finite time bound. The notion of finitary parity games was introduced and studied in~\cite{CH06}, and a polynomial time algorithm for finitary parity games was given in~\cite{DBLP:journals/tocl/ChatterjeeHH09}, and also studied for pushdown games~\cite{CF_CSL13}. The notion of finitary conditions has also been extended to prompt setting where the good events are required to happen as promptly as possible~\cite{KPV09}. Our work extends the study of such finite time frames in the setting of quantitative objectives, and our window objectives can be viewed as an extension of finitary conditions for mean-payoff and total-payoff objectives.

With regard to applications, our window variants provide a natural framework to reason about quantitative properties under local finite horizons. To illustrate this point, consider a classical example of application with mean-payoff aspects, as presented by Bohy et al. in the context of synthesis from LTL specifications enriched with mean-payoff objectives~\cite{DBLP:conf/tacas/BohyBFR13}. Consider the synthesis of a suitable controller for a computer server having to grant requests to different types of clients. The LTL specification can express that all grants should eventually be granted. Adding quantities and a mean-payoff objective helps in defining priorities between requests and associating costs to the delays between requests and grants, depending of the relative priority of the request. Window objectives are useful for modeling such applications. Indeed, it is clear that in a desired controller, requests should not be placed on hold for an arbitrary long time. Similarly, if we have two types of requests, with different priorities, and we want to ensure guarantees on the mean waiting time per type of request, it seems natural that an adequate balance between the two types should be observable within reasonable time frames (which can be defined as part of the specification with our new objectives) instead of possible great variations that are allowed by the classical mean-payoff objective.

\vspace*{1mm}
\noindent{\bf Our contributions.} The main contributions of this 
work (along with the undecidability of multi-dimensional total-payoff games) are 
as follows:
\begin{enumerate}

\item \emph{Single dimension.}
For the single-dimensional case we present an algorithm for the fixed window problem that is 
polynomial in the size of the game graph times the length of the binary encoding of weights times the size of the
fixed window. Thus if the window size is polynomial, we have a polynomial-time algorithm.
For the $\bounded$ window problem we show that the decision problem is in 
NP~$\cap$~coNP, and at least as hard as solving mean-payoff games.
However, winning for mean-payoff games does not imply
winning for the $\bounded$ window mean-payoff objective, i.e., the winning sets
for mean-payoff games and $\bounded$ window mean-payoff games do not coincide.
Moreover, the structure of winning strategies is also very different, e.g., 
in mean-payoff games both players have memoryless winning strategies, but in 
$\bounded$ window mean-payoff games we show that player~2 requires infinite memory.
We also show that if player~1 wins the $\bounded$ window mean-payoff objective, then
a window of size $(\abs{S} - 1) \cdot (\abs{S} \cdot W + 1)$ is sufficient
where~$S$ is the state space (the set of vertices of the graph), and~$W$ is the largest absolute weight value.
Finally, we show that $(i)$ a winning strategy for the $\bounded$ window mean-payoff objective
ensures that the mean-payoff is at least~$0$ regardless of the strategy of the opponent, 
and $(ii)$ a strategy that ensures that the mean-payoff is strictly greater than~$0$
is winning for the $\bounded$ window mean-payoff objective.

\item \emph{Multiple dimensions}.
For multiple dimensions, we show that the fixed window problem is 
EXPTIME-complete (both for arbitrary dimensions with weights in 
$\{-1,0,1\}$ and for two dimensions with arbitrary weights); and if the window 
size is polynomial, then the problem is PSPACE-hard.
For the $\bounded$ window problem we show that the problem is non-primitive
recursive hard (i.e., there is no primitive recursive algorithm to decide
the problem).

\item \emph{Memory requirements.} 
For all the problems for which we prove decidability we also characterize the memory required by winning strategies.

\end{enumerate}

The relevant results are summarized in Table~\ref{table:complexityAndMemory}: our results are in
bold fonts.
In summary, the fixed window problem provides an attractive approximation of the 
mean-payoff and total-payoff games that we show have better algorithmic 
complexity.
In contrast to the long-standing open problem of mean-payoff games, the one-dimension fixed
window problem with polynomial window size can be solved in polynomial time;
and in contrast to the undecidability of multi-dimensional total-payoff games,
the multi-dimension fixed window problem is EXPTIME-complete.

\smallskip\noindent{\bf Related work.}
This paper extends the results presented in its preceding conference version~\cite{chatterjee_ATVA2013} and gives a full presentation of the technical details.
Mean-payoff games have been first studied by Ehrenfeucht and Mycielski in~\cite{EM79}
where it is shown that memoryless winning strategies exist for both players.
This entails that the decision problem lies in NP~$\cap$~coNP~\cite{KL93,ZP96}, 
and it was later shown to belong to
UP~$\cap$~coUP~\cite{jurdzinski98}. 
Despite many efforts~\cite{GKK88,ZP96,P99,LP07,BV07},
no polynomial-time algorithm for the mean-payoff games problem is known so far.
Gurvich, Karzanov, Khachivan and Lebedev~\cite{GKK88,KL93} provided the first 
(exponential) algorithm for mean-payoff games,
later extended by Pisaruk~\cite{P99}.  
The first pseudo-polynomial-time algorithm for mean-payoff games was given in~\cite{ZP96}
and was improved in~\cite{BCDGR11}.
Lifshits and Pavlov~\cite{LP07} propose an algorithm which is polynomial in the encoding of weights but exponential in the number of vertices of the graph: it is based on a graph decomposition procedure. 
Bjorklund and Vorobyov~\cite{BV07} present a \emph{randomized} 
algorithm which is both subexponential and pseudo-polynomial.
Special cases for mean-payoff games can also be solved in polynomial time
depending on the weight structure~\cite{CHKN14}, and the algorithmic problem
has also been studied for graphs (with one player only)~\cite{Karp78,CHKLR14}. Extension of the worst-case threshold problem - the classical decision problem on mean-payoff games - with guarantees on the expected performance faced to a stochastic adversary has been considered in~\cite{DBLP:conf/stacs/BruyereFRR14}. 
While all the above works are for single dimension, multi-dimensional
mean-payoff games have been studied in~\cite{VR11,chatterjee_FSTTCS10,DBLP:journals/acta/ChatterjeeRR14}.
One-dimension total-payoff games have been studied 
in~\cite{gawlitza2009} where it is shown that memoryless winning strategies exist 
for both players and the decision problem is in UP~$\cap$~coUP.

\section{Preliminaries}
\label{preliminaries}
We consider two-player turn-based games and denote the two \textit{players} by $\playerOne$ and $\playerTwo$.

\smallskip\noindent\textbf{Multi-weighted two-player game structures.} \textit{Multi-weighted two-player game structures} are weighted graphs \gameFull where (\textit{i}) \statesOne and \statesTwo resp. denote the finite sets of vertices, called \textit{states}, belonging to $\playerOne$ and $\playerTwo$, with $\statesOne \cap \statesTwo = \emptyset$ and $\states = \statesOne \cup \statesTwo$; (\textit{ii}) $\edges \subseteq \states \times \states$ is the set of \textit{edges} such that for all $s \in \states$, there exists $s' \in \states$ with $(s, s') \in \edges$; (\textit{iii}) $k \in \nat$ is the \textit{dimension} of the weight vectors; and (\textit{iv}) $\weight \colon \edges \rightarrow \integ^{\dimension}$ is the multi-weight labeling function. When it is clear from the context that a game $\game$ is one-dimensional ($\dimension = 1$), we omit $\dimension$ and write it as $\gameFullOneDim$. The game structure $\game$ is \textit{one-player} if $\statesTwo = \emptyset$. We denote by $\largestW$ the largest absolute weight that appears in the game.
For complexity issues, we assume that weights are encoded in binary. Hence we differentiate between pseudo-polynomial algorithms (polynomial in $\largestW$) and truly polynomial algorithms (polynomial in $\bits = \lceil\log_{2} \largestW\rceil$, the number of bits needed to encode the weights).

A \textit{play} in $\game$ from an initial state $\initState \in \states$ is an infinite sequence of states $\play = s_{0}s_{1}s_{2}\ldots{}$ such that $s_{0} = \initState$ and $(s_{i}, s_{i+1}) \in \edges$ for all $i \geq 0$. The \textit{prefix} up to the $n$-th state of $\play$ is the finite sequence $\play(n) = s_{0}s_{1}\ldots{}s_{n}$. Let $\last(\play(n)) = s_{n}$ denote the last state of $\play(n)$. A prefix $\play(n)$ belongs to $\player_{i}$, $i \in \lbrace 1, 2\rbrace$, if $\last(\play(n)) \in \states_{i}$. The set of plays of $\game$ is denoted by \plays and the corresponding set of prefixes is denoted by $\prefixes$. The set of prefixes that belong to $\player_{i}$ is denoted by $\prefixesArg{i}$. The infinite suffix of a play starting in $s_{n}$ is denoted $\playSuffix{n}$.

The \textit{total-payoff} of a prefix $\prefix = s_{0}s_{1}\ldots{}s_{n}$ is $\tpayFiniteFull$, and its \textit{mean-payoff} is $\mpayFiniteFull$. This is naturally extended to plays by considering the componentwise limit behavior (i.e., limit taken on each dimension). The \textit{infimum (resp. supremum) total-payoff} of a play $\play$ is $\tpayInfFull$ (resp. $\tpaySupFull$). The \textit{infimum (resp. supremum) mean-payoff} of $\play$ is $\mpayInfFull$ (resp.  $\mpaySupFull$).

\smallskip\noindent\textbf{Strategies.} A \textit{strategy} for $\player_{i}$, $i \in \lbrace 1, 2\rbrace$, in $\game$ is a function $\strat_{i} \colon \prefixesArg{i} \rightarrow \states$ such that $(\last(\prefix), \strat_{i}(\prefix)) \in \edges$ for all $\prefix \in \prefixesArg{i}$.
A strategy $\strat_{i}$ for $\player_{i}$ has \textit{finite-memory} if it can be encoded by a deterministic Moore machine $(M,m_0,\alpha_u,\alpha_n)$ where $M$ is a finite set of states (the memory of the strategy), $m_0 \in M$ is the initial memory state, $\alpha_u \colon M \times S \to M$ is an update function, and $\alpha_n \colon M \times S_{i} \to S$ is the next-action function. If the game is in $s \in S_{i}$ and $m \in M$ is the current memory value, then the strategy chooses $s' = \alpha_n(m,s)$ as the next state of the game. When the game leaves a state $s \in S$, the memory is updated to $\alpha_u(m,s)$. Formally, $\left\langle M, m_0, \alpha_u, \alpha_n\right\rangle $ defines the strategy $\strat_{i}$ such that $\strat_{i}(\rho\cdot s) = \alpha_n(\hat{\alpha}_u(m_0, \rho), s)$ for all $\rho \in S^*$ and $s \in S_{i}$, where $\hat{\alpha}_u$ extends $\alpha_u$ to sequences of states as expected. A strategy is \emph{memoryless} if $\vert M\vert = 1$, i.e., it does not depend on history but only on the current state of the game. We resp. denote by $\strats_{i}, \stratsFinite_{i}$, and $\stratsMemoryless_{i}$ the sets of general (i.e., possibly infinite-memory), finite-memory, and memoryless strategies for player $\player_{i}$.

A play \play is said to be \textit{consistent} with a strategy $\strat_{i}$ of $\player_{i}$ if for all $n \geq 0$ such that $\last(\play(n)) \in \states_{i}$, we have $\last(\play(n+1)) = \strat_{i}(\play(n))$. Given an initial state $\initState \in \states$, and two strategies, $\strat_{1}$ for $\playerOne$ and $\strat_{2}$ for $\playerTwo$, the unique play from $\initState$ consistent with both strategies is the \textit{outcome} of the game, denoted by $\outcome{\initState}{\strat_{1}}{\strat_{2}}$.

\smallskip\noindent\textbf{Attractors.} The \textit{attractor} for $\playerOne$ of a set $A \subseteq \states$ in $\game$ is denoted by $\attr_{\game}^{\playerOne}(A)$ and computed as the fixed point of the sequence $\attr_{\game}^{\playerOne,\,n+1}(A) = \attr_{\game}^{\playerOne,\,n}(A) \cup \{s \in \states_{1} \,\vert\, \exists\, (s,t) \in \edges,\, t \in \attr_{\game}^{\playerOne,\,n}(A)\} \cup \{s \in \states_{2} \,\vert\, \forall\, (s,t) \in \edges,\, t \in \attr_{\game}^{\playerOne,\,n}(A)\}$, with $\attr_{\game}^{\playerOne,\,0}(A) = A$. The attractor $\attr_{\game}^{\playerOne}(A)$ is exactly the set of states from which $\playerOne$ can ensure to reach $A$ no matter what $\playerTwo$ does. The attractor $\attr_{\game}^{\playerTwo}(A)$ for $\playerTwo$ is defined symmetrically.

\smallskip\noindent\textbf{Objectives.} An \textit{objective} for \playerOne in \game is a set of plays $\objective \subseteq \plays$. A play $\play \in \plays$ is \textit{winning} for an objective $\objective$ if $\play \in \objective$. Given a game $\game$ and an initial state $\initState \in \states$, a strategy $\strat_{1}$ of $\playerOne$ is winning if $\outcome{\initState}{\strat_{1}}{\strat_{2}} \in \objective$ for all strategies $\strat_{2}$ of $\playerTwo$.
Given a rational threshold vector $v \in \rat^{\dimension}$, we define the \textit{infimum (resp. supremum) total-payoff (resp. mean-payoff) objectives} as follows:

\vspace{5mm}
\noindent\begin{minipage}[b]{0.47\linewidth}
\begin{flushleft}
\begin{itemize}
\item $\objTPInfFull$
\item $\objTPSupFull$
\end{itemize}
\end{flushleft}
\end{minipage}
\hfill
\begin{minipage}[b]{0.47\linewidth}
\begin{center}
\begin{itemize}
\item $\objMPInfFull$
\item $\objMPSupFull$
\end{itemize}
\end{center}
\end{minipage}
\vspace{5mm}

\smallskip\noindent\textbf{Decision problem.} Given a game structure $\game$, an initial state $\initState \in \states$, and an inf./sup. total-payoff/mean-payoff objective $\objective \subseteq \plays$, the \textit{threshold problem} asks to decide if $\playerOne$ has a winning strategy for this objective. 
For the mean-payoff, the threshold $v$ can be taken equal to $\zeroVector$ (where $\zeroVector$ denotes the $\dimension$-dimension zero vector) w.l.o.g. as we transform the weight function $\weight$ to $b\cdot\weight - a$ for any threshold $\frac{a}{b}$, $a \in \integ^{\dimension}$, $b \in \natStrict = \nat \setminus \{0\}$. For the total-payoff, the same result can be achieved by adding an initial edge of value $-a$ to the game.

\section{Mean-Payoff and Total-Payoff Objectives}
\label{sec:MPTP}
In this section, we discuss classical mean-payoff and total-payoff objectives. We show that while they are closely related in one dimension, this relation breaks in multiple dimensions. Indeed, we establish that the threshold problem for total-payoff becomes undecidable, both for the infimum and supremum variants.

First, consider one-dimension games. In this case, memoryless strategies exist for both players for both objectives~\cite{liggett_SR69,EM79,filar1997,gimbert2004} and the sup. and inf. mean-payoff problems coincide (which is not the case for total-payoff). Threshold problems for mean-payoff and total-payoff are closely related as witnessed by Lemma~\ref{lem:linkMPTP} and both have been shown to be in $\NPinter$~\cite{ZP96,gawlitza2009}.

\begin{lemma}
\label{lem:linkMPTP}
Let $\gameFull$ be a two-player game structure and $\initState \in \states$ be an initial state. Let A, B, C and D resp. denote the following assertions.
\begin{enumerate}[A.]
\item Player $\playerOne$ has a winning strategy for $\objMPSupT{\zeroVector}$.
\item Player $\playerOne$ has a winning strategy for $\objMPInfT{\zeroVector}$.
\item There exists a threshold $v \in \rat^{\dimension}$ such that $\playerOne$ has a winning strategy for $\objTPInfT{v}$.
\item There exists a threshold $v' \in \rat^{\dimension}$ such that $\playerOne$ has a winning strategy for $\objTPSupT{v'}$.
\end{enumerate}
For games with one-dimension ($\dimension = 1$) weights, all four assertions are equivalent. For games with multi-dimension ($\dimension > 1$) weights, the only implications that hold are: $C \Rightarrow D \Rightarrow A$ and $C \Rightarrow B \Rightarrow A$. All other implications are false.
\end{lemma}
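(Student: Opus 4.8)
\emph{Overview and the implications that always hold.} The plan is to prove the six ``easy'' implications uniformly in every dimension, close the loop in dimension one, and refute the rest in dimension $\dimension > 1$ by explicit games. First I would note that $\liminf \le \limsup$ taken componentwise gives the objective inclusions $\objMPInfT{\zeroVector} \subseteq \objMPSupT{\zeroVector}$ and $\objTPInfT{v} \subseteq \objTPSupT{v}$, so the \emph{same} strategy witnesses $B \Rightarrow A$ and $C \Rightarrow D$ (for $D$ take threshold $v' = v$). For $D \Rightarrow A$ and $C \Rightarrow B$ the device is division by $n$: along any outcome $\play$, if $\tpaySup(\play) \ge v'$ (resp.\ $\tpayInf(\play) \ge v$), then since $v'$ (resp.\ $v$) is a \emph{fixed finite} vector, restricting to indices where the total payoff stays near/above the threshold and dividing by $n \to \infty$ annihilates the constant and yields $\mpaySup(\play) \ge \zeroVector$ (resp.\ $\mpayInf(\play) \ge \zeroVector$). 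Hence a strategy winning the total-payoff objective already wins the matching mean-payoff objective, giving $D \Rightarrow A$ and $C \Rightarrow B$, and by transitivity $C \Rightarrow A$. All of this is dimension-independent.

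\emph{Dimension one.} It then suffices to add $A \Rightarrow C$, since together with the above the four assertions collapse. I would invoke memoryless determinacy of one-dimensional mean-payoff games: from $A$ (equivalently, value $\ge 0$) player $\playerOne$ has a \emph{memoryless} strategy $\strat$ guaranteeing mean-payoff $\ge 0$. In the graph obtained by fixing $\strat$, every reachable cycle must have nonnegative total weight, for otherwise $\playerTwo$ could pump it and force a negative mean. Decomposing an arbitrary prefix $\play(n)$ into simple cycles (each nonnegative) plus a simple path (of length $< \statesSize$, hence of weight at least $-(\statesSize - 1)\cdot\largestW$) shows $\tpay(\play(n)) \ge -(\statesSize - 1)\cdot\largestW$ for all $n$; thus $\tpayInf(\play) \ge -(\statesSize - 1)\cdot\largestW$ and $C$ holds with $v = -(\statesSize - 1)\cdot\largestW$.

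\emph{Dimension $\dimension > 1$: refuting most of the rest.} Five of the seven remaining implications fall to one-state games. A single $\playerOne$-state with self-loops $(0,-1)$ and $(-1,0)$ satisfies $A$ (play increasingly long blocks of one loop, freezing one coordinate so that its mean tends to $0$, then switch, so each coordinate's mean-payoff has limit superior $0$) but fails $B$, $C$ and $D$: the coordinate sum of every edge is $-1$, so the two mean-payoff coordinates sum to $-1$ along every prefix, and every play sends some coordinate's total to $-\infty$. This refutes $A \Rightarrow B$, $A \Rightarrow C$, $A \Rightarrow D$. A single $\playerOne$-state with self-loops $(1,-2)$ and $(-2,1)$ satisfies $A$ and $D$ (growing blocks push each coordinate's total to $+\infty$ along a subsequence, so $\tpaySup$ dominates any fixed $v'$) but fails $B$ and $C$, since keeping coordinate $1$'s total bounded below needs loop $(1,-2)$ at asymptotic frequency at least $2/3$ while keeping coordinate $2$'s needs loop $(-2,1)$ at frequency at least $2/3$, which is impossible. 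This refutes $D \Rightarrow B$ and $D \Rightarrow C$.

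\emph{The main obstacle.} The two surviving non-implications, $B \not\Rightarrow C$ and $B \not\Rightarrow D$, are the hard part. The reduction in the one-dimensional argument shows that any \emph{finite-memory} strategy winning $B$ induces a finite product arena whose reachable cycles are all nonnegative in every dimension, which bounds the total payoff below and already secures $C$; hence a counterexample must be a two-player game in which $\playerOne$ wins $\mpayInf \ge \zeroVector$ \emph{only with infinite memory}. The plan is to use precisely such a game: $\playerTwo$ repeatedly attacks the dimension in which $\playerOne$ is momentarily most in deficit, forcing $\playerOne$ to track an unbounded deficit (whence the infinite memory), and this deficit drives some coordinate's total payoff to $-\infty$ — but only sub-linearly, so that the mean still converges to a value $\ge 0$. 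Proving that in this game $\playerOne$ indeed secures $\mpayInf \ge \zeroVector$ while no finite threshold can be guaranteed for $\tpayInf$ (hence, via $C \Rightarrow D$, for $\tpaySup$ either) is the delicate step where the bulk of the work lies.
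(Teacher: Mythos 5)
Your treatment of everything except the last step is correct and essentially matches the paper. The five always-valid implications are argued the same way (liminf vs.\ limsup inclusion, and division by $n$ killing the constant threshold); your one-dimension argument ($A \Rightarrow C$ via a memoryless strategy, non-negativity of every cycle reachable under it, and cycle decomposition of prefixes, giving the bound $-(\statesSize-1)\cdot\largestW$) is the paper's argument for $B \Rightarrow C$ in a slightly sharper form; and your two one-state games correctly dispose of $A \not\Rightarrow B,C,D$ and $D \not\Rightarrow B,C$ --- the second game, with loops $(1,-2)$ and $(-2,1)$, is literally the paper's counterexample of Fig.~\ref{fig:linkMPTPex1}, and the first replaces what the paper obtains by transitivity.

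The genuine gap is exactly where you place ``the main obstacle'': $B \not\Rightarrow D$ (from which $B \not\Rightarrow C$ follows, since $B \Rightarrow C$ combined with $C \Rightarrow D$ would yield $B \Rightarrow D$) is left as a plan, not a proof, and the plan has two flaws. First, a logical one: you propose to show that no finite threshold works for $\tpayInf$ and to conclude ``via $C \Rightarrow D$'' that none works for $\tpaySup$; that runs the implication backwards --- failure of $C$ implies nothing about $D$. You must refute $D$ directly, and failure of $C$ then follows by contraposition. Second, your structural guess is off: the counterexample need not be a two-player game with $\playerTwo$ attacking the worst dimension. Your infinite-memory observation is sound (a finite-memory strategy winning $B$ yields a product arena whose reachable cycles are componentwise non-negative, hence secures $C$), but the paper's witness (Fig.~\ref{fig:linkMPTPex2}) is a \emph{one-player} game in \emph{three} dimensions: two states with self-loops $(-1,1,0)$ and $(1,-1,0)$, joined in both directions by edges of weight $(-1,-1,-1)$. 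The missing idea is the third dimension, which charges every switch. Alternating with suitably growing block lengths makes the switching cost sub-linear and lets the first two means balance out, so $\mpayInf \geq \zeroVector$ and $B$ holds; but every play either switches finitely often (so dimension $1$'s or $2$'s total decreases to $-\infty$) or switches infinitely often (so dimension $3$'s total, which never increases, tends to $-\infty$); in each case the relevant running total is eventually non-increasing and unbounded below, so $\tpaySup$ is $-\infty$ in that dimension and $D$ fails for every finite threshold. Without this (or an equivalent) explicit construction and verification, the claim that all remaining implications are false is not established.
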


The statement of Lemma~\ref{lem:linkMPTP} is depicted in Fig.~\ref{fig:linkMPTP}: the only implications that extend to the multi-dimension case are depicted by solid arrows.

\begin{figure}[htb]
\centering
\scalebox{0.9}{\begin{tikzpicture}[dash pattern=on 10pt off 5,->,>=stealth',double,double distance=2pt,shorten >=1pt,auto,node
    distance=2.5cm,bend angle=45,scale=0.6,font=\normalsize]
    \tikzstyle{p1}=[]
    \tikzstyle{p2}=[draw,rectangle,text centered,minimum size=7mm]
    \node[p1]  (A)  at (-0.5, 0) {$A\colon\:\exists\,\strat^{A}_{1} \vDash \objMPSupT{\zeroVector}$};
    \node[p1]  (D) at (12.5, 0) {$D\colon\:\exists\, v \in \rat^{\dimension},\, \exists\,\strat^{D}_{1} \vDash \objTPSupT{v}$};
    \node[p1]  (B) at (-0.5, -4) {$B\colon\:\exists\,\strat^{B}_{1} \vDash \objMPInfT{\zeroVector}$};
    \node[p1]  (C) at (12.5, -4) {$C\colon\:\exists\, v' \in \rat^{\dimension},\, \exists\,\strat^{C}_{1} \vDash \objTPInfT{v'}$};
    \path
    ;
	\draw[dashed,dash phase =4pt,->,>=stealth,thin,double,double distance=1.5pt] (5.5,0) to (7,0);
	\draw[<-,>=stealth,thin,double,double distance=1.5pt,solid] (4,0) to (5.5,0);
	\draw[dashed,dash phase =4pt,->,>=stealth,thin,double,double distance=1.5pt] (5.5,-4) to (7,-4);
	\draw[<-,>=stealth,thin,double,double distance=1.5pt,solid] (4,-4) to (5.5,-4);
	\draw[<-,>=stealth,thin,double,double distance=1.5pt,solid] (0,-1) to (0,-2);
	\draw[dashed,dash phase =4pt,->,>=stealth,thin,double,double distance=1.5pt] (0,-2) to (0,-3);
	\draw[<-,>=stealth,thin,double,double distance=1.5pt,solid] (12,-1) to (12,-2);
	\draw[dashed,dash phase =4pt,->,>=stealth,thin,double,double distance=1.5pt] (12,-2) to (12,-3);
	\draw[<-,>=stealth,thin,double,double distance=1.5pt,solid] (3,-1) to (5.5,-2);
	\draw[dashed,dash phase =4pt,->,>=stealth,thin,double,double distance=1.5pt] (5.5,-2) to (8,-3);
	\draw[dashed,dash phase =4pt,<-,>=stealth,thin,double,double distance=1.5pt] (3,-3) to (5.5,-2);
	\draw[dashed,dash phase =4pt,->,>=stealth,thin,double,double distance=1.5pt] (5.5,-2) to (8,-1);
\end{tikzpicture}}
\caption{Equivalence between threshold problems for mean-payoff and total-payoff objectives. Dashed implications are only valid for one-dimension games.}
\label{fig:linkMPTP}
\vspace{-3mm}
\end{figure}

\begin{proof}
Specifically, the implications that remain true in multi-weighted games are the trivial ones: satifaction of the infimum version of a given objective trivially implies satisfaction of its supremum version, and satisfaction of infimum (resp. supremum) total-payoff for some finite threshold $v \in \rat^{k}$ implies satisfaction of infimum (resp. supremum) mean-payoff for threshold $\zeroVector$ as from some point on, the corresponding sequence of mean-payoff infima (resp. suprema) in all dimensions $t$, $1 \leq t \leq \dimension$, can be lower-bounded by a sequence of elements of the form $\frac{v(t)}{n}$ with $n$ the length of the prefix, which tends to zero for an infinite play. That is thanks to the sequence of total-payoffs over prefixes being a sequence of integers: it always achieves the value of its limit $v(t)$ instead of only tending to it asymptotically as could a sequence of rationals such as the mean-payoffs. This sums up to $C \Rightarrow D \Rightarrow A$ and $C \Rightarrow B \Rightarrow A$ being true even in the multi-dimension setting.

In the one-dimension case, all assertions are equivalent. First, we have that infimum and supremum mean-payoff problems coincide as memoryless strategies suffice for both players. Thus, we add $A \Rightarrow B$ and $D \Rightarrow B$ by transitivity. Second, consider an optimal strategy for $\playerOne$ for the mean-payoff objective of threshold $0$. This strategy is such that all cycles formed in the outcome have non-negative effect, otherwise $\playerOne$ cannot ensure winning. Thus, the total-payoff over any outcome that is consistent with the same optimal strategy is at all times bounded from below by $-2\cdot(\statesSize-1)\cdot\largestW$ (once for the initial cycle-free prefix, and once for the current cycle being formed). Therefore, we have that $B \Rightarrow C$, and we obtain all other implications by transitive closure.

\begin{figure}[htb]
\begin{minipage}[b]{0.47\linewidth}
\centering
\scalebox{0.85}{\begin{tikzpicture}[->,>=stealth',shorten >=1pt,auto,node
    distance=2.5cm,bend angle=45,scale=0.6,font=\normalsize]
    \tikzstyle{p1}=[draw,circle,text centered,minimum size=8mm]
    \tikzstyle{p2}=[draw,rectangle,text centered,minimum size=7mm]
    \node[p1]  (1)  at (0, 0) {$s$};
    \coordinate[shift={(0mm,5mm)}] (init) at (1.north);
    \path
    (1) edge [loop left, out=140, in=220,looseness=3, distance=2cm] node [left] {$(1,-2)$} (1)
    (1) edge [loop right, out=40, in=320,looseness=3, distance=2cm] node [right] {$(-2,1)$} (1)
    (init) edge (1);
\end{tikzpicture}}
\vspace{2mm}
      \caption{Satisfaction of supremum TP does not imply satisfaction of infimum MP.}
      \label{fig:linkMPTPex1}
\end{minipage}
\hfill
\begin{minipage}[b]{0.47\linewidth}
\centering

\scalebox{0.85}{\begin{tikzpicture}[->,>=stealth',shorten >=1pt,auto,node
    distance=2.5cm,bend angle=45,scale=0.6,font=\normalsize]
    \tikzstyle{p1}=[draw,circle,text centered,minimum size=8mm]
    \tikzstyle{p2}=[draw,rectangle,text centered,minimum size=7mm]
    \node[p1]  (1)  at (0, 0) {$s_{1}$};
    \node[p1]  (2) at (4, 0) {$s_{2}$};
    \coordinate[shift={(0mm,5mm)}] (init) at (1.north);
    \path
    (1) edge [loop left, out=140, in=220,looseness=3, distance=2cm] node [left] {$(-1,1,0)$} (1)
    (2) edge [loop right, out=40, in=320,looseness=3, distance=2cm] node [right] {$(1,-1,0)$} (2)
    (init) edge (1);
	\draw[->,>=latex] (1) to[out=45,in=135] node [above, yshift=0mm] {$(-1, -1,-1)$} (2);
	\draw[->,>=latex] (2) to[out=235,in=315] node [below, yshift=0mm] {$(-1,-1,-1)$} (1);
\end{tikzpicture}}
      \caption{Satisfaction of infimum MP does not imply satisfaction of supremum TP.}
\label{fig:linkMPTPex2}
\end{minipage}
\end{figure}

For multi-weighted games, all dashed implications are false. We specifically consider two of them.
\begin{enumerate}
\item To show that implication $D \Rightarrow B$ does not hold, consider the one-player game depicted in Fig.~\ref{fig:linkMPTPex1}. Clearly, any finite vector $v \in \rat^{2}$ for the supremum total-payoff objective can be achieved by an infinite memory strategy consisting in playing both loops successively for longer and longer periods, each time switching after getting back above the threshold in the considered dimension. However, it is impossible to build any strategy, even with infinite memory, that provides an infimum mean-payoff of $(0,0)$ as the limit mean-payoff would be at best a linear combination of the two cycles values, i.e., strictly less than $0$ in at least one dimension in any case.
\item Lastly, implication $B \Rightarrow D$ failure in multi-weighted games can be witnessed in Fig.~\ref{fig:linkMPTPex2}. Clearly, the strategy that plays for $n$ steps in the left cycle, then goes for $n$ steps in the right one, then repeats for $n' > n$ and so on, is a winning strategy for the infimum mean-payoff objective of threshold $(0,0,0)$. Nevertheless, for any strategy of $\playerOne$, the outcome is such that either (i) it only switches between cycles a finite number of time, in which case the sum in dimension 1 or 2 will decrease to infinity from some point on, or (ii) it switches infinitely and the sum of weights in dimension 3 decreases to infinity. In both cases, the supremum total-payoff objective is not satisfied for any finite vector $v \in \rat^{3}$.
\end{enumerate}

All other implications are deduced false as they would otherwise contradict the last two cases by transitivity.
\end{proof}

In multi-dimension games, recent results have shown that the threshold problem for inf. mean-payoff is coNP-complete whereas it is in $\NPinter$ for sup. mean-payoff~\cite{VR11,velner_corr2012}. In both cases, $\playerOne$ needs infinite memory to win, and memoryless strategies suffice for $\playerTwo$~\cite{chatterjee_FSTTCS10,velner_corr2012}. When restricted to finite-memory strategies, the problem is coNP-complete~\cite{chatterjee_FSTTCS10,velner_corr2012} and requires memory at most exponential for $\playerOne$~\cite{DBLP:journals/acta/ChatterjeeRR14}.

The case of total-payoff objectives in multi-weighted game structures has never been considered before. Surprisingly, the relation established in Lemma~\ref{lem:linkMPTP} cannot be fully transposed in this context. We show that the threshold problem indeed becomes undecidable for multi-weighted game structures, even for a fixed number of dimensions.

\begin{theorem}
\label{thm:undecidableTP}
The threshold problem for infimum and supremum total-payoff objectives is undecidable in multi-dimen\-sion games, for five dimensions.
\end{theorem}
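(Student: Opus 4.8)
The plan is to reduce from the halting problem for deterministic two-counter (Minsky) machines, which is undecidable. Given such a machine $\twoCM$ with counters $c_1,c_2$, I will build a five-dimensional game $\game$ together with an initial state so that $\playerOne$ has a winning strategy for $\objTPInfT{\zeroVector}$ if and only if $\twoCM$ halts, and I will design the gadgets so that exactly the same equivalence holds for $\objTPSupT{\zeroVector}$; this yields undecidability of both variants simultaneously.

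\textbf{Encoding.} I would devote four of the five dimensions to the counters, keeping, for each $c_i$, one dimension whose running sum equals $+c_i$ and one whose running sum equals $-c_i$ (so an increment of $c_i$ contributes $(+1,-1)$ to this pair and a decrement contributes $(-1,+1)$). The fifth dimension is a progress dimension carrying weight $-1$ on every edge that simulates a machine instruction. $\playerOne$ plays the role of the simulator: from the initial state she produces, instruction by instruction, the run of $\twoCM$, each instruction being realised by a small gadget updating the weight vector as above. The halt instruction leads to a clean-up gadget in which $\playerOne$ may (i) pump the progress dimension back up with $+1$-edges, (ii) drive the two $-c_i$ dimensions up to $0$ by decrementing the counters down to $0$, and then (iii) enter a self-loop of weight $\zeroVector$. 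Thus a faithful simulation of a halting run yields a play whose tail is constant $\zeroVector$ with all coordinates non-negative, which is winning for both the infimum and the supremum objective.

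\textbf{Enforcing faithfulness.} The only nontrivial instructions are the zero-tests, and the difficulty is that the $\tpayInf$/$\tpaySup$ objective is too weak to notice a transient violation: a single illegal move can be ``undone'' later. To fix this I would let $\playerTwo$ act as a verifier who, at each zero-test, may either let the simulation proceed or challenge $\playerOne$'s declared branch, routing the play into a dedicated verification module followed by a weight-$\zeroVector$ freeze loop. For the branch ``$c_1 = 0$'', the module forbids any operation that could raise the $-c_1$ dimension (i.e.\ it forbids decrementing $c_1$) while still letting $\playerOne$ repair every other coordinate; hence after freezing the $-c_1$ coordinate stabilises at $-c_1$, which is negative exactly when $\playerOne$ lied. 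Symmetrically, for the branch that decrements $c_1$, a module that forbids incrementing $c_1$ exposes the $+c_1$ dimension, which becomes negative exactly when $\playerOne$ decremented a zero counter. Analogous modules are used for $c_2$, which is why four counter dimensions are needed; together with the progress dimension this accounts for all five. The essential feature is that the modules are \emph{local}: an honest $\playerOne$ survives every (even spurious) challenge, whereas a genuine cheat pins one coordinate below $0$ forever and can never be repaired.

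\textbf{Correctness and the main obstacle.} The forward direction is the faithful simulation described above. For the converse, if $\twoCM$ does not halt then along any play either $\playerOne$ cheats at some zero-test, in which case $\playerTwo$ challenges and drives one coordinate to a fixed negative value, or $\playerOne$ simulates faithfully forever, in which case the halt gadget is never reached, the progress dimension carries $-1$ infinitely often, and its running sum tends to $-\infty$; in both cases the relevant coordinate has negative $\tpayInf$ and negative $\tpaySup$, so $\playerOne$ loses under either objective. I expect the main obstacle to be precisely the design and verification of the challenge modules: one must make them simultaneously sound (every real cheat becomes an irreparable negative drift in the intended coordinate) and complete (an honest player, facing an arbitrary challenge, can still bring all the \emph{other} coordinates back to non-negative before the freeze), all under the weak liminf/limsup semantics in which transient dips are invisible. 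This robustness requirement is exactly what forces the counter-local gadget design and the use of both the $+c_i$ and the $-c_i$ dimensions.
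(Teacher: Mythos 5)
Your proposal is correct and takes essentially the same approach as the paper: a reduction from the two-counter machine halting problem using the five-dimensional running-sum encoding $(v_1,-v_1,v_2,-v_2,-v_3)$ (two dimensions per counter plus a step-counting dimension), with $\playerOne$ simulating the machine, $\playerTwo$ acting as a verifier whose challenge at an unfaithful zero-test or illegal decrement pins the corresponding coordinate strictly below zero forever while every other coordinate can still become non-negative, and the step dimension (recharged only upon halting) ruling out infinite faithful simulations. The only differences are cosmetic: the paper punishes via absorbing states whose self-loops have weight $1$ in all dimensions except the challenged one (and assumes w.l.o.g.\ that the machine halts with both counters at zero), whereas you have $\playerOne$ actively repair the untainted coordinates before entering a zero-weight freeze loop and clean up the counters herself after the halt -- the same mechanism in a slightly different implementation.
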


\begin{proof}
We reduce the halting problem for two-counter machines (2CMs) to the threshold problem for two-player total-payoff games with five dimensions. From a two-counter machine $\twoCM$, we construct a two-player game $\game$ with five dimensions and an infimum (equivalently supremum) total-payoff objective such that $\playerOne$ wins for threshold $(0,0,0,0,0)$ if and only if the 2CM halts. Counters take values $(v_{1}, v_{2}) \in \nat^{2}$ along an execution, and can be incremented or decremented (if positive). A counter can be tested for equality to zero, and the machine can branch accordingly.
The halting problem for 2CMs is undecidable~\cite{minsky1961}. Assume w.l.o.g. that we have a 2CM $\twoCM$ such that if it halts, it halts with the two counters equal to zero. This is w.l.o.g.~as it suffices to plug a machine that decreases both counters to zero at the end of the execution of the considered machine. In the game we construct, $\playerOne$ has to faithfully simulate the 2CM~$\twoCM$. The role of $\playerTwo$ is to ensure that he does so by retaliating if it is not the case, hence making the outcome losing for the total-payoff objective.

The game is built as follows. The states of $\game$ are copies of the control states of $\twoCM$ (plus some special states discussed in the following). Edges represent transitions between these states. The payoff function maps edges to $5$-dimensional vectors of the form $(c_{1}, -c_{1}, c_{2}, -c_{2}, d)$, that is, two dimensions for the first counter $C_{1}$, two for the second counter $C_{2}$, and one additional dimension. Each increment of counter $C_{1}$ (resp. $C_{2}$) in $\twoCM$ is implemented in $\game$ as a transition of weight $(1, -1, 0, 0, -1)$ (resp. $(0, 0, 1, -1, -1)$. For decrements, we have weights respectively $(-1, 1, 0, 0, -1)$ and $(0, 0, -1, 1, -1)$ for $C_{1}$ and $C_{2}$. Therefore, the current value of counters $(v_{1}, v_{2})$ along an execution of the 2CM $\twoCM$ is represented in the game as the current sum of weights, $(v_{1}, -v_{1}, v_{2}, -v_{2}, -v_{3})$, with $v_{3}$ the number of steps of the computation. Hence, along a faithful execution, the 1st and 3rd dimensions are always non-negative, while the 2nd, 4th and 5th are always non-positive. The two dimensions per counter are used to enforce faithful simulation of non-negativeness of counters and zero test. The last dimension is decreased by one for every transition, except when the machine halts, from when it is incremented forever (i.e., the play in $\game$ goes to an absorbing state with self-loop $(0,0,0,0,1)$). This is used to ensure that a play in $\game$ is winning iff $\twoCM$ halts.

We now discuss how this game $\game$ ensures faithful simulation of the 2CM $\twoCM$ by $\playerOne$.
\begin{itemize}
\item \textit{Increment and decrement} of counter values are easily simulated using the first four dimensions.
\item \textit{Values of counters may never go below zero}. To ensure this, we allow $\playerTwo$ to branch after every step of the 2CM simulation to two special states, $s_{stop\_neg}^{1}$ and $s_{stop\_neg}^{2}$, which are absorbing and with self-loops of respective weights $(0, 1, 1, 1, 1)$ and $(1, 1, 0, 1, 1)$. If a negative value is reached on counter $C_{1}$ (resp. $C_{2}$), $\playerTwo$ can clearly win the game by branching to state $s_{stop\_neg}^{1}$ (resp. $s_{stop\_neg}^{2}$), as the total-payoff in the dimension corresponding to the negative counter will always stay strictly negative. On the contrary, if $\playerTwo$ decides to go to $s_{stop\_neg}^{1}$ (resp. $s_{stop\_neg}^{2}$) when the value of $C_{1}$ (resp. $C_{2}$) is positive, then $\playerOne$ wins the game as this dimension will be positive and the other four will grow boundlessly. So these transitions are only used if $\playerOne$ cheats.
\item \textit{Zero tests are correctly executed}. In the same spirit, we allow $\playerTwo$ to branch to two absorbing special states after a zero test, $s_{pos\_zero}^{1}$ and $s_{pos\_zero}^{2}$ with self-loops of weights $(1, 0, 1, 1, 1)$ and $(1, 1, 1, 0, 1)$. Such states are used by $\playerTwo$ if $\playerOne$ cheats on a zero test (i.e., pass the test with a strictly positive counter value). Indeed, if a zero test was passed with the value of counter $C_{1}$ (resp. $C_{2}$) strictly greater than zero, then the current sum $(v_{1}, -v_{1}, v_{2}, -v_{2}, v_{3})$ is such that $-v_{1}$ (resp. $-v_{2}$) is strictly negative. By going to $s_{pos\_zero}^{1}$ (resp. $s_{pos\_zero}^{2}$), $\playerTwo$ ensures that this sum will remain strictly negative in the considered dimension forever and the play is lost for~$\playerOne$.
\end{itemize}

Therefore, if $\playerOne$ does not faithfully simulate $\twoCM$, he is guaranteed to lose in $\game$. On the other hand, if $\playerTwo$ stops a faithful simulation, $\playerOne$ is guaranteed to win. It remains to argue that he wins iff the machine halts. Indeed, if the machine $\twoCM$ halts, then $\playerOne$ simulates its execution faithfully and either he is interrupted and wins, or the simulation ends in an absorbing state with a self-loop of weight $(0, 0, 0, 0, 1)$ and he also wins. Indeed, given that this state can only be reached with values of counters equal to zero (by hypothesis on the machine $\twoCM$, without loss of generality), the running sum of weights will reach values $(0, 0, 0, 0, n)$ where $n$ grows to infinity, which ensures satisfaction of the infimum (and thus supremum) total-payoff objective for threshold $(0,0,0,0,0)$. On the opposite, if the 2CM $\twoCM$ does not halt, $\playerOne$ has no way to reach the halting state by means of a faithful simulation and the running sum in the fifth dimension always stays negative, thus inducing a losing play for $\playerOne$, for both variants of the objective.

Consequently, we have that solving multi-weighted games for either the supremum or the infimum total-payoff objective is undecidable.
\end{proof}

We end this section by noting that in multi-weighted total-payoff games, $\playerOne$ may need infinite memory to win, even when all states belong to him ($\states_{2} = \emptyset$). Consider the game depicted in Fig.~\ref{fig:linkMPTPex1}. As discussed in the proof of Lemma~\ref{lem:linkMPTP}, given any threshold vector $v \in \rat^{2}$, $\playerOne$ has a strategy to win the supremum total-payoff objective: it suffices to alternate between the two loops for longer and longer periods, each time waiting to get back above the threshold in the considered dimension before switching. This strategy needs infinite memory and actually, there exists no finite-memory strategy that can achieve a finite threshold vector: the negative amount to compensate grows boundlessly with each alternation, and thus no amount of finite memory can ensure to go above the threshold infinitely often.

\section{Window Mean-Payoff Objective}
\label{sec:WMP}
In one dim\-en\-sion, no polynomial algorithm is known for mean-payoff and total-payoff, and in multiple dimensions, total-payoff is undecidable.
In this section, we introduce the \textit{window mean-payoff objective}, a conservative approximation in which local deviations from the threshold must be compensated in a parametrized number of steps. We consider a \textit{window}, sliding along a play, within which the compensation must happen. Our approach can be applied both to mean-payoff and total-payoff objectives. Since we consider \textit{finite} windows, both versions coincide for threshold zero. Hence we present our results for mean-payoff.

In Sec.~\ref{subsec:wmp_def}, we define the objective and discuss its relation with mean-payoff and total-payoff objectives. We then divide our analysis into two subsections: Sec.~\ref{subsec:wmp_oneDim} for one-dimension games and Sec.~\ref{subsec:wmp_multiDim} for multi-dimension games. Both provide thorough analysis of the \textit{fixed window problem} (the bound on the window size is a parameter) and the \textit{$\bounded$ window problem} (existence of a bound is the question). We establish solving algorithms, prove complexity lower bounds, and study the memory requirements of these objectives. In Sec.~\ref{sec:directObj}, we briefly discuss the extension of our results to a variant of our objective modeling stronger requirements.

\subsection{\textbf{Definition and comparison}}
\label{subsec:wmp_def}

\smallskip\noindent\textbf{Objectives and decision problems.} Given a multi-weighted two-player game $\gameFull$ and a rational threshold $v \in \rat^{\dimension}$, we define the following objectives.
\begin{itemize}
\item Given $\sizeMax \in \natStrict$, the \textit{good window} objective
\begin{align}
\goodWindowMPObj =  \Big\lbrace\play \in \plays \;\vert\; \forall\, t,\, 1 \leq t \leq \dimension,\, \exists\, \size \leq \sizeMax, \dfrac{1}{\size}\sum_{p = 0}^{\size - 1} \weight\Big( e_{\play}(p,p+1)\Big)(t) \geq v(t)\Big\rbrace\label{eq:goodWindowObj},
\end{align}
where $e_{\play}(p,p+1)$ is the edge $(\last(\play(p)), \last(\play(p+1)))$, requires that for all dimensions, there exists a window starting in the first position and bounded by $\sizeMax$ over which the mean-payoff is at least equal to the threshold.

\item Given $\sizeMax \in \natStrict$, the \textit{direct fixed window mean-payoff} objective
\begin{align}
\directFixedWindowMPObj =  \Big\lbrace\play \in \plays \;\vert\; \forall\, j \geq 0,\; \playSuffix{j} \in \goodWindowMPObj \Big\rbrace\label{eq:directFixedWindowObj}
\end{align}
requires that good windows bounded by $\sizeMax$ exist in all positions along the play.

\item The \textit{direct $\bounded$ window mean-payoff} objective
\begin{align}
\directFiniteWindowMPObj =  \Big\lbrace\play \in \plays \;\vert\; &\exists\, \sizeMax > 0,\; \play \in \directFixedWindowMPObj\Big\rbrace\label{eq:directFiniteWindowObj}
\end{align}
asks that there exists a bound $\sizeMax$ such that the play satisfies the direct fixed objective.

\item Given $\sizeMax \in \natStrict$, the \textit{fixed window mean-payoff}  objective
\begin{align}
\fixedWindowMPObj =  \Big\lbrace\play \in \plays \;\vert\; \exists\, i \geq 0,\; \playSuffix{i} \in \directFixedWindowMPObj \Big\rbrace\label{eq:fixedWindowObj}
\end{align}
is the \textit{prefix-independent} version of the direct fixed window objective: it asks for the existence of a suffix of the play satisfying it.

\item The \textit{$\bounded$ window mean-payoff} objective
\begin{align}
\finiteWindowMPObj =  \Big\lbrace\play \in \plays \;\vert\; &\exists\, \sizeMax > 0,\; \play \in \fixedWindowMPObj\Big\rbrace\label{eq:finiteWindowObj}
\end{align}
is the \textit{prefix-independent} version of the direct $\bounded$ window objective.
\end{itemize}
For any $v \in \rat^{\dimension}$ and $\sizeMax \in \natStrict$, the following inclusions are true: 
\begin{gather}
\directFixedWindowMPObj \subseteq \fixedWindowMPObj \subseteq \finiteWindowMPObj,\\
\directFixedWindowMPObj \subseteq \directFiniteWindowMPObj \subseteq \finiteWindowMPObj.
\end{gather}
Similarly to classical objectives, all objectives can be equivalently expressed for threshold $v = \zeroVector$ by modifying the weight function. Hence, given any variant of the objective, the associated \textit{decision problem} is to decide the existence of a winning strategy for $\playerOne$ for threshold $\zeroVector$.
Lastly, for complexity purposes, we make a difference between \textit{polynomial} (in the size of the game) and \textit{arbitrary} (i.e., non-polynomial) window sizes.

Notice that all those objectives define Borel sets. Hence they are determined by Martin's theorem~\cite{martin_AM75}.

Let $\play = s_{0}s_{1}s_{2}\ldots{}$ be a play. Fix any dimension $t, 1 \leq t \leq \dimension$. The window from position $j$ to $j'$, $0 \leq j < j'$, is \textit{closed} iff there exists $j''$, $j < j'' \leq j'$ such that the sum of weights in dimension $t$ over the sequence $s_{j}\ldots{}s_{j''}$ is non-negative. Otherwise the window is \textit{open}. Given a position $j'$ in $\play$, a window is still open in $j'$ iff there exists a position $0 \leq j < j'$ such that the window from $j$ to $j'$ is open. Consider any edge $(s_{i}, s_{i+1})$ appearing along $\play$. If the edge is non-negative in dimension $t$, the window starting in $i$ immediately closes. If not, a window opens that must be closed within $\sizeMax$ steps. Consider the \textit{first} position $i'$ such that this window closes, then we have that all intermediary opened windows also get closed by $i'$, that is, for any $i''$, $i < i'' \leq i'$, the window starting in $i''$ is closed before or when reaching position~$i'$. Indeed, the sum of weights over the window from $i''$ to $i'$ is strictly greater than the sum over the window from $i$ to $i'$, which is non-negative. We call this fact the \textit{inductive property of windows}.

\begin{figure}[htb]
\begin{minipage}[t]{0.52\linewidth}
\centering
  \scalebox{0.85}{\begin{tikzpicture}[->,>=stealth',shorten >=1pt,auto,node
    distance=2.5cm,bend angle=45,scale=0.55,font=\normalsize]
   \tikzstyle{p1}=[draw,circle,text centered,minimum size=8mm]
    \tikzstyle{p2}=[draw,rectangle,text centered,minimum size=7mm]
    \node[p1]  (0)  at (-4, 0) {$s_{1}$};
    \node[p1]  (1)  at (0, 0) {$s_{2}$};
    \node[p1]  (2)  at (4, 0) {$s_{3}$};
    \node[p1]  (3)  at (8, 0) {$s_{4}$};
    \coordinate[shift={(0mm,5mm)}] (init) at (0.north);
    \path
    (0) edge node[above] {$1$} (1)
    (1) edge node[above] {$-1$} (2)
    (init) edge (0);
	\draw[->,>=latex] (2) to[out=45,in=135] node [above, yshift=0mm] {$-1$} (3);
	\draw[->,>=latex] (3) to[out=235,in=315] node [below, yshift=0mm] {$1$} (2);
\end{tikzpicture}}
\caption{Fixed window is satisfied for $\sizeMax \geq 2$, whereas even direct bounded window is not.}
      \label{fig:wmpEx1}
\end{minipage}
\hfill
\begin{minipage}[t]{0.42\linewidth}
\centering
\scalebox{0.85}{\begin{tikzpicture}[->,>=stealth',shorten >=1pt,auto,node
    distance=2.5cm,bend angle=45,scale=0.55,font=\normalsize]
    \tikzstyle{p1}=[draw,circle,text centered,minimum size=8mm]
    \tikzstyle{p2}=[draw,rectangle,text centered,minimum size=7mm]
    \node[p1]  (1)  at (0, 0) {$s_{1}$};
    \node[p2]  (2) at (4, 0) {$s_{2}$};
    \coordinate[shift={(0mm,5mm)}] (init) at (1.north);
    \path
    (2) edge [loop right, out=40, in=320,looseness=3, distance=2cm] node [right] {$0$} (2)
    (init) edge (1);
	\draw[->,>=latex] (1) to[out=45,in=135] node [above, yshift=0mm] {$-1$} (2);
	\draw[->,>=latex] (2) to[out=235,in=315] node [below, yshift=0mm] {$1$} (1);
\end{tikzpicture}}
      \caption{Mean-payoff is satisfied but none of the window objectives is.}
\label{fig:wmpEx2}
\end{minipage}
\end{figure} 

\smallskip\noindent\textbf{Illustration.}
Consider the game depicted in Fig.~\ref{fig:wmpEx1}. It has a unique outcome, and it is winning for the classical mean-payoff objective of threshold $0$, as well as for the infimum (resp. supremum) total-payoff objective of threshold $-1$ (resp. $0$). Consider the fixed {\windowMP} objective for threshold $0$.
If the size of the window is bounded by~$1$, the play is losing.\footnote{A window size of one actually requires that all infinitely often visited edges are of non-negative weights.} However, if the window size is at least $2$, the play is winning, as in $s_{3}$ we close the window in two steps and in $s_{4}$ in one step. Notice that by definition of the objective, it is clear that it is also satisfied for all larger sizes.\footnote{The existential quantification on the window size $\size$, bounded by $\sizeMax$, is indeed crucial in Eq.~\eqref{eq:goodWindowObj} to ensure monotonicity with increasing maximal window sizes, a desired behavior of the definition for theoretical properties and intuitive use in specifications.} As the fixed window objective is satisfied for size $2$, the $\bounded$ window objective is also satisfied. On the other hand, if we restrict the objectives to their direct variants, then none is satisfied, as from $\state_{2}$, no window, no matter how large it is, gets closed.

Consider the game of Fig.~\ref{fig:wmpEx2}. Again, the unique strategy of $\playerOne$ satisfies the mean-payoff objective for threshold $0$. It also ensures value $-1$ for the infimum and supremum total-payoffs. Consider the strategy of $\playerTwo$ that takes the self-loop once on the first visit of $s_{2}$, twice on the second, and so on. Clearly, it ensures that windows starting in $s_{1}$ stay open for longer and longer numbers of steps (we say that $\playerTwo$ \textit{delays} the closing of the window), hence making the outcome losing for the $\bounded$ window objective (and thus the fixed window objective for any $\sizeMax \in \natStrict$). This illustrates the added guarantee (compared to mean-payoff) asked by the window objective: in this case, no upper bound can be given on the time needed for a window to close, i.e., on the time needed to get the local sum back to non-negative. Note that $\playerTwo$ has to go back to $s_{1}$ at some point: otherwise, the prefix-independence of the objectives\footnote{Fixed and bounded window mean-payoff objectives are prefix-independent: for all $\prefix \in \prefixes$, $\play \in \plays$, we have that $\prefix \cdot \play$ is winning if and only if $\play$ is winning.} allows $\playerOne$ to wait for $\playerTwo$ to settle on cycling and win. For the direct variants, $\playerTwo$ has a simpler winning strategy consisting in looping forever, as enforcing one permanently open window is sufficient.

\smallskip\noindent\textbf{Relation with classical objectives.}
We introduce the $\bounded$ window objectives as conservative approximations of mean-payoff and total-payoff in one-dimension games. Indeed, in Lemma~\ref{lem:relationBoundedClassical}, we show that winning the $\bounded$ window (resp. direct $\bounded$ window) objective implies winning the mean-payoff (resp. total-payoff) objective while the converse implication is only true if a strictly positive mean-payoff (resp. arbitrary high total-payoff) can be ensured. 

\begin{lemma}
\label{lem:relationBoundedClassical}
Given a one-dimension game $\gameFullOneDim$, the following assertions hold.
\begin{enumerate}[(a)] 
\item If the answer to the $\bounded$ window mean-payoff problem is $\yes$, then the answer to the mean-payoff threshold problem for threshold zero is also $\yes$.
\item If there exists $\varepsilon > 0$ such that the answer to the mean-payoff threshold problem for threshold $\varepsilon$ is $\yes$, then the answer to the $\bounded$ window mean-payoff problem is also $\yes$.
\item If the answer to the direct $\bounded$ window mean-payoff problem is $\yes$, then the answer to the supremum total-payoff threshold problem for threshold zero is also $\yes$.
\item If the answer to the supremum total-payoff threshold problem is $\yes$ for all integer thresholds (i.e., the total-payoff value is $\infty$), then the answer to the direct $\bounded$ window mean-payoff problem is also $\yes$.
\end{enumerate}
\end{lemma}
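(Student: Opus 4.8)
The plan is to treat the two ``window $\Rightarrow$ classical'' directions (a) and (c) by exploiting the decomposition of a play into consecutive \emph{closed} windows (using the inductive property of windows established above), and the two ``classical $\Rightarrow$ window'' directions (b) and (d) by invoking memoryless optimal strategies for mean-payoff together with positivity of cycles in the resulting one-player graph.

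For (a), I would take a strategy $\strat_{1}$ of $\playerOne$ winning $\finiteWindowMPObj$ with witnessing bound $\sizeMax$, and fix any consistent outcome $\play$. By prefix-independence there is a position $i$ from which every window closes within $\sizeMax$ steps, so starting at $i$ I greedily carve the suffix into consecutive windows $[i=p_{0},p_{1}],[p_{1},p_{2}],\dots$, each of length at most $\sizeMax$ and with non-negative sum at its closing point. Then for every $n\geq i$ the sum from $i$ to $n$ is a block of completed (non-negative) windows plus one incomplete window, hence at least $-\sizeMax\cdot\largestW$; thus $\tpay(\play(n))$ is bounded below by a constant and $\frac{1}{n}\tpay(\play(n))\to\geq 0$, giving $\mpayInf(\play)\geq 0$. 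Part (c) is the same bookkeeping applied from position $0$ to a strategy winning the direct variant: at each closing point $p_{k}$ the running total $\tpay(\play(p_{k}))$ is a sum of non-negative blocks, hence $\geq 0$, and since $p_{k}\to\infty$ we obtain $\tpaySup(\play)=\limsup_{n}\tpay(\play(n))\geq 0$.

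For (b), I would use memoryless determinacy of one-dimension mean-payoff games: the hypothesis gives a memoryless $\strat_{1}$ of $\playerOne$ ensuring $\mpayInf\geq\varepsilon$. In the one-player graph obtained by fixing $\strat_{1}$, every reachable cycle has mean weight $\geq\varepsilon$ (otherwise $\playerTwo$ could pump a cheaper cycle and force the mean-payoff below $\varepsilon$), hence sum at least $\varepsilon$ times its length. Decomposing any finite walk into a simple path (length $<\statesSize$, contributing at least $-(\statesSize-1)\cdot\largestW$) plus cycles (each contributing at least $\varepsilon$ times its length), the sum over a walk of length $m$ is at least $-(\statesSize-1)\largestW+\varepsilon\,(m-(\statesSize-1))$. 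Choosing $\sizeMax$ so that this is non-negative, e.g.\ $\sizeMax=\lceil(\statesSize-1)(1+\largestW/\varepsilon)\rceil$, shows every window closes within $\sizeMax$ steps, so $\strat_{1}$ wins the direct fixed window objective for that $\sizeMax$ and a fortiori $\finiteWindowMPObj$.

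Finally, (d) is where the work concentrates, and I would route it through the equivalence ``supremum total-payoff value is $+\infty$'' $\iff$ ``the mean-payoff value is strictly positive.'' The easy direction is immediate: if $\playerOne$ ensures $\mpayInf\geq\varepsilon>0$ then the running sum grows linearly and $\tpaySup=+\infty$. For the converse I would argue by contraposition: if the mean-payoff value is at most $0$, then by memoryless determinacy $\playerTwo$ has a memoryless strategy $\strat_{2}$ forcing $\mpaySup\leq 0$; fixing $\strat_{2}$ yields a one-player graph in which every cycle has non-positive sum, so the simple-path decomposition bounds every running sum from above by $(\statesSize-1)\cdot\largestW$, whence no strategy of $\playerOne$ can exceed threshold $(\statesSize-1)\largestW+1$ and the total-payoff value is not $+\infty$. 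Thus the hypothesis of (d) forces a positive mean-payoff value, and I reuse the cycle-positivity bound of (b) --- which applies already from position $0$ --- to conclude that the memoryless strategy wins the direct fixed window objective for the same $\sizeMax$, hence $\directFiniteWindowMPObj$. The main obstacle is precisely this converse step: one must check that achieving an arbitrarily high supremum total-payoff for every integer threshold (possibly with different strategies per threshold) still forces the mean-payoff value to be bounded away from $0$, which is exactly what the non-positive-cycle bound under a fixed optimal $\playerTwo$ strategy delivers.
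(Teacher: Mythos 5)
Your proposal is correct, and for assertions (a)--(c) it follows essentially the paper's own route: decomposition of the play into consecutive closed windows for (a) and (c), and a memoryless mean-payoff strategy plus cycle decomposition of prefixes for (b). Two differences are worth noting. First, in (a) the paper only argues that the running sum from position $i$ is non-negative infinitely often, which gives $\mpaySup(\play) \geq 0$, and then invokes the coincidence of infimum and supremum mean-payoff in one dimension; you instead bound \emph{every} partial sum from below by the constant $-\sizeMax\cdot\largestW$ (the worst deficit of the single incomplete window), which yields $\mpayInf(\play)\geq 0$ directly and avoids appealing to that coincidence. Second, and more substantially, in (d) the paper simply cites the known result of Gawlitza and Seidl that in one-dimension games the total-payoff value is $\infty$ if and only if the mean-payoff value is strictly positive, and then reuses the argument of (b); you prove that equivalence from scratch, by fixing a memoryless optimal strategy of $\playerTwo$ when the mean-payoff value is at most $0$, observing that all reachable cycles in the resulting one-player graph have non-positive sum, and concluding that every running sum is bounded above by $(\statesSize-1)\cdot\largestW$, so the total-payoff value cannot be $\infty$. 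Your contraposition correctly handles the fact that the hypothesis of (d) allows a different $\playerOne$ strategy per threshold, since the bound comes from a single fixed strategy of $\playerTwo$. What your route buys is a self-contained proof with explicit quantitative bounds (e.g., the window size $\lceil(\statesSize-1)(1+\largestW/\varepsilon)\rceil$ in terms of the threshold $\varepsilon$, rather than the paper's integrality-based bound $(\statesSize-1)\cdot(1+\statesSize\cdot\largestW)$); what the paper's citation buys is brevity.
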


Assertions \textit{(a)} and \textit{(c)} follow from the decomposition of winning plays into bounded windows of non-negative weights. The key idea for assertions \textit{(b)} and \textit{(d)} is that mean-payoff and total-payoff objectives always admit \textit{memoryless} winning strategies, for which the consistent outcomes can be decomposed into \textit{simple cycles} (i.e., with no repeated edge) over which the mean-payoff is at least equal to the threshold and which length is bounded. Hence they correspond to closing windows. Note that strict equivalence with the classical objectives is not verified, as witnessed before (Fig.~\ref{fig:wmpEx2}). 

\begin{proof}
\textit{Assertion (a)}. In the one-dimension case, sup. and inf. mean-payoff problems coincide. Let $\play \in \plays$ be  such that $\play \in \finiteWindowMPObjT{0}$. There exists $i \geq 0$ such that the suffix of $\play$ starting in $i$ can be decomposed into an infinite sequence of bounded segments (i.e., windows) of non-negative weight. Thus, this suffix satisfies the sup. mean-payoff objective as there are infinitely many positions where the total sum from $i$ is non-negative. Since the mean-payoff objective is prefix-independent, the play $\play$ is itself winning.

\textit{Assertion (b)}. Consider a memoryless winning strategy of $\playerOne$ for the mean-payoff of threshold $\varepsilon > 0$. Only strictly positive simple cycles can be induced by such a strategy. Consider any outcome $\play = \state_{0}\state_{1}\state_{2}\ldots{}$ consistent with it. We claim that for any position $j$ along this play, there exists a position $j+\size$, with $\size \leq \sizeMax = (\statesSize - 1)\cdot (1 + \statesSize \cdot \largestW)$, such that the sum of weights over the sequence $\prefix = \state_{j} \ldots \state_{j+l}$ is non-negative. Clearly, if it is the case, then objective $\fixedWindowMPObj$ is satisfied and so is objective $\finiteWindowMPObj$. Consider the cycle decomposition $\mathcal{A}\cycle_{1}\cycle_{2}\ldots{}\cycle_{n}\mathcal{B}$ of this sequence obtained as follows. We push successively $\state_{0},\state_{1},\ldots{}$ onto a stack, and whenever we push a state that is already in the stack, a simple cycle is formed that we remove from the stack and append to the cycle decomposition. The sequence $\prefix$ is decomposed into an acyclic part ($\mathcal{A} \cup \mathcal{B}$), whose length\footnote{The length of a sequence is the number of \textit{edges} it involves.}
is at most $(\statesSize - 1)$ and whose total sum is at least $-(\statesSize - 1)\cdot \largestW$, and simple cycles of total sum at least $1$ and length at most $\statesSize$. Given the window size $\sizeMax$, we have at least $(\statesSize - 1)\cdot \largestW$ simple cycles in the cycle decomposition. Hence, the total sum over $\prefix$ is at least zero, which proves our point.

\textit{Assertion (c)}. Consider a play $\play \in \directFiniteWindowTPObjT{0}$. Using the same decomposition argument as for assertion \textit{(a)}, we have that the sequence of total sums takes infinitely often values at least equal to zero. Thus the limit of this sequence of moments bounds from below the limit of the sequence of suprema and is at least equal to zero, which shows that the supremum total-payoff objective is also satisfied by play $\play$.

\textit{Assertion (d)}. In one-dimension games, the value of the total-payoff (i.e., the largest threshold for which $\playerOne$ has a winning strategy) is $\infty$ if and only if the value of mean-payoff is strictly positive~\cite{gawlitza2009}. Hence, we apply the argument of assertion \textit{(b)}, further noticing that  the window open in position $j$ is closed in at most $\sizeMax$ steps for any $j \geq 0$, which is to say that the \textit{direct} objective is satisfied.
\end{proof}

\subsection{\textbf{Games with one dimension}}
\label{subsec:wmp_oneDim}

We now study the \textit{fixed window mean-payoff} and the \textit{$\bounded$ window mean-payoff} objectives in one-dimension games. For the fixed window problem, we establish an algorithm that runs in time polynomial in the size of the game and in the size of the window and we show that memory is needed for both players. Note that this is in contrast to the mean-payoff objective, where $\playerTwo$ is memoryless even in the multi-dimension case (cf. Table~\ref{table:complexityAndMemory}). Moreover, the problem is shown to be P-hard even for polynomial window sizes. For the $\bounded$ window problem, we show equivalence with the fixed window problem for size $(\statesSize - 1)\cdot(\statesSize\cdot\largestW + 1)$, i.e., this window size is sufficient to win if possible. The $\bounded$ window problem is then shown to be in $\NPinter$ and at least as hard as mean-payoff games.

\captionsetup[algorithm]{font=scriptsize}

\smallskip\noindent\textbf{Fixed window: algorithm.} Given a game $\gameFullOneDim$ and a window size $\sizeMax \in \natStrict$, we present an iterative algorithm $\windowMPAlg$ (Alg.~\ref{alg:windowMP}) to compute the winning states of $\playerOne$ for the objective $\fixedWindowMPObjT{0}$. 
Initially, all states are potentially losing for $\playerOne$. The algorithm iteratively declares states to be winning, removes them, and continues the computation on the remaining subgame as follows. 
In every iteration, \textit{i)} $\directWinAlg$ computes the set $\wDirect$ of states from which $\playerOne$ can win the direct fixed window objective; 
\textit{ii)} it computes the attractor to $\wDirect$; and  
then proceeds to the next iteration on the remaining subgame 
(the restriction of $\game$ to a subset of states $A \subseteq \states$ is denoted 
$\game \downharpoonright A$). 
In every iteration, the states of the computed set $\wDirect$ are 
obviously winning for the fixed window objective.
Thanks to the prefix-independence of the fixed window objective, 
the attractor to $\wDirect$ is also winning.
Since $\playerTwo$ must avoid entering this attractor, 
$\playerTwo$ must restrict his choices to stay in the subgame,
and hence we iterate on the remaining subgame.
Thus states removed over all iterations are winning for $\playerOne$. This sequence of steps is essentially the computation of a greatest fixed point.
The key argument to establish correctness is as follows: when the algorithm stops, the remaining set of states $\overline{W}$ is such that $\playerTwo$ can ensure to stay in $\overline{W}$ and falsify the direct fixed window objective by forcing the appearance of one open window larger than $\sizeMax$. Since he stays in $\overline{W}$, he can repeatedly use this strategy to falsify the fixed window objective. Thus the remaining set $\overline{W}$ is winning for $\playerTwo$, and the correctness of the algorithm follows.
\vspace{-8mm}

\begin{figure}[htb]
\begin{minipage}[c]{0.52\linewidth}
\begin{algorithm}[H]
\caption{$\windowMPAlg(\game, \sizeMax)$}
\label{alg:windowMP}
\begin{algorithmic}
\footnotesize
    \REQUIRE $\gameFullOneDim$ and $\sizeMax \in \natStrict$
    \ENSURE $\wWin$ is the set of winning states for $\playerOne$ for $\fixedWindowMPObjT{0}$
    \STATE $n := 0$ ; $W := \emptyset$
    \REPEAT 
    \STATE $\wDirect^{n} := \directWinAlg(\game, \sizeMax)$
    \STATE $\wAttr^{n} := \attr_{\game}^{\playerOne}(\wDirect^{n})$ \COMMENT{attractor for $\playerOne$}
    \STATE $\wWin := \wWin \cup \wAttr^{n}$ ; $\game := \game \downharpoonright (\states \setminus \wWin)$ ; $n := n+1$
    \UNTIL{$\wWin = \states$ or $\wAttr^{n-1} = \emptyset$}
    \RETURN $\wWin$
\end{algorithmic}
\end{algorithm}
\end{minipage}
\hfill
\begin{minipage}[c]{0.42\linewidth}
\begin{algorithm}[H]
\caption{$\directWinAlg(\game, \sizeMax)$}
\label{alg:directWin}
\begin{algorithmic}
\footnotesize
    \REQUIRE $\gameFullOneDim$ and $\sizeMax \in \natStrict$
    \ENSURE $\wDirect$ is the set of winning states for $\playerOne$ for the objective $\directFixedWindowMPObjT{0}$
    \STATE $\wPos := \posSumAlg(\game, \sizeMax)$
    \IF{$\wPos = \states$ or $\wPos = \emptyset$}
        \STATE $\wDirect := \wPos$
    \ELSE
        \STATE $\wDirect := \directWinAlg(\game \downharpoonright \wPos, \sizeMax)$
    \ENDIF
    \RETURN $\wDirect$
\end{algorithmic}
\end{algorithm}
\end{minipage}
\end{figure} 

\vspace{-8mm}

\begin{algorithm}
\caption{$\posSumAlg(\game, \sizeMax)$}
\label{alg:posSum}
\begin{algorithmic}
\footnotesize
    \REQUIRE $\gameFullOneDim$ and $\sizeMax \in \natStrict$
    \ENSURE $\wPos$ is the set of winning states for $\goodWindowMPObjT{0}$
	\FORALL{$\state \in \states$}   
	\STATE $\posCost{0}{\state} := 0$
	\ENDFOR
	\FORALL{$i \in \{1, \ldots{}, \sizeMax\}$}
	\FORALL{$\state \in \states_{1}$}   
	\STATE $\posCost{i}{\state} := \max_{(\state, \state') \in \edges} \{ \weight((\state, \state')) + \posCost{i-1}{\state'}\}$
	\ENDFOR
	\FORALL{$\state \in \states_{2}$}      
    \STATE $\posCost{i}{\state} := \min_{(\state, \state') \in \edges} \{ \weight((\state, \state')) + \posCost{i-1}{\state'}\}$
	\ENDFOR
	\ENDFOR
    \RETURN $\wPos := \{ \state \in \states \,\vert\, \exists\, i,\, 1 \leq i \leq \sizeMax,\, \posCost{i}{\state} \geq 0\}$
\end{algorithmic}
\end{algorithm}

The main idea of algorithm $\directWinAlg$ (Alg.~\ref{alg:directWin}) is that to win the direct fixed window objective, $\playerOne$ must be able to repeatedly win the good window objective, which consists in ensuring a non-negative sum in at most $\sizeMax$ steps. Thus the algorithm consists in computing a least fixed point. A winning strategy of $\playerOne$ in a state $s$ is a strategy that enforces a non-negative sum and, \textit{as soon as the sum turns non-negative} (in some state $s'$), starts doing the same from~$s'$. It is important to start again immediately as it ensures that all suffixes along the path from $s$ to $s'$ also have a non-negative sum thanks to the inductive property of windows. That is, for any state $s''$ in between, the window from~$s''$ to $s'$ is closed.
The set of states from which $\playerOne$ can ensure winning for the good window objective is computed by subroutine $\posSumAlg$ (Alg.~\ref{alg:posSum}).
Intuitively, given a state $s \in \states$ and a number of steps $i \geq 1$, the value $\posCost{i}{s}$ is computed iteratively (from $\posCost{i-1}{s}$) and represents the best sum that $\playerOne$ can ensure from $s$ in exactly $i$ steps:
\begin{equation*}
\forall\, \state \in \states,\; \posCost{0}{\state} = 0 \;\wedge\; \posCost{i\,\geq 1}{\state} = \begin{cases}
\max_{(\state, \state') \in \edges} \{ \weight((\state, \state')) + \posCost{i-1}{\state'}\} &\text{ if } \state \in \states_{1}, \\
\min_{(\state, \state') \in \edges} \{ \weight((\state, \state')) + \posCost{i-1}{\state'}\} &\text{ if } \state \in \states_{2}.
\end{cases}
\end{equation*}
Hence, the set of winning states for $\playerOne$ is the set of states for which there exists some $i$, $1 \leq i \leq \sizeMax$ such that $\posCost{i}{s} \geq 0$. We state the correctness of $\posSumAlg$ in Lemma~\ref{lem:posSumAlg}.

\begin{lemma}
\label{lem:posSumAlg}
Algorithm $\posSumAlg$ computes the set of winning states of $\playerOne$ for the good window objective in time $\mathcal{O}\left(\edgesSize \cdot \sizeMax \cdot \bits\right)$, with $\bits = \lceil\log_{2} \largestW\rceil$, the length of the binary encoding of weights.
\end{lemma}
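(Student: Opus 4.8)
The plan is to establish correctness and the time bound for $\posSumAlg$ separately.

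\textbf{Correctness.} The key claim is that for every state $s$ and every $i \geq 1$, the value $\posCost{i}{s}$ computed by the algorithm equals the optimal value $\playerOne$ can guarantee for the sum of weights over a play of \emph{exactly} $i$ edges starting in $s$, where $\playerOne$ maximizes and $\playerTwo$ minimizes. I would prove this by induction on $i$. The base case $i = 0$ is the empty play with sum $0 = \posCost{0}{s}$. For the inductive step, I would argue that an optimal play of length $i$ from $s$ consists of a first edge $(s,s')$ followed by an optimal play of length $i-1$ from $s'$; hence the optimal value from $s$ in $i$ steps is $\weight((s,s')) + \posCost{i-1}{s'}$, optimized over the choice of successor $s'$. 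This choice belongs to $\playerOne$ if $s \in \states_{1}$ (whence the $\max$) and to $\playerTwo$ if $s \in \states_{2}$ (whence the $\min$), matching exactly the update rule of the algorithm. This is the standard backward-induction argument for finite-horizon games, and the inductive property of windows is not even needed here since we fix the horizon. Once the claim is established, I would observe that $s$ is winning for $\goodWindowMPObjT{0}$ precisely when $\playerOne$ can ensure a non-negative sum within at most $\sizeMax$ steps, i.e., when there exists $i$ with $1 \leq i \leq \sizeMax$ such that the optimal $i$-step value is non-negative, which is exactly the returned set $\wPos$. A subtle point to address is that the good window objective asks for \emph{existence} of a good window length $\size \leq \sizeMax$, and that $\playerOne$'s choice of when to stop (i.e., which $i$ witnesses the non-negative sum) may depend on $\playerTwo$'s moves; I would note that this causes no problem because we only need non-negativity to hold for \emph{some} $i$ along each branch, and the values $\posCost{i}{s}$ already quantify the best achievable at each fixed horizon, so taking the maximum over $i \leq \sizeMax$ of the guarantee is sound.

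\textbf{Complexity.} For the running time, the algorithm runs $\sizeMax$ outer iterations, and in each iteration it processes every state once, examining all its outgoing edges; summed over all states this is $\mathcal{O}(\edgesSize)$ edge inspections per iteration, giving $\mathcal{O}(\sizeMax \cdot \edgesSize)$ arithmetic operations overall. Each operation is an addition and comparison on integers whose magnitude is bounded: after $i \leq \sizeMax$ steps the partial sums are bounded in absolute value by $\sizeMax \cdot \largestW$, so each number fits in $\mathcal{O}(\bits + \log \sizeMax)$ bits and arithmetic on them costs $\mathcal{O}(\bits)$ up to the logarithmic window factor. To match the stated bound $\mathcal{O}(\statesSize \cdot \edgesSize \cdot \sizeMax \cdot \bits)$ I would account for the cost of each arithmetic operation as $\mathcal{O}(\bits)$ and absorb the dependence on $\statesSize$ coming from the per-iteration work being naturally phrased as $\mathcal{O}(\statesSize)$ per-state updates each scanning its edges; the crude product $\statesSize \cdot \edgesSize$ is a safe over-estimate of the edge-scanning work across the state partition.

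The main obstacle is not the arithmetic but pinning down precisely the semantics of $\posCost{i}{s}$ as a finite-horizon game value and confirming that the \emph{existential} quantifier over window lengths in the definition of $\goodWindowMPObj$ is correctly captured by taking a maximum (rather than, say, requiring a uniform stopping time). I would therefore devote the bulk of the argument to making the correspondence between the recurrence and the good window objective airtight, treating the complexity bound as a routine bookkeeping step.
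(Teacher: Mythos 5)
Your proposal follows essentially the same route as the paper's proof: interpret $\posCost{i}{s}$ as the value of the exact-$i$-step sum game (you prove this by backward induction, which the paper takes as immediate), conclude that the returned set $\{s \mid \exists\, i \leq \sizeMax,\ \posCost{i}{s} \geq 0\}$ coincides with the winning set for $\goodWindowMPObjT{0}$, and then do the complexity bookkeeping. The complexity part is fine and is if anything slightly sharper than the stated bound.

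The genuine gap is in the direction ``$s$ winning for $\goodWindowMPObjT{0}$ implies $\exists\, i \leq \sizeMax$ with $\posCost{i}{s} \geq 0$.'' You correctly flag that the witnessing window length in the objective may depend on $\playerTwo$'s moves, but your resolution (``this causes no problem because \dots{} taking the maximum over $i \leq \sizeMax$ of the guarantee is sound'') only re-establishes the easy converse: a non-negative fixed-horizon guarantee yields a win. It does not show that a win yields a non-negative guarantee at some \emph{uniform} horizon, and for the objective as literally defined this implication can fail. Take $\sizeMax = 2$ and a $\playerTwo$ state $s$ with an edge of weight $+1$ to $a$ and an edge of weight $-1$ to $b$, where $a$ has a self-loop of weight $-10$, $b$ has an edge of weight $+1$ to $c$, and $c$ has a self-loop of weight $-1$. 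Every play from $s$ has a non-negative prefix sum within two steps (length $1$ on the first branch, length $2$ on the second), so $s$ is winning for $\goodWindowMPObjT{0}$; yet $\posCost{1}{s} = \min\{1, -1\} = -1$ and $\posCost{2}{s} = \min\{1 - 10,\ -1 + 1\} = -9$, so the algorithm rejects $s$: no single horizon works for both of $\playerTwo$'s branches. To be fair, the paper's own proof of case (a) makes exactly the same leap --- it asserts that a winning $\playerOne$ ``enforces a non-negative sum after $\size$ steps, for some $\size \leq \sizeMax$,'' i.e., that the horizon can be chosen uniformly --- so your write-up is no less rigorous than the published argument. But the point you yourself singled out as the crux is precisely where both arguments are incomplete, and it cannot be waved away: one must either read the good window objective with a uniform horizon (which is the reading the subsequent lemmas actually rely on, since the strategy extracted from $\posSumAlg$ closes windows within a fixed number of steps), or supply a separate argument for the equivalence in the contexts where $\posSumAlg$ is invoked.
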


\begin{proof}
Let $\winningsPos \subseteq \states$ denote the winning states for $\goodWindowMPObjT{0}$. We prove that (a) $\state \in \winningsPos \Rightarrow \state \in \posSumAlg(\game, \sizeMax)$, and (b) $\state \in \posSumAlg(\game, \sizeMax) \Rightarrow s \in \winningsPos$.

We first consider case (a). From $s$, there exists a strategy of $\playerOne$ that enforces a non-negative sum after $\size$ steps, for some $\size$, $1 \leq \size \leq \sizeMax$. Hence, the value $\posCost{\size}{s}$ computed by the algorithm is non-negative and $\state \in \posSumAlg(\game, \sizeMax)$.

Case (b). Assume $\state \in \posSumAlg(\game, \sizeMax)$. By definition of the algorithm $\posSumAlg$, there exists some $\size \leq \sizeMax$ such that $\posCost{\size}{s}$ is positive. Consequently, taking the choice of $\size$ edges that achieves the maximum value defines a strategy for $\playerOne$ that ensures a positive sum after $\size$ steps, hence closing the window started in $\state$. That is, $\state \in \winningsPos$.

It remains to discuss the complexity of $\posSumAlg$. Clearly, it takes a number of elementary arithmetic operations which is bounded by $\mathcal{O}\left(\edgesSize \cdot \sizeMax\right)$ to compute the set $\wPos$ as each edge only needs to be visited once at each step $i$. Each elementary arithmetic operation takes time linear in the number of bits $\bits$ of the encoding of weights, that is, logarithmic in the largest weight $\largestW$. Hence, the time complexity of $\posSumAlg$ is $\mathcal{O}\left(\edgesSize \cdot \sizeMax \cdot \bits\right)$.
\end{proof}

Thanks to the previous lemma, we establish the algorithm solving the direct fixed window objective.

\begin{lemma}
\label{lem:directWinAlg}
Algorithm $\directWinAlg$ computes the set of winning states of $\playerOne$ for the direct fixed window mean-payoff objective in time $\mathcal{O}\left( \statesSize \cdot \edgesSize \cdot \sizeMax \cdot \bits\right)$, with $\bits = \lceil\log_{2} \largestW\rceil$, the length of the binary encoding of weights.
\end{lemma}

\begin{proof}
Let $\winningsDir$ be the set of winning states for $\directFixedWindowMPObjT{0}$, i.e., 
\begin{equation*}
\state \in \winningsDir \;\Leftrightarrow\; \exists\; \strat_{1} \in \strats_{1},\; \forall\, \strat_{2} \in \strats_{2},\; \outcome{\state}{\strat_{1}}{\strat_{2}} \in \directFixedWindowMPObjT{0}.
\end{equation*}
We first prove (a) $\state \in \directWinAlg(\game, \sizeMax) \Rightarrow  \state \in \winningsDir$, and then (b) $\state \in \winningsDir \Rightarrow \state \in \directWinAlg(\game, \sizeMax)$. First of all, notice that $\directWinAlg$ exactly computes the set of states $\wDirect$ such that a non-negative sum is achievable in at most $\sizeMax$ steps, using only states from which a non-negative sum can also be achieved in at most $\sizeMax$ steps (hence the property is defined recursively).

Consider case (a). Let $\state \in \wDirect$. Consider the following strategy of $\playerOne$.
\begin{enumerate}
\item Play the strategy prescribed by $\posSumAlg$ until a non-negative sum is reached. This is guaranteed to be the case in at most $\sizeMax$ steps. Let $\state'$ be the state that is reached in this manner.
\item By construction of $\wDirect$, we have that $\state' \in \wDirect$. Thus, play the strategy prescribed by $\posSumAlg$ in $\state'$.
\item Continue ad infinitum.
\end{enumerate}
We denote this strategy by $\strat_{1}$ and claim it is winning for the direct fixed window objective, i.e., $\state \in \winningsDir$. Indeed, consider any strategy of $\playerTwo$ and let $\play = \outcome{\state}{\strat_{1}}{\strat_{2}}$. We have $\play = s_{1}s_{2}\ldots{}s_{m_{1}}s_{m_{1}+1}\ldots{}s_{m_{2}}s_{m_{2}+1}\ldots{}$ with for all $j \geq 0,\, s_{j} \in \states$ and $s_{1} = s_{m_{0}} = \state$, such that all sequences $\rho(n) = s_{m_{n}} \ldots{} s_{m_{n+1}}$ are of length at most $\sizeMax+1$ ($\sizeMax$ steps) and such that all strict prefixes of $\rho(n)$ are strictly negative and all suffixes of $\rho(n)$ are positive. Indeed, starting in some state $s_{m_{n}}$, the strategy $\strat_{1}$ keeps a memory of the current sum and tries to reach a non-negative value (using the strategy prescribed by $\posSumAlg$). As soon as such a value is reached in a state $s_{m_{n+1}}$, the memory of the current sum kept by the strategy is reset to zero and the process is restarted. That way, for all $j$, $m_{n} \leq j <  m_{n+1}$, we have that the sum over the sequence from $s_{j}$ to $s_{m_{n+1}}$ is non-negative, hence all intermediate windows are also closed. Thus, the window property is satisfied everywhere along the play $\play$, starting in $s_{1} = \state$, which proves that $\state \in \winningsDir$.

Case (b). Let $\strat_{1}$ be a winning strategy of $\playerOne$ for $\directFixedWindowMPObjT{0}$. For any strategy $\strat_{2}$ of $\playerTwo$, the outcome is a play $\play = s_{1}s_{2}\ldots{}$ with $s_{1} = \state$ such that the window property is satisfied from all states. In particular, this implies, that for all $s_{j}$, strategy $\strat_{1}$ enforces a positive sum in at most $\sizeMax$ steps, that is, $s_{j} \in \posSumAlg(\game, \sizeMax)$. Since it is the case for all states $s_{j}$, we have that $\playerOne$ has a strategy to ensure a positive sum in at most $\sizeMax$ steps using only states from which this property is ensured. Therefore, we conclude that $\state \in \wDirect$.

Again, the number of calls of this algorithm is at most the number of states $\vert\states\vert$. Let $\posSumComp$ denote the complexity of algorithm $\posSumAlg$. Then, the complexity of algorithm $\directWinAlg$ is $\mathcal{O}\left( \statesSize \cdot \posSumComp\right)$.
\end{proof}

Finally, we prove the correctness of the algorithm for the fixed window problem.

\begin{lemma}
\label{lem:windowMPAlg}
Algorithm $\windowMPAlg$ computes the set of winning states of $\playerOne$ for the fixed window mean-payoff objective in time $\mathcal{O}\left( \statesSize^{2} \cdot \edgesSize \cdot \sizeMax \cdot \bits\right)$, with $\bits = \lceil\log_{2} \largestW\rceil$, the length of the binary encoding of weights.
\end{lemma}

\begin{proof}
Let $\winnings \subseteq \states$ be the set of states that are winning for $\fixedWindowMPObjT{0}$, i.e.,
\begin{equation*}
\state \in \winnings \;\Leftrightarrow\; \exists\, \strat_{1} \in \strats_{1},\; \forall\, \strat_{2} \in \strats_{2},\; \outcome{\state}{\strat_{1}}{\strat_{2}} \in \fixedWindowMPObjT{0}.
\end{equation*}
Note that since we set the threshold to be $0$ (w.l.o.g.), we may ignore the division by the window size $\size$ in Eq.~\eqref{eq:goodWindowObj}. We claim that $\windowMPAlg(\game, \sizeMax) = \winnings$. The proof is in two parts: (a) $\state \in \windowMPAlg(\game, \sizeMax) \Rightarrow \state \in \winnings$, and (b) $\state \in \winnings \Rightarrow \state \in \windowMPAlg(\game, \sizeMax)$.

We begin with (a). Let $(\wDirect)^{n \geq 0}$ and $(\wAttr)^{n \geq 0}$ be the finite sequences of sets computed by the algorithm.
We have that $\windowMPAlg(\game, \sizeMax) = \bigcup_{n \geq 0} \wAttr^{n}$. For any $n, n'$ such that $n \neq n'$, we have that $\wAttr^{n} \cap \wAttr^{n'} = \emptyset$ and $\wDirect^{n} \cap \wDirect^{n'} = \emptyset$. Moreover, for all $n \geq 0$, $\wDirect^{n} \subseteq \wAttr^{n}$. Let $\state \in \windowMPAlg(\game, \sizeMax)$. There exists a unique $n \geq 0$ such that $\state \in \wAttr^{n}$. By construction, from $\state$, $\playerOne$ has a strategy to reach and stay in $\wDirect^{n} \cup \wAttr^{n-1} \cup \wAttr^{n-2} \cup \ldots{} \wAttr^{0}$ and thus $\state$ is winning in the subgame $\game \downharpoonright (\states \setminus \wAttr^{n-1})$. However, $\playerTwo$ still has the possibility to leave $\wDirect^{n}$ and reach the set $\wAttr^{n-1} \cup \wAttr^{n-2} \cup \ldots{} \wAttr^{0}$. Since the sequence is finite and $\playerTwo$ cannot leave $\wDirect^{0}$, we have that at some point, any outcome is trapped in some set $\wDirect^{m}$, $0 \leq m \leq n$, in which $\playerOne$ wins the direct fixed window objective. Let $x$ be the length of the finite prefix outside the set $\wDirect^{m}$. The outcome satisfies the fixed {\windowMP} objective for $i = x$. Therefore, we have that $\state \in \winnings$.

Now consider (b). Let $\state \in \winnings$ be a winning state for $\fixedWindowMPObjT{0}$. We claim that $\state \in \windowMPAlg(\game, \sizeMax)$. Suppose it is not the case and consider the sequences $(\wDirect)^{n \geq 0}$ and $(\wAttr)^{n \geq 0}$ as before. We have that for all $n \geq 0$, $\state \not\in \wAttr^{n}$. In particular, $\playerTwo$ can force staying in $\states_{trap} = \states \setminus \bigcup_{n \geq 0} \wAttr^{n}$ when starting in $\state$. Since the algorithm has stopped, we have that $\directWinAlg(\game \downharpoonright \states_{trap}, \sizeMax) = \emptyset$. As algorithm $\directWinAlg$ is correct, from all states of $\states_{trap}$, $\playerTwo$ has a strategy to spoil the direct fixed window game, i.e., $\playerTwo$ can force a sequence of states such that there exists a position $j$ along it for which the window starting in $j$ stays open for at least $(\sizeMax + 1)$ steps, and such that this sequence remains in $\states_{trap}$. Therefore, $\playerTwo$ can force staying in $\states_{trap}$ and seeing infinitely often such sequences, hence $\playerOne$ is losing for the fixed {\windowMP} objective, which contradicts the fact that $\state \in \winnings$.

Finally, consider the complexity of the recursive algorithm $\windowMPAlg$. Notice that at least one state is declared winning at each iteration. The number of calls is thus at most the number of states $\vert\states\vert$. Computing the attractor is linear in the number of edges $\vert\edges\vert \leq \vert\states\vert^{2}$. The overall complexity is thus $\mathcal{O}\left( \statesSize \cdot (\edgesSize + \directWinComp)\right)$, where $\directWinComp$ is the complexity of the $\directWinAlg$ algorithm.
\end{proof}

\smallskip\noindent\textbf{Fixed window: lower bounds.} Thanks to the correctness of algorithm $\windowMPAlg$, we also deduce linear upper bounds (in $\statesSize \cdot \sizeMax$) on the memory needed for both players (Lemma~\ref{lem:memoryOneDimFixed}). Indeed, let $s \in \states$ be a winning state for $\playerOne$. A winning strategy $\strat_{1}$ for $\playerOne$ is to (a) reach the set of states $\wDirect^{n}$ that are winning for the direct fixed window objective in the subgame restricted to states $\wDirect^{n} \setminus \wAttr^{n-1}$, then (b) repeatedly play the strategy prescribed by $\posSumAlg$ in this subgame (i.e., enforce a non-negative sum in less than $\sizeMax$ steps, see proof of Lemma~\ref{lem:directWinAlg}). If $\playerTwo$ leaves for a lower subgame restricted to $\wAttr^{n'}$, $n' < n$, the strategy is to start again part (a) in this subgame. Part (a) is memoryless as it uses a classical attractor strategy. Part (b) requires to consider, for each state $\state'$ in the set computed by $\directWinAlg$, a number of memory states which is bounded by $\sizeMax$, as the only memory needed is to select the corresponding successor state that will maximize the $\posCost{\size}{\state'}$ value, for all possible values of $\size$, the number of steps remaining to close a window. Similarly, $\playerTwo$ needs to be able to prevent the closing of a window repeatedly, and therefore also possibly needs $\sizeMax$ memory states for each state of the game.

To illustrate that memory is needed by both players, consider the following examples. First, consider a game where all states belong to $\playerOne$ and such that the play starts in a central state $s$ and in $s$, there are three outgoing edges, towards three simple cycles $\cycle_{1}$, $\cycle_{2}$, and $\cycle_{3}$. All other states have only one outgoing edge. Cycle $\cycle_{1}$ is composed of six edges of successive weights $3, 3, 5, -1, -1$ and $-5$. Cycle $\cycle_{2}$ is $7, -1$ and $-9$. Cycle $\cycle_{3}$ is $5, 5$ and $-11$. The objective is $\fixedWindowMPObjTW{0}{\sizeMax = 4}$. Clearly, from some point on, a winning strategy of $\playerOne$ has to infinitely alternate between cycles in the following way: $(\cycle_{1}\cycle_{2}\cycle_{3})^{\omega}$. Any other alternation leads to a bad window appearing infinitely often: hence, the decision of $\playerOne$ in $s$ depends on the remaining number of steps to ensure a good window. Second, consider a similar game but with all states belonging to $\playerTwo$. Again, the initial state is central and there are two cycles $\cycle_{1}$ and $\cycle_{2}$ such that $\cycle_{1}$ is $1$ followed by $-1$, and $\cycle_{2}$ is $-1, -1$ and $2$. The objective is $\fixedWindowMPObjTW{0}{\sizeMax = 3}$. If $\playerTwo$ is memoryless, both possible strategies induce a winning play for $\playerOne$. On the other hand, if $\playerTwo$ is allowed to alternate, he can choose the play $(\cycle_{1}\cycle_{2})^{\omega}$ which will be losing for $\playerOne$ as the window $-1, -1, -1$ will appear infinitely often.

\begin{lemma}
\label{lem:memoryOneDimFixed}
In one-dimension games with a fixed window mean-payoff objective, memory is needed by both players and linear memory in the number of states times the window size is sufficient.
\end{lemma}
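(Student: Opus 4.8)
The plan is to prove the two halves of the statement separately: sufficiency of $O(\statesSize \cdot \sizeMax)$ memory, which I would derive from the correctness of $\windowMPAlg$ (Lemma~\ref{lem:windowMPAlg}), and necessity of memory for both players, which I would establish through two explicit single-player gadgets. The sufficiency direction reuses the strategy description given just before the lemma; the necessity direction makes the two sketched examples rigorous.

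For the upper bound I would fix a winning state $\state \in \winnings$ and exhibit a finite-memory winning strategy of $\playerOne$ whose memory I then bound. Recall that $\windowMPAlg$ decomposes the winning region into the disjoint layers $\wAttr^{0}, \wAttr^{1}, \ldots{}$, where each $\wAttr^{n}$ is an attractor to a core set $\wDirect^{n}$ on which $\playerOne$ wins the direct fixed window objective of the current subgame. The winning strategy is the composition already described: play a memoryless attractor strategy toward $\wDirect^{n}$; once inside $\wDirect^{n}$, repeatedly follow the strategy prescribed by $\posSumAlg$, resetting the bookkeeping each time the running window closes; and if $\playerTwo$ escapes to a lower layer $\wAttr^{n'}$ with $n' < n$, restart the same scheme there. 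The only memory-consuming phase is the $\posSumAlg$ phase, in which $\playerOne$ must select, in the current state, the successor maximizing $\posCost{\size}{\cdot}$ for the current number $\size$ of remaining steps; since $\size$ ranges in $\{0, \ldots{}, \sizeMax\}$, this needs at most $\sizeMax$ memory values per state, giving the claimed $O(\statesSize \cdot \sizeMax)$ bound, as the attractor and layer-restart bookkeeping are already encoded by the current state (the layers partition $\states$). A symmetric argument for $\playerTwo$, using the fact established in the correctness proof of $\windowMPAlg$ that from $\states \setminus \winnings$ player $\playerTwo$ can force an open window of length exceeding $\sizeMax$, yields a winning strategy for $\playerTwo$ with at most $\sizeMax$ memory values per state.

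For the lower bound I would make the two gadgets precise. For $\playerOne$, in the one-player game with central state $\state$ branching to the three cycles $\cycle_{1}$ (weights $3,3,5,-1,-1,-5$), $\cycle_{2}$ ($7,-1,-9$), $\cycle_{3}$ ($5,5,-11$) and objective $\fixedWindowMPObjTW{0}{\sizeMax = 4}$, the only choice is at $\state$, so a memoryless strategy commits to a single $\cycle_{i}^{\omega}$; I would verify by direct inspection that each such play contains a recurring window that stays open for more than four steps (e.g.\ the infix $-1,-1,-5$ of $\cycle_{1}$ is not compensated within four edges when followed by $\cycle_{1}$ again), whereas the periodic schedule $(\cycle_{1}\cycle_{2}\cycle_{3})^{\omega}$ closes every window within four steps (the large negatives are immediately followed by the $+7$ opening $\cycle_{2}$, etc.), so $\playerOne$ wins only with memory. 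For $\playerTwo$, in the one-player game with cycles $\cycle_{1}$ (weights $1,-1$) and $\cycle_{2}$ ($-1,-1,2$) and objective $\fixedWindowMPObjTW{0}{\sizeMax = 3}$, I would check that both $\cycle_{1}^{\omega}$ and $\cycle_{2}^{\omega}$ close every window within three steps, so both memoryless choices of $\playerTwo$ let $\playerOne$ win, while the alternation $(\cycle_{1}\cycle_{2})^{\omega}$ creates the infix $-1,-1,-1$ whose window stays open beyond three steps and recurs infinitely often, defeating $\playerOne$; hence $\playerTwo$ needs memory.

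The main obstacle I anticipate is the careful accounting in the upper bound, namely arguing that the layer index and the attractor phase contribute no memory beyond what the current state already records, and that the $\posSumAlg$ counter genuinely requires only the $\sizeMax$ values of ``remaining steps'' rather than tracking the full running sum. The lower-bound gadgets are routine once the weights are fixed, reducing to the finite arithmetic checks indicated above.
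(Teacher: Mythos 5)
Your proposal follows the paper's own argument essentially step for step: the upper bound is derived, exactly as in the paper, from the layered decomposition computed by $\windowMPAlg$ (memoryless attractor strategies plus the $\posSumAlg$ greedy strategy with a remaining-steps counter, giving at most $\sizeMax$ memory values per state, and a symmetric counter-based argument for $\playerTwo$), and the lower bounds use precisely the same two cycle gadgets with $\sizeMax = 4$ and $\sizeMax = 3$, for which your arithmetic checks agree with the paper's claims. The one point you flag as the main obstacle --- that the step counter alone suffices, even though the strategy constructed in the proof of Lemma~\ref{lem:directWinAlg} explicitly ``keeps a memory of the current sum'' to detect when a window closes and the bookkeeping resets --- is asserted rather than argued in the paper as well, so your treatment is faithful to, and no less rigorous than, the published proof.
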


Through Lemma~\ref{lem:windowMPAlg}, we have shown that the fixed window problem admits a polynomial (in $\statesSize$, $\bits$ and $\sizeMax$) algorithm. In Lemma~\ref{lem:oneDimPHardness}, we prove that even for window size $\sizeMax = 1$ and weights $\{-1, 1\}$, the problem is P-hard. This is via a reduction from reachability games. By making the target states absorbing with a self-loop of weight $1$, and giving weight $-1$ on all other edges, we obtain the reduction, as reaching a target state is now the only way to ensure that windows close.

\begin{lemma}
\label{lem:oneDimPHardness}
In two-player one-dimension games, the fixed window mean-payoff problem is P-hard, even for $\sizeMax = 1$ and weights $\{-1, 1\}$.
\end{lemma}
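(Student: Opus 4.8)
The plan is to reduce from the reachability game problem, which is P-complete and in particular P-hard under logarithmic-space reductions. An instance consists of a game graph with states partitioned between $\playerOne$ and $\playerTwo$, a target set $T \subseteq \states$, and an initial state $\initState$; the question is whether $\playerOne$ has a strategy to force a visit to $T$.

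Given such an instance, I would build a one-dimension weighted game $\game'$ on essentially the same arena. I keep the same state partition and the same edges outside $T$, but make every target state $t \in T$ absorbing by deleting its outgoing edges and adding a single self-loop. I then assign weight $+1$ to each of these self-loops and weight $-1$ to every other edge, and fix the window size to $\sizeMax = 1$. (A target state with a unique successor can harmlessly be regarded as belonging to either player.) This transformation is clearly computable in logarithmic space, so it suffices to show that $\playerOne$ wins the reachability game from $\initState$ if and only if $\playerOne$ wins the objective $\fixedWindowMPObjTW{0}{1}$ from $\initState$ in $\game'$.

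The key observation is that with window size $\sizeMax = 1$ the good-window objective $\goodWindowMPObjT{0}$ demands that the single next edge have non-negative weight, i.e.\ weight $+1$ in $\game'$. Since $\fixedWindowMPObjTW{0}{1}$ is prefix-independent and asks that from some position on every position open a window that closes in one step, a play satisfies it exactly when, from some point on, it only traverses $+1$-edges. In $\game'$ the only $+1$-edges are the self-loops at target states, these are reachable only after entering $T$, and once $T$ is entered the play is trapped on the corresponding absorbing $+1$ self-loop. Hence a play satisfies $\fixedWindowMPObjTW{0}{1}$ if and only if it eventually reaches $T$.

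From this equivalence both directions are immediate: a $\playerOne$-strategy forcing a visit to $T$ yields a play that, after an irrelevant reaching prefix, stays on a $+1$ self-loop and is therefore winning; conversely, any winning strategy for $\fixedWindowMPObjTW{0}{1}$ must drive the play onto $+1$-edges from some point on, which in $\game'$ can happen only after reaching $T$, so the same strategy witnesses a win in the reachability game. I do not anticipate a genuine obstacle; the only points requiring care are to verify that no size-$1$ window can ever close while the play stays outside $T$ (every such edge has weight $-1$) and to invoke prefix-independence correctly, so that the unavoidable negative prefix preceding the first visit to $T$ does not affect satisfaction of the objective.
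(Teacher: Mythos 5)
Your proposal is correct and follows essentially the same reduction as the paper: from reachability games, making target states absorbing with $+1$ self-loops, putting weight $-1$ on all other edges, taking $\sizeMax = 1$, and arguing that a play satisfies $\fixedWindowMPObjTW{0}{1}$ if and only if it eventually reaches the target set. Your additional remarks on logarithmic-space computability and on why no window can close outside the target set only make explicit what the paper leaves implicit.
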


\begin{proof}
Let $\game_{r} = (\states_{1}, \states_{2}, \edges)$ be an unweighted game with a reachability objective asking to visit (at least once) a state of the set $R \subseteq \states$. We build the game $\game = (\states_{1}, \states_{2}, \edges', \weight)$ by (a) making the target states absorbing with a self-loop of weight $1$, i.e., for all $s \in R$, we have $(s, s) \in \edges'$ and $\weight((s,s)) = 1$, and (b) putting weight $-1$ on all other edges, i.e., for all edge $(s, t) \in \edges$ such that $s \not\in R$, we have $(s, t) \in \edges'$ and $\weight((s,s)) = -1$. We claim that $\playerOne$ has a winning strategy in $\game_{r}$ from a state $s \in \states$ if and only if he has a winning strategy for the objective $\fixedWindowMPObjTW{0}{\sizeMax = 1}$ in $\game$ from $s \in \states$. Indeed, it is clear that any outcome that never reaches the target set is such that all windows stay indefinitely open, and conversely, an outcome that reaches this set after $n$ steps is winning for the fixed window objective with $i = n$. Since deciding the winner in reachability games is P-complete, this concludes our proof.
\end{proof}

\smallskip\noindent\textbf{Fixed window: summary.} We sum up the complexity analysis of the fixed window problem in Theorem~\ref{thm:oneDimFixed}.

\begin{theorem}
\label{thm:oneDimFixed}
In two-player one-dimension games, (a) the fixed arbitrary window mean-payoff problem is decidable in time $\mathcal{O}\left( \statesSize^{2} \cdot \edgesSize \cdot \sizeMax \cdot \bits\right)$, with $\bits = \lceil\log_{2} \largestW\rceil$, the length of the binary encoding of weights, and (b) the fixed polynomial window mean-payoff problem is P-complete. In general, both players require memory, and memory of size linear in $\statesSize \cdot \sizeMax$ is sufficient.
\end{theorem}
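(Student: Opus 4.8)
The plan is to assemble the three lemmas established above, since Theorem~\ref{thm:oneDimFixed} is essentially a summary of the preceding analysis rather than a statement requiring a fresh argument. For part~(a), I would invoke Lemma~\ref{lem:windowMPAlg} directly: algorithm $\windowMPAlg$ correctly computes the set of winning states of $\playerOne$ for $\fixedWindowMPObjT{0}$ and runs in time $\mathcal{O}\left(\statesSize^{3}\cdot\edgesSize\cdot\sizeMax\cdot\bits\right)$. Because the threshold can be taken equal to $\zeroVector$ without loss of generality (by the weight transformation noted in the preliminaries), this immediately settles decidability with the claimed running time for arbitrary window sizes.

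For part~(b), I would argue membership and hardness separately. For membership in P, I would observe that when $\sizeMax$ is bounded by a polynomial in the size of the game (namely in $\statesSize$ and $\bits$), the running time from Lemma~\ref{lem:windowMPAlg} becomes polynomial in the input size, so the fixed polynomial window problem lies in P. For P-hardness, I would invoke Lemma~\ref{lem:oneDimPHardness}, which establishes hardness already for the smallest nontrivial window size $\sizeMax = 1$ and weights in $\{-1,1\}$, via the reduction from reachability games (making target states absorbing with a self-loop of weight $1$ and putting weight $-1$ elsewhere). Combining the two directions yields P-completeness.

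For the memory statement, I would cite Lemma~\ref{lem:memoryOneDimFixed}: the two examples preceding it show that neither player can win with a memoryless strategy in general (e.g.\ $\playerOne$ must alternate $(\cycle_{1}\cycle_{2}\cycle_{3})^{\omega}$, and symmetrically $\playerTwo$ exploits alternation), while the linear upper bound follows from the structure of the winning strategy extracted from $\windowMPAlg$ --- a memoryless attractor phase composed with the $\posSumAlg$-prescribed phase, which only needs to track the number of remaining steps to close a window, bounded by $\sizeMax$, at each of the $\statesSize$ states.

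The only point requiring any care is the polynomial-window membership claim, which rests on the product $\statesSize^{3}\cdot\edgesSize\cdot\sizeMax\cdot\bits$ being polynomial in the input precisely when $\sizeMax$ is; this is immediate. I do not expect a genuine obstacle here, as all the substantive work has already been carried out in Lemmas~\ref{lem:windowMPAlg}, \ref{lem:oneDimPHardness}, and~\ref{lem:memoryOneDimFixed}; the theorem simply collects them.
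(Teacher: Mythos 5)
Your proposal is correct and follows exactly the paper's route: the paper states Theorem~\ref{thm:oneDimFixed} as a summary with no separate proof, relying precisely on Lemma~\ref{lem:windowMPAlg} for the algorithmic bound, Lemma~\ref{lem:oneDimPHardness} (plus the observation that the running time is polynomial for polynomial $\sizeMax$) for P-completeness, and Lemma~\ref{lem:memoryOneDimFixed} for the memory bounds. Your assembly of these three lemmas, including the remark that hardness already holds for $\sizeMax = 1$ with weights in $\{-1,1\}$, is exactly the intended argument.
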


\smallskip\noindent\textbf{$\Bounded$ window: algorithm.} In the following, we focus on the $\bounded$ window mean-payoff problem for two-player one-dimension games. We start with two technical lemmas related to the classical supremum total-payoff threshold problem. Using these lemmas, we establish an algorithm to solve the $\bounded$ window problem. This algorithm uses a polynomial number of calls to an oracle solving the total-payoff threshold problem, hence proving that the $\bounded$ window problem is in $\NPinter$~\cite{gawlitza2009}. As a corollary, we get an interesting bound on the window size needed to win the fixed window problem if possible.

The first technical lemma (Lemma~\ref{lem:oneDimFinTech1}) states that if $\playerOne$ has a strategy to win the supremum total-payoff objective from some state $\initState$, then he can force a non-negative sum from this state in at most $(\statesSize - 1)\cdot(\statesSize\cdot\largestW + 1)$ steps, i.e., he wins the good window objective for this window size.

\begin{lemma}
\label{lem:oneDimFinTech1}
Let $\gameFullOneDim$ be a two-player one-dimension game. If $\playerOne$ has a strategy to win for objective $\objTPSupT{0}$ from initial state $\initState \in \states$, then $\playerOne$ also has a strategy to win for the good window objective $\goodWindowMPObjT{0}$ from $\initState$ for $\sizeMax = (\statesSize - 1)\cdot(\statesSize\cdot\largestW + 1)$.
\end{lemma}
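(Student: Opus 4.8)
The plan is to exploit the memoryless determinacy of the supremum total-payoff objective together with a pigeonhole/pumping argument over a \emph{bounded} set of configurations. First I would invoke the fact (e.g.\ \cite{gawlitza2009}) that one-dimension total-payoff games admit memoryless winning strategies for both players, and fix a memoryless strategy $\strat_{1}$ of $\playerOne$ winning $\objTPSupT{0}$ from $\initState$. Restricting the game to $\strat_{1}$ yields a graph $G_{1}$ in which only $\playerTwo$ still makes choices; the entire argument then takes place inside $G_{1}$, and the memorylessness of $\strat_{1}$ is essential, since it later allows $\playerTwo$ to repeat its own choices verbatim.

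The first key step is to show that every cycle reachable from $\initState$ in $G_{1}$ has non-negative total weight. Indeed, if some reachable cycle had weight at most $-1$, then $\playerTwo$ could reach it and loop it forever, driving $\tpay(\play(n)) \to -\infty$, hence $\limsup_{n} \tpay(\play(n)) = -\infty < 0$, contradicting that $\strat_{1}$ wins $\objTPSupT{0}$. A standard cycle decomposition of any prefix $\sigma_{0}\ldots\sigma_{n}$ of a play in $G_{1}$ (as used in the proof of Lemma~\ref{lem:relationBoundedClassical}) then splits its sum into a non-negative cyclic part and an acyclic remainder of at most $\statesSize - 1$ edges; as each edge weighs at least $-\largestW$, I obtain the uniform lower bound $\tpay(\play(n)) \geq -(\statesSize - 1)\cdot\largestW$ for every $n$ and every play of $G_{1}$ from $\initState$.

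The core of the proof is then a pumping argument bounding how long $\playerTwo$ can keep the window open, i.e.\ keep all partial sums strictly negative. While the window from position $0$ is open, each partial sum $S_{n} = \tpay(\play(n))$ is an integer in the range $\{-(\statesSize-1)\largestW,\ldots,-1\}$, so the configuration (current state, current partial sum) ranges over a set of size at most $\statesSize \cdot (\statesSize - 1) \cdot \largestW$. Consequently, if $\playerTwo$ kept the sum negative for strictly more than $\statesSize \cdot (\statesSize-1)\cdot\largestW$ steps, two positions $i < j$ would carry the same configuration; the segment between them is then a cycle of weight $S_{j}-S_{i}=0$, and — because $\strat_{1}$ is memoryless — $\playerTwo$ can pump this zero-cycle forever while repeating its choices, producing a play consistent with $\strat_{1}$ along which every partial sum remains negative. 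That play satisfies $\limsup_{n}\tpay(\play(n)) \leq -1 < 0$, once more contradicting the winning assumption.

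Hence, by playing $\strat_{1}$, $\playerOne$ forces a non-negative partial sum within $\statesSize\cdot(\statesSize-1)\cdot\largestW + 1 \leq (\statesSize-1)\cdot(\statesSize\cdot\largestW+1) = \sizeMax$ steps from $\initState$, which is exactly winning the good window objective $\goodWindowMPObjT{0}$ for this window size. The main obstacle I anticipate is the careful bookkeeping in the pumping step: one must verify that inserting (iterating) the zero-cycle keeps \emph{all} intermediate partial sums strictly negative — which holds because the looped sums merely repeat the originally negative values shifted by $0$ — and that the resulting infinite play is genuinely consistent with $\strat_{1}$, which is precisely where the memorylessness hypothesis is used crucially.
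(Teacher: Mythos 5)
Your proof is correct, but its combinatorial core is genuinely different from the paper's. The paper fixes a prefix of length exactly $\sizeMax$ of an outcome consistent with a memoryless winning strategy $\strat_{1}$, applies the stack-based cycle decomposition to it, and argues by cases: no cycle in the decomposition can be strictly negative (else $\playerTwo$ pumps it); if some cycle has weight zero, the prefix up to the \emph{high point} of that cycle must already be non-negative (else $\playerTwo$ pumps the zero cycle while all sums stay negative); and if all cycles are strictly positive, a counting argument shows there are at least $(\statesSize-1)\cdot\largestW$ of them, enough to compensate the acyclic part. You instead run a pigeonhole argument on configurations (current state, current partial sum): after ruling out negative cycles and establishing the uniform lower bound $-(\statesSize-1)\cdot\largestW$ on partial sums, an open window of length greater than $\statesSize\cdot(\statesSize-1)\cdot\largestW$ forces a repeated configuration, hence a zero-weight cycle that $\playerTwo$ can pump verbatim (by memorylessness of $\strat_{1}$) to keep every partial sum at most $-1$ forever, contradicting $\objTPSupT{0}$. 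Both arguments hinge on the same two facts --- memorylessness lets $\playerTwo$ replay cycles, and pumping a non-positive cycle falsifies the supremum total-payoff --- but your route replaces the paper's three-case decomposition analysis by a single uniform pigeonhole step, and it even yields the slightly sharper bound $\statesSize\cdot(\statesSize-1)\cdot\largestW + 1$, which is at most $\sizeMax$ whenever $\statesSize \geq 2$ (the degenerate case $\statesSize = 1$ gives $\sizeMax = 0$ and is implicitly excluded by the paper's own proof as well). The paper's version has the organizational advantage of reusing exactly the cycle-decomposition machinery introduced for Lemma~\ref{lem:relationBoundedClassical}, whereas yours is self-contained and arguably more elementary.
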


This result is obtained by considering a memoryless winning strategy of $\playerOne$ for the total-payoff and the decomposition in simple cycles of any consistent outcome where (a) either simple cycles are strictly positive, or (b) they are of value zero but preceded by a non-negative prefix.

\begin{proof}
Let $\strat_{1} \in \stratsMemoryless_{1}$ be a memoryless winning strategy of $\playerOne$ for $\objTPSupT{0}$. Our claim is that for all possible outcome $\play$ consistent with $\strat_{1}$ starting in the initial state $\initState$, there exists a prefix $\prefix$ of $\play$ of size at most $\sizeMax$ such that the total sum of weights over $\prefix$ is non-negative. Let $\play$ be any outcome consistent with $\strat_{1}$ and $\prefix_{1}$ its prefix of length $(\statesSize - 1)\cdot(\statesSize\cdot\largestW + 1)$. Consider the cycle decomposition (see the proof of Lemma~\ref{lem:relationBoundedClassical}) of $\prefix_{1}$: $\mathcal{A}, \cycle_{1}, \cycle_{2}, \ldots{}, \cycle_{m}, \mathcal{B}$, with $\mathcal{A}$ the prefix before the first cycle and $\mathcal{B}$ the suffix after the last cycle in $\prefix_{1}$. The total length of the acyclic part is $\vert\mathcal{A}\vert + \vert\mathcal{B}\vert < \statesSize - 1$. We claim that there exists a prefix $\prefix$ of $\prefix_{1}$ such that the total sum of weights over $\prefix$ is non-negative. Consider the following arguments:
\begin{enumerate}
\item No cycle $\cycle$ in $\{\cycle_{1}, \ldots{}, \cycle_{m}\}$ can be strictly negative. Otherwise, since $\strat_{1}$ is memoryless, $\playerTwo$ could force cycling in such a cycle forever and the play would be losing for the supremum total-payoff objective, which contradicts $\strat_{1}$ being a winning strategy.
\item Assume that there exists a cycle $\cycle$ in $\{\cycle_{1}, \ldots{}, \cycle_{m}\}$ such that the sum of weights over this cycle is zero. We define the \textit{high point} of a cycle as the first state where the sum from the start of the cycle takes its highest value. Then, the prefix $\prefix$ of $\prefix_{1}$ up to this high point is non-negative and we are done. Indeed, assume it is not the case. Then, the running sum over the outcome $\play$ is strictly negative when reaching the high point, and stays strictly negative in all positions along the cycle $\cycle$, by definition of the high point. Therefore, $\playerTwo$ can force cycling forever in $\cycle$ since $\strat_{1}$ is memoryless and the outcome becomes losing for the total-payoff objective.

\item So assume there are only strictly positive cycles in the cycle decomposition of $\prefix_{1}$, that is, they all have a total sum of value at least $1$. The total sum over $\cycle_{1}, \ldots{}, \cycle_{m}$ is at least equal to $m$. Since each cycle is of length at most $\statesSize$ and $\mathcal{A} \cup \mathcal{B}$ is of length at most $\statesSize - 1$, we have that the number of cycles $m$ in the cycle decomposition of $\prefix_{1}$ is at least $((\statesSize - 1)\cdot(\statesSize\cdot\largestW + 1) - (\statesSize - 1))\,/\,\statesSize = (\statesSize - 1)\cdot\largestW$. Given that the total sum over prefix $\mathcal{A}$ is at least $- (\statesSize - 1)\cdot\largestW$, we obtain that $\prefix = \mathcal{A}\, \cycle_{1}\, \cycle_{2} \ldots{} \cycle_{m}$ is the desired prefix with a non-negative total sum, and its length is bounded by $(\statesSize - 1)\cdot(\statesSize\cdot\largestW + 1)$.
\end{enumerate}
This concludes our proof.
\end{proof}

The second technical lemma (Lemma~\ref{lem:oneDimFinTech2}) shows that if $\playerTwo$ has a strategy to ensure that the supremum total-payoff from some state $\initState$ is strictly negative, then he has a memoryless strategy to do so and any outcome $\play$ starting in $\initState$ and consistent with this strategy is such that the direct $\bounded$ window mean-payoff objective is not satisfied.

\begin{lemma}
\label{lem:oneDimFinTech2}
Let $\gameFullOneDim$ be a two-player one-dimension game. If $\playerTwo$ has a spoiling strategy for objective $\objTPSupT{0}$ from initial state $\initState \in \states$, then $\playerTwo$ has a strategy $\strat_{2} \in \stratsMemoryless_{2}$ to ensure that for all possible outcome $\play = s_{0}s_{1}\ldots{}$ consistent with $\strat_{2}$ starting in $s_{0} = \initState$, there exists a position $i \geq 0$ such that for all window sizes $l \geq 1$, the total sum of weights on the window from $s_{i}$ to $s_{i+l}$ is strictly negative.
\end{lemma}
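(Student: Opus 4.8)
The plan is to split the statement into its two independent assertions: (i)~that $\playerTwo$ can spoil $\objTPSupT{0}$ using a \emph{memoryless} strategy, and (ii)~that any such memoryless spoiling strategy already forces a single window to stay open forever along every consistent outcome. For~(i) I would simply invoke the memoryless determinacy of supremum total-payoff games \cite{gawlitza2009}: since $\playerTwo$ has \emph{some} spoiling strategy, $\playerOne$ has no winning strategy for $\objTPSupT{0}$ from $\initState$, and by memoryless determinacy $\playerTwo$ then has a strategy $\strat_{2} \in \stratsMemoryless_{2}$ ensuring $\tpaySup(\play) < 0$ for every outcome $\play$ consistent with it. I would fix this $\strat_{2}$ for the remainder of the argument.

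For~(ii), fix an arbitrary outcome $\play = \sigma_{0}\sigma_{1}\ldots{}$ consistent with $\strat_{2}$, and let $T_{n} = \tpay(\play(n)) = \sum_{k=0}^{n-1} \weight(\sigma_{k},\sigma_{k+1})$ be its running total, so that $T_{0} = 0$. The entire argument rests on one observation: because $\strat_{2}$ spoils, $\limsup_{n} T_{n} = \tpaySup(\play) < 0$, so all but finitely many $T_{n}$ are strictly negative. Hence the global supremum $M = \sup_{n} T_{n}$ is actually a finite maximum, it satisfies $M \geq T_{0} = 0 > \limsup_{n} T_{n}$, and it is attained at only finitely many positions. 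I would let $i$ be the \emph{last} position with $T_{i} = M$; then for every $j > i$ one has $T_{j} < M = T_{i}$ (later positions either miss the maximum or lie past the index beyond which the sequence has gone negative). Consequently, for every window size $l \geq 1$ the sum over the window from $\sigma_{i}$ to $\sigma_{i+l}$ equals $T_{i+l} - T_{i} < 0$, which is exactly the required conclusion: the window opened at $i$ never closes, so the direct $\bounded$ window objective fails.

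The one delicate point I would state carefully is precisely the claim that the supremum of the $T_{n}$ is attained and strictly exceeds the limsup, and this is where the trivial-looking fact $T_{0} = 0$ does the real work: it guarantees $M \geq 0$, and combined with $\limsup_{n} T_{n} < 0$ (which forces $T_{n} < 0$ for all $n$ beyond some index $N$) it pins $M$ down to a maximum over the finitely many indices below $N$, attained finitely often, so a last argmax exists. The degenerate case $\limsup_{n} T_{n} = -\infty$ (i.e. $T_{n} \to -\infty$) is subsumed by the same "last argmax'' reasoning. I would also remark that the window portion of the proof uses nothing about memorylessness---it goes through for \emph{any} spoiling strategy---so memorylessness is needed only to meet the statement as phrased and is supplied entirely by the cited determinacy result.
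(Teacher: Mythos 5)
Your proof is correct, and it takes a genuinely different (though closely related) route from the paper's. Both arguments obtain memorylessness in exactly the same way, by appealing to the memoryless determinacy of supremum total-payoff games from \cite{gawlitza2009} (the paper simply opens with ``let $\strat_{2} \in \stratsMemoryless_{2}$ be a memoryless spoiling strategy'' without further comment, which is the same reliance made explicit in your part~(i)). The difference lies in the window part. The paper argues by contradiction: if along some consistent outcome every position $i$ admitted a window of some length $l$ with non-negative sum, the play could be decomposed into a concatenation of finite non-negative windows, so the running total would be $\geq 0$ infinitely often and $\tpaySup(\play) \geq 0$, contradicting that $\strat_{2}$ spoils $\objTPSupT{0}$. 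You instead prove the contrapositive directly: from $\limsup_{n} T_{n} < 0$ and $T_{0} = 0$ you deduce that the supremum of the running totals is a finite maximum, at least $0$, attained only finitely often, and you take the \emph{last} argmax as the witness position $i$, so that every window opened at $i$ has sum $T_{i+l} - T_{i} < 0$. The two core implications are exact contrapositives of one another, but the mechanics differ: your version buys an explicit, constructive witness (the position $i$ is pinned down as the last maximizer of the running total) and handles the degenerate case $T_{n} \to -\infty$ uniformly, while the paper's version reuses the window-decomposition technique it employs elsewhere (e.g., in assertion~(a) of Lemma~\ref{lem:relationBoundedClassical}), keeping the toolkit of the section uniform. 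Your closing remark that the window argument needs nothing about memorylessness is accurate, and applies equally to the paper's proof.
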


\begin{proof}
By contradiction. Let $\strat_{2} \in \stratsMemoryless_{2}$ be a memoryless spoiling strategy for objective $\objTPSupT{0}$ from $\initState \in \states$. Let $\play$ be a consistent outcome and assume that it does not respect the lemma, i.e., for all positions $i \geq 0$, there exists a window size $l \geq 1$ such that the window from $s_{i}$ to $s_{i+l}$ is non-negative. Then the play $\play$ can be decomposed as a sequence of finite windows of non-negative weights. Hence, the total sum from $s_{0} = \initState$ takes infinitely often values at least equal to zero and the limit of its suprema is non-negative. This is in contradiction to $\strat_{2}$ being a winning strategy for $\playerTwo$.
\end{proof}

Thanks to Lemma~\ref{lem:oneDimFinTech1} and Lemma~\ref{lem:oneDimFinTech2}, we are now able to establish an algorithm (Alg.~\ref{alg:finiteProblem}) to solve the $\bounded$ window mean-payoff problem on two-player one-dimension games, and to deduce $\NPinter$-membership of the problem. Lemma~\ref{lem:oneDimFiniteNPinter} states its correctness.

Algorithm $\finiteProblemAlg$ (Alg.~\ref{alg:finiteProblem}) computes via a subroutine $\unbNegWindowAlg$ the set of states from which $\playerTwo$ can force the visit of a position such that the window opening in this position never closes.
Clearly, to prevent $\playerOne$ from winning the $\bounded$ window problem, $\playerTwo$ must be able to do so repeatedly as the prefix-independence of the objective otherwise gives the possibility to wait that all such bad positions are encountered before taking the windows into account. Therefore, the states that are not in $\unbNegWindowAlg(\game)$, as well as their attractor, are winning for $\playerOne$. Since the choices of $\playerTwo$ are reduced because of the attractor of $\playerOne$ being declared winning, we compute in several steps, adding new states to the set of winning states for $\playerOne$ up to stabilization.

Now consider the subroutine $\unbNegWindowAlg$ (Alg.~\ref{alg:unbNegWindow}). Its correctness is based on Lemma~\ref{lem:oneDimFinTech2}. Indeed, it computes the set of states from which $\playerTwo$ can force a position for which the window never closes. To do so, it suffices to compute the attractor for $\playerTwo$ of the set of states from which $\playerTwo$ can enforce a strictly negative supremum total-payoff. Routine $\negSupTPAlg$ denotes a call to an oracle solving the total-payoff problem, which is known to belong to $\NPinter$~\cite{gawlitza2009}. Precisely,
\begin{equation*}
\negSupTPAlg(\game) = \left\lbrace \state \in \states \mid \exists\, \strat_{2} \in \strats_{2},\, \forall\, \strat_{1} \in \strats_{1},\, \forall\, \play \in \outcome{\state}{\strat_{1}}{\strat_{2}},\, \tpaySup(\play) < 0 \right\rbrace .
\end{equation*}
Again, we compute the fixed point of the sequence as the choices of $\playerOne$ are reduced at each iteration.

\begin{figure}[tb]
\begin{minipage}[c]{0.44\linewidth}
\begin{algorithm}[H]
\caption{$\finiteProblemAlg(\game)$}
\label{alg:finiteProblem}
\begin{algorithmic}
\footnotesize
    \REQUIRE Game $\gameFullOneDim$
    \ENSURE $\wFP$ is the set of winning states for $\playerOne$ for the $\bounded$ window mean-payoff problem
    \STATE $\wFP := \emptyset$
    \STATE $L := \unbNegWindowAlg(\game)$
    \WHILE{$L \neq \states \setminus \wFP$}
    \vspace{1mm}
    	\STATE $\wFP := \attr^{\playerOne}_{\game} (\states \setminus L)$
    \vspace{1mm}
    	\STATE $L := \unbNegWindowAlg\Big(\game \downharpoonright (\states \setminus \wFP)\Big)$
    \ENDWHILE
    \RETURN $\wFP$
\end{algorithmic}
\end{algorithm}
\end{minipage}
\hfill
\begin{minipage}[c]{0.5\linewidth}
\begin{algorithm}[H]
\caption{$\unbNegWindowAlg(\game)$}
\label{alg:unbNegWindow}
\begin{algorithmic}
\footnotesize
    \REQUIRE Game $\gameFullOneDim$
    \ENSURE $L$ is the set of states from which $\playerTwo$ can force a position for which the window never closes
    \STATE $p := 0$ ; $L_{0} := \emptyset$
    \REPEAT 
    \STATE $L_{p+1} := L_{p} \cup \attr^{\playerTwo}_{\game \downharpoonright (\states \setminus L_{p})} \Big( \negSupTPAlg\big(\game \downharpoonright (\states \setminus L_{p})\big)\Big)$
    \STATE $p := p + 1$
    \UNTIL{$L_{p} = L_{p-1}$}
    \RETURN $L := L_{p}$
\end{algorithmic}
\end{algorithm}
\end{minipage}
\end{figure}

The main idea of the correctness proof is that from all states in $\overline{\wFP}$, $\playerTwo$ has an infinite-memory winning strategy which is played in rounds, and in round $n$ ensures an open window of size at least $n$ by playing the total-payoff strategy of $\playerTwo$ for at most $n\cdot\statesSize$ steps, and then proceeds to round $(n+1)$ to ensure an open window of size $(n+1)$, and so on. Hence, windows stay open for arbitrary large periods and the bounded window objective is falsified.

\begin{lemma}
\label{lem:oneDimFiniteNPinter}
Given a two-player one-dimension game $\gameFullOneDim$, the algorithm $\finiteProblemAlg$ computes the set of winning states for $\playerOne$ for the $\bounded$ window mean-payoff objective of threshold $0$ in time $\mathcal{O}(\statesSize^{2} \cdot (\edgesSize + \mathbb{C}))$, where $\mathbb{C}$ is the complexity of algorithm $\negSupTPAlg$, i.e., the complexity of computing the set of winning states in a two-player one-dimension supremum total-payoff game. Thus, algorithm $\finiteProblemAlg$ is in $\NPinter$.
\end{lemma}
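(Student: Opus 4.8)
The plan is to establish correctness of $\finiteProblemAlg$ by a double inclusion between the returned set $\wFP$ and the true winning set of $\playerOne$ for $\finiteWindowMPObjT{0}$, and then to handle the running time and $\NPinter$ membership separately. I would first pin down the meaning of the subroutine $\unbNegWindowAlg$ on a generic input game $\game'$ with state set $\states'$. Writing $L = \bigcup_{p} L_{p}$ for its returned fixed point, the termination condition $L_{p} = L_{p-1}$ means that the $\playerTwo$-attractor computed inside $\game' \downharpoonright (\states' \setminus L)$ of $\negSupTPAlg(\game' \downharpoonright (\states' \setminus L))$ is already contained in $L$; being also contained in $\states' \setminus L$ by construction, it is empty, whence $\negSupTPAlg(\game' \downharpoonright (\states' \setminus L)) = \emptyset$. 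Since each $L_{p}$ is a union of $\playerTwo$-attractors, $\playerOne$ can keep the play inside $\states' \setminus L$; and because $\negSupTPAlg$ is empty there, determinacy of total-payoff games \cite{gawlitza2009} gives that $\playerOne$ wins $\objTPSupT{0}$ from every state of $\states' \setminus L$ while staying in that region. Dually, every state of $L$ lies in some attractor layer $L_{p+1} \setminus L_{p}$ from which $\playerTwo$ can force reaching $\negSupTPAlg(\game' \downharpoonright (\states' \setminus L_{p}))$ and then, by Lemma \ref{lem:oneDimFinTech2}, force a position whose window never closes.

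For soundness I would show, at each outer iteration, that declaring $\wFP := \attr_{\game}^{\playerOne}(\states \setminus L)$ winning is justified. Here $\unbNegWindowAlg$ was run on $\game \downharpoonright (\states \setminus \wFP)$, so $L \subseteq \states \setminus \wFP$ and the freshly exposed states are exactly $(\states \setminus \wFP) \setminus L$. By the first paragraph, from each of these $\playerOne$ wins $\objTPSupT{0}$ inside the current subgame, so Lemma \ref{lem:oneDimFinTech1} supplies a strategy closing the current window within $\sizeMax = (\statesSize - 1)\cdot(\statesSize\cdot\largestW + 1)$ steps without leaving the subgame. Restarting this strategy whenever the running sum turns non-negative, and invoking the inductive property of windows exactly as in case (a) of Lemma \ref{lem:directWinAlg}, produces a strategy that wins $\directFixedWindowMPObjT{0}$ for this window bound, hence $\finiteWindowMPObjT{0}$. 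Taking the $\playerOne$-attractor in $\game$ and using prefix-independence of the bounded window objective extends this to the whole updated $\wFP$. As every executed iteration strictly enlarges $\wFP$, the outer loop runs at most $\statesSize$ times.

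For completeness I must prove that the residual set $\overline{\wFP} = \states \setminus \wFP$ at termination --- which equals $L$, since the loop stops precisely when $L = \states \setminus \wFP$ --- is winning for $\playerTwo$; this is the \emph{crux} and requires an explicit infinite-memory spoiling strategy. As $\wFP$ is a $\playerOne$-attractor, $\playerTwo$ can keep the play in $\overline{\wFP}$ forever. In the final invocation of $\unbNegWindowAlg$ on $\game \downharpoonright \overline{\wFP}$ the attractor layers $L_{p}$ fill all of $\overline{\wFP}$, so I would use the dual statement of the first paragraph to build the spoiling strategy in rounds. In round $n$, starting from the current state $s \in \overline{\wFP}$, $\playerTwo$ attracts (inside the appropriate sub-subgame of $\game \downharpoonright \overline{\wFP}$) to a state from which he can force a strictly negative supremum total-payoff and then plays the memoryless strategy of Lemma \ref{lem:oneDimFinTech2}; he continues this for enough steps (a horizon bounded in terms of $n$ and $\statesSize$) to force a window that stays open for at least $n$ consecutive steps, and only then restarts round $n+1$ from the state reached, which again lies in $\overline{\wFP}$. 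Since these rounds produce windows of strictly growing open length at positions tending to infinity, for every window bound $\sizeMax$ and every starting index $i$ there is a later window that remains open for more than $\sizeMax$ steps; hence no suffix of the outcome satisfies the fixed window objective for any $\sizeMax$, and by prefix-independence $\playerOne$ loses $\finiteWindowMPObjT{0}$ from every state of $\overline{\wFP}$. The two points I expect to require the most care are checking that $\playerTwo$ can always restart a round from within $\overline{\wFP}$ --- which follows from $\overline{\wFP}$ being kept invariant together with the layer structure --- and bounding precisely how long a round must last to guarantee an open window of length $n$.

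Finally, for the resource bound I would count iterations. The fixed-point loop of $\unbNegWindowAlg$ adds at least one state per iteration, so it runs at most $\statesSize$ times, each costing one call to $\negSupTPAlg$ (cost $\mathbb{C}$) plus one attractor computation ($\mathcal{O}(\edgesSize)$), giving $\mathcal{O}(\statesSize \cdot (\edgesSize + \mathbb{C}))$; the outer loop of $\finiteProblemAlg$ likewise runs at most $\statesSize$ times, each calling $\unbNegWindowAlg$ once and computing one attractor, for the stated $\mathcal{O}(\statesSize^{2} \cdot (\edgesSize + \mathbb{C}))$. For membership in $\NPinter$ I would exploit that supremum total-payoff games lie in $\NPinter$ with memoryless certificates \cite{gawlitza2009}: an NP procedure guesses, for each of the polynomially many calls to $\negSupTPAlg$ (whose input subgames are fixed once the earlier guessed answers are known), a memoryless witness and verifies it in polynomial time, thereby certifying $\initState \in \wFP$; the complementary certificates witness $\initState \notin \wFP$ in coNP. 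This places the $\bounded$ window problem in $\NPinter$.
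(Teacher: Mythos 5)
Your proposal is correct and follows essentially the same route as the paper's own proof: soundness by applying Lemma \ref{lem:oneDimFinTech1} in the successive subgames (restarting the good-window strategy and invoking the inductive property of windows, then attractor plus prefix-independence), completeness via the layered $\playerTwo$-attractors to the $\negSupTPAlg$ regions and the round-based infinite-memory spoiling strategy built on Lemma \ref{lem:oneDimFinTech2}, and the identical iteration count for the $\mathcal{O}(\statesSize^{2}\cdot(\edgesSize + \mathbb{C}))$ bound. The only notable differences are that you make the $\NPinter$ membership explicit through guessed memoryless certificates for each call to $\negSupTPAlg$ (which the paper leaves implicit), while the paper is slightly more explicit than you on how $\playerTwo$'s round strategy copes with $\playerOne$ escaping to lower attractor layers mid-round (its case (b), a finite-descent argument), a point you only flag as requiring care but which your layer structure already supports.
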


\begin{proof}
It suffices to show that for all states in $\wFP = \finiteProblemAlg(\game)$, there exists a winning strategy of $\playerOne$, whereas for all states in $\states \setminus \wFP$, there exists one of $\playerTwo$. 

Consider a state $s \in \wFP$. Consider $(L^{m})_{0 \leq m \leq n}$, the finite sequence of sets $L$ that are computed by $\finiteProblemAlg$, with $L_{0} = \unbNegWindowAlg(\game)$; and $(\wFP^{m})_{0 \leq m \leq n}$, the corresponding finite sequence of sets $\wFP$ where $\wFP^{0} = \emptyset$ is empty and $\wFP^{n} = \wFP$ is the returned set of winning states. For all $m', m$, $0 \leq m' < m \leq n$, we have that $\wFP^{m} \supset \wFP^{m'}$ and $L^{m} \subset L^{m'}$. By construction, there exists $m$, $1 \leq m \leq n$ such that $s \in \wFP^{m} = \attr^{\playerOne}_{\game}(\states \setminus L^{m-1})$. In the subgame $\game \downharpoonright ((\states \setminus L^{m-1}) \setminus \wFP^{m-1})$, $\playerOne$ has a memoryless~\cite{gimbert2004} winning strategy for the supremum total-payoff objective. Hence, consider the strategy $\strat_{1}$ of $\playerOne$ which is to reach the set $(\states \setminus L^{m-1})$ (in at most $\statesSize$ steps) and then play the memoryless total-payoff strategy in the subgame. It is possible for $\playerTwo$ to force leaving this subgame for a lower subset $\wFP^{m'} \subset \wFP^{m}$ with $m' < m$ but since the sequence is finite, any outcome is ultimately trapped in some subgame $\game \downharpoonright ((\states \setminus L^{m''}) \setminus \wFP^{m''})$. Therefore, repeating the strategy $\strat_{1}$ in each subgame ensures that after a finite number of steps (and hence a finite number of positions for which windows never close), a bottom subgame $\game \downharpoonright ((\states \setminus L^{m''}) \setminus \wFP^{m''})$ is reached and, by Lemma~\ref{lem:oneDimFinTech1}, strategy $\strat_{1}$ ensures satisfaction of the good window objective for $\sizeMax = (\statesSize - 1)\cdot(\statesSize\cdot\largestW + 1)$ in this subgame. Moreover, since this strategy never visits states out of the bottom subgame, it ensures an inductive window from every state, regardless of the past. Hence, all intermediate windows are also closed and this strategy is winning for $\fixedWindowMPObjT{0} \subseteq \finiteWindowMPObjT{0}$ from the initial state $s$. The states that are only visited finitely often before reaching the bottom subgame have no consequence thanks to the prefix-independence of the $\bounded$ window mean-payoff objective. 

As for $\playerTwo$, consider a state $s \in \states \setminus \wFP$. Consider $(L_{p})_{0 \leq p \leq q}$, the finite sequence of sets $L$ that are computed in the last call to $\unbNegWindowAlg$ by $\finiteProblemAlg$, with $L_{0} = \emptyset$. We define the sequences $(N_{p})_{1 \leq p \leq q}$ and $(A_{p})_{1 \leq p \leq q}$ as $N_{p} = \negSupTPAlg(\game \downharpoonright (\states \setminus L_{p-1}))$ and $A_{p} = L_{p} \setminus L_{p-1} = \attr^{\playerTwo}_{\game \downharpoonright (\states \setminus L_{p-1})} (N_{p})$. We have that $s \in L_{p}$ for some $p$ between~$1$ and $q$. An infinite memory winning strategy for $\playerTwo$ is played in rounds. In round $n$, $\playerTwo$ acts as follows. (a) If the current state is in $A_{p}$, play the attractor to $N_{p}$ and then play the optimal strategy for the supremum total-payoff in $N_{p}$ to ensure that no window will have a non-negative sum for $n$ steps. (b) $\playerOne$ can leave the set $A_{p}$ for some lower set $A_{p'}$, $1 \leq p' < p$. If so, play the attractor to $N_{p'}$ and continue. Ultimately, any outcome is trapped in some set $N_{p''}\setminus A_{p''-1}$, with $1 \leq p'' \leq q$ and $A_{0} = \emptyset$, as in $N_{1}$, $\playerOne$ cannot leave. There $\playerOne$ cannot prevent the window being strictly negative for $n$ steps. When such a window has been enforced for $n$ steps, move to round $n+1$ and start again. This strategy ensures that the $\bounded$ window problem is not satisfied as, infinitely often, windows stay open for arbitrary large periods along any outcome. 

Finally, we discuss the complexity of algorithm $\finiteProblemAlg$. Let $\mathbb{C}$ be the complexity of routine $\negSupTPAlg$, that is, the complexity of solving a one-dimen\-sion supremum total-payoff game. The total complexity of subalgorithm $\unbNegWindowAlg$ is $\mathcal{O}(\statesSize\cdot(\edgesSize + \mathbb{C}))$ as the sequence of computations is of length at most $\statesSize$ and each computation takes time $\mathcal{O}(\edgesSize + \mathbb{C})$. The overall complexity of $\finiteProblemAlg$ is thus $\mathcal{O}(\mathbb{C} + \statesSize\cdot(\edgesSize + \statesSize\cdot(\edgesSize + \mathbb{C}))) = \mathcal{O}(\statesSize^{2} \cdot (\edgesSize + \mathbb{C}))$.
\end{proof}

An interesting corollary of Lemma~\ref{lem:oneDimFinTech1} and Lemma~\ref{lem:oneDimFiniteNPinter} is that the sets of winning states coincide for objectives $\fixedWindowMPObjTW{0}{\sizeMax = (\statesSize - 1)\cdot(\statesSize\cdot\largestW + 1)}$ and $\finiteWindowMPObjT{0}$, therefore proving $\NPinter$-membership for the subset of fixed window problems with window size at least $\sizeMax$ (hence an algorithm independent of the window size whereas Lemma~\ref{lem:directWinAlg} gives an algorithm which is polynomial in the window size).

\begin{corollary}
\label{cor:oneDimCor}
In two-player one-dimension games, the fixed window mean-payoff problem is in $\NPinter$ for window size at least equal to $(\statesSize - 1)\cdot(\statesSize\cdot\largestW + 1)$.
\end{corollary}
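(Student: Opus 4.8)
The plan is to establish that, for every window size $\sizeMax$ at least $l_{0} := (\statesSize - 1)\cdot(\statesSize\cdot\largestW + 1)$, the set of winning states of $\playerOne$ for the fixed window objective $\fixedWindowMPObjTW{0}{\sizeMax}$ coincides with the set of winning states for the $\bounded$ window objective $\finiteWindowMPObjT{0}$. Once this is shown, the corollary is immediate: by Lemma \ref{lem:oneDimFiniteNPinter} the latter set is computed by $\finiteProblemAlg$ in $\NPinter$, so one solves the fixed window problem for any such $\sizeMax$ simply by discarding the window parameter and running $\finiteProblemAlg$.

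I would prove the equality first at the single size $\sizeMax = l_{0}$, by a double inclusion. One direction is free from the objective inclusions recalled in Section \ref{subsec:wmp_def}: since $\fixedWindowMPObjTW{0}{l_{0}} \subseteq \finiteWindowMPObjT{0}$, any strategy winning the fixed window objective also wins the $\bounded$ window objective, so the fixed window winning set is contained in $\wFP$. The converse inclusion is the heart of the matter and is already produced inside the proof of Lemma \ref{lem:oneDimFiniteNPinter}: the strategy $\strat_{1}$ built there for $\playerOne$ on $\wFP$ eventually traps every outcome in a bottom subgame and, by Lemma \ref{lem:oneDimFinTech1}, forces a non-negative sum within $l_{0}$ steps from every position in that subgame. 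Hence $\strat_{1}$ witnesses $\fixedWindowMPObjTW{0}{l_{0}}$ from every state of $\wFP$, giving the reverse inclusion and thus equality at size $l_{0}$.

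It remains to extend the equality to all $\sizeMax \geq l_{0}$, which I would do by monotonicity of the objective in the window size. Because the good window condition in eq. \eqref{eq:goodWindowObj} quantifies existentially over $\size \leq \sizeMax$, increasing $\sizeMax$ can only add plays, so $\fixedWindowMPObjTW{0}{l_{0}} \subseteq \fixedWindowMPObjTW{0}{\sizeMax} \subseteq \finiteWindowMPObjT{0}$ for every $\sizeMax \geq l_{0}$. The winning sets of the two outer objectives already coincide with $\wFP$, so the winning set of the intermediate objective is squeezed to $\wFP$ as well, completing the argument.

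The only genuinely hard step is the reverse inclusion at size $l_{0}$, and this difficulty has in fact been discharged beforehand: it is exactly the uniform polynomial bound $l_{0}$ on the closing time of windows coming from Lemma \ref{lem:oneDimFinTech1}, obtained via the simple-cycle decomposition of outcomes of a memoryless supremum total-payoff strategy. Given that lemma and the correctness of $\finiteProblemAlg$, the corollary is just a packaging argument combining the pre-established objective inclusions with monotonicity in the window size.
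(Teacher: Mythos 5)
Your proposal is correct and takes essentially the same route as the paper: the paper obtains the corollary exactly by observing that the winning sets for $\fixedWindowMPObjTW{0}{\sizeMax = (\statesSize - 1)\cdot(\statesSize\cdot\largestW + 1)}$ and $\finiteWindowMPObjT{0}$ coincide, where the nontrivial inclusion is witnessed by the strategy constructed in the proof of Lemma \ref{lem:oneDimFiniteNPinter} (which, by Lemma \ref{lem:oneDimFinTech1}, closes every window within that bound in the bottom subgame), and then using monotonicity of the objective in the window size so that $\finiteProblemAlg$ decides the fixed window problem for all larger sizes. Your write-up merely makes explicit the double inclusion and the squeezing argument that the paper states in compressed form; there is no gap.
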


\smallskip\noindent\textbf{$\Bounded$ window: lower bounds.} Algorithm $\finiteProblemAlg$ (Lemma~\ref{lem:oneDimFiniteNPinter}) provides memoryless winning strategies for $\playerOne$ (attractor + memoryless strategy for total-payoff) and infinite-memory winning strategies for $\playerTwo$ (delaying the closing of windows for increasing number of steps each round) in one-dimension $\bounded$ window mean-payoff games. Lemma~\ref{lem:oneDimFiniteMemory} states that infinite memory is necessary for $\playerTwo$, as discussed in Section~\ref{subsec:wmp_def}: $\playerTwo$ cannot use the zero cycle forever, but he must cycle long enough to defeat any finite window. Hence, its strategy needs to cycle for longer and longer, which requires infinite memory.

\begin{lemma}
\label{lem:oneDimFiniteMemory}
In one-dimension games with a $\bounded$ window mean-payoff objective, (a) memoryless strategies suffice for $\playerOne$, and (b) infinite-memory strategies are needed for $\playerTwo$ in general.
\end{lemma}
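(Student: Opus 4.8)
The plan is to prove both claims by reusing the strategy constructions of Lemma \ref{lem:oneDimFiniteNPinter}, and to back up claim (b) with the matching lower-bound example of Fig. \ref{fig:wmpEx2}.

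For assertion (a), I would show that the winning strategy already exhibited for $\playerOne$ in the proof of Lemma \ref{lem:oneDimFiniteNPinter} can be taken \emph{memoryless}. Recall that the winning region $\wFP$ is built in disjoint layers, and on the layer added at stage $m$ the prescribed behaviour of $\playerOne$ is to play an attractor towards the corresponding bottom subgame $\game \downharpoonright ((\states \setminus L^{m-1}) \setminus \wFP^{m-1})$ and then play a winning strategy for the supremum total-payoff objective inside it. A memoryless such total-payoff strategy exists in one dimension by \cite{gawlitza2009,gimbert2004}, and attractor strategies are memoryless by definition. Since every state of $\wFP$ is added at a unique stage and is then either in the attractor part or in the total-payoff core of its layer, the move prescribed in each state depends only on that state; hence the composed strategy is positional. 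It remains to recall why it wins, which is exactly the argument of Lemma \ref{lem:oneDimFiniteNPinter}: once trapped in a bottom subgame, the memoryless total-payoff strategy is winning from \emph{every} state of that subgame, so by Lemma \ref{lem:oneDimFinTech1} a non-negative window closes within $\sizeMax = (\statesSize - 1)\cdot(\statesSize\cdot\largestW + 1)$ steps from every position; the inductive property of windows closes all intermediate windows, and prefix-independence of $\finiteWindowMPObjT{0}$ discards the finite prefix spent reaching the bottom subgame.

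For assertion (b), I would use the game of Fig. \ref{fig:wmpEx2}, where $\playerOne$ is forced (the only edge out of $s_{1}$ leads to $s_{2}$) and all freedom lies with $\playerTwo$ at $s_{2}$. First I exhibit the infinite-memory spoiling strategy sketched in Sec. \ref{subsec:wmp_def}: take the weight-$0$ self-loop once on the first visit to $s_{2}$, twice on the second, and so on, always eventually returning to $s_{1}$. This keeps the window opened by the $-1$ edge out of $s_{1}$ open for an unbounded number of steps infinitely often, so no fixed bound $\sizeMax$ is ever met and the play is lost for $\playerOne$. Next I argue that \emph{no} finite-memory strategy of $\playerTwo$ spoils: since $\playerOne$ is forced, a finite-memory strategy of $\playerTwo$ yields a unique, ultimately periodic outcome $\play$, in whose periodic part the number of consecutive self-loops is bounded by the period, say by $B$. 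Every window opened at $s_{1}$ then closes upon the next return to $s_{1}$ with sum $-1+0+\dots+0+1 = 0$, i.e. within $B+2$ steps, while windows opened at $s_{2}$ close in one step. Hence the suffix satisfies the direct fixed window objective for $\sizeMax = B+2$, so by prefix-independence $\play$ is winning for $\playerOne$, meaning $\playerTwo$'s finite-memory strategy is not spoiling. Since the game is determined (the objectives are Borel, by \cite{martin_AM75}) and $\playerOne$'s unique strategy loses against the infinite-memory spoiling strategy, $\playerTwo$ wins but only with infinite memory.

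The step I expect to be most delicate is the finite-memory impossibility in (b): one must cleanly convert ``finite memory plus a forced opponent'' into ``ultimately periodic outcome with bounded self-loop excursions'', and also dispose of the degenerate case in which $\playerTwo$ eventually loops forever on the self-loop. The latter is handled by the same prefix-independence argument, since an all-zero suffix closes every window in one step. With an infinite-memory spoiling strategy winning and every finite-memory strategy losing, infinite memory is necessary for $\playerTwo$, which completes the proof.
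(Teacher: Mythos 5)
Your proof is correct and follows essentially the same route as the paper: part (a) is exactly the observation that the layered attractor-plus-memoryless-total-payoff strategy built in Lemma \ref{lem:oneDimFiniteNPinter} is positional, and part (b) uses the game of Fig. \ref{fig:wmpEx2} with the incrementally-delaying strategy of $\playerTwo$, noting that looping forever is defeated by prefix-independence. Your ultimately-periodic-outcome argument for why no finite-memory strategy of $\playerTwo$ can spoil is a clean tightening of the paper's informal remark that $\playerTwo$ ``must cycle for longer and longer, which requires infinite memory.''
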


In Lemma~\ref{lem:oneDimMPReduction}, we give a polynomial reduction from mean-payoff games to $\bounded$ window mean-payoff games, therefore showing that a polynomial algorithm for the $\bounded$ window problem would solve the long-standing question of the P-membership of the mean-payoff threshold problem. The proof relies on technical lemmas providing intermediary reductions. First, we prove that given a game $\game$, deciding if $\playerOne$ has a strategy to ensure a non-negative mean-payoff can be reduced to deciding if $\playerOne$ has a strategy to ensure a strictly positive mean-payoff when weights are shifted positively by a sufficiently small $\varepsilon$ (Lemma~\ref{lem:oneDimMPReductionTech1}). Second, we apply Lemma~\ref{lem:relationBoundedClassical} on the shifted game to prove that winning this objective implies winning the $\bounded$ window problem. This gives one direction of the reduction. For the other one, we show that given a game $\game$, if $\playerOne$ has a strategy to win the $\bounded$ window problem when weights are shifted positively by a sufficiently small $\varepsilon$, he has one to win the mean-payoff threshold problem in $\game$.

We define the following notation: given a two-player one-dimension game $\gameFullOneDim$ and $\varepsilon \in \rat$, let $\game_{+\varepsilon} = (\states_{1}, \states_{2}, \edges, \weight_{+\varepsilon})$ be the game obtained by shifting all weights by $\varepsilon$, that is, for all $e \in \edges$, $\weight_{+\varepsilon}(e) = \weight(e) + \varepsilon$.\footnote{Note that $\weight_{+\varepsilon}$ can be transformed into an integer valued function without changing the answers to the considered decision problems.}

\begin{lemma}
\label{lem:oneDimMPReductionTech1}
For all one-dimension game $\gameFullOneDim$ with integer weights, for all $\varepsilon$, $0 < \varepsilon <1/\statesSize$, for all initial state $\state \in \states$, $\playerOne$ has a strategy to ensure a non-negative mean-payoff in $\game$ if and only if $\playerOne$ has a strategy to ensure a strictly positive mean-payoff in $\game_{+\varepsilon}$.
\end{lemma}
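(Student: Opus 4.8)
The plan is to reduce both statements to a single quantity, the \emph{mean-payoff value} of the game, and then exploit the fact that this value is a rational with bounded denominator. Recall (Ehrenfeucht--Mycielski \cite{EM79}) that one-dimension mean-payoff games are determined and both players have memoryless optimal strategies; write $\nu_{\game}(\state)$ for the value from $\state$, i.e.\ the largest mean-payoff $\playerOne$ can guarantee against every strategy of $\playerTwo$. Since the infimum and supremum mean-payoff problems coincide in one dimension, ``$\playerOne$ can ensure a non-negative mean-payoff in $\game$'' is exactly ``$\nu_{\game}(\state) \geq 0$'', and ``$\playerOne$ can ensure a strictly positive mean-payoff in $\game_{+\varepsilon}$'' is exactly ``$\nu_{\game_{+\varepsilon}}(\state) > 0$''. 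Both equivalences follow from determinacy: an optimal strategy of $\playerTwo$ caps $\mpaySup$ of every outcome at the value, while an optimal strategy of $\playerOne$ forces $\mpayInf$ to be at least the value, so no strategy can beat the value and an optimal one attains it.

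First I would observe that shifting every weight by $\varepsilon$ shifts the payoff of every prefix by exactly $\varepsilon$ per step: for every play $\play$ and length $n$, $\mpay_{\game_{+\varepsilon}}(\play(n)) = \mpay_{\game}(\play(n)) + \varepsilon$, whence in the limit $\mpayInf_{\game_{+\varepsilon}}(\play) = \mpayInf_{\game}(\play) + \varepsilon$. Consequently the value shifts likewise, $\nu_{\game_{+\varepsilon}}(\state) = \nu_{\game}(\state) + \varepsilon$. The forward implication is then immediate: if $\nu_{\game}(\state) \geq 0$ then $\nu_{\game_{+\varepsilon}}(\state) = \nu_{\game}(\state) + \varepsilon \geq \varepsilon > 0$, so a winning strategy of $\playerOne$ in $\game$ is, verbatim (the game structure is unchanged), a strategy ensuring a strictly positive mean-payoff in $\game_{+\varepsilon}$; note this direction uses only $\varepsilon > 0$.

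The crux is the converse, where the hypothesis $\varepsilon < 1/\statesSize$ is essential. The key structural fact is that $\nu_{\game}(\state)$ is a rational of the form $p/q$ with $p \in \integ$ and $1 \leq q \leq \statesSize$: fixing memoryless optimal strategies for both players turns the game into a single eventually periodic path, whose repeating part is a simple cycle of length at most $\statesSize$, and the value equals the integer sum of weights along that cycle divided by its length. Hence if $\nu_{\game}(\state) < 0$ then in fact $\nu_{\game}(\state) \leq -1/\statesSize$, because the largest strictly negative rational with denominator at most $\statesSize$ is $-1/\statesSize$ (any $p/q < 0$ with $q \leq \statesSize$ has $p \leq -1$, so $p/q \leq -1/q \leq -1/\statesSize$). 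Since $\varepsilon < 1/\statesSize$, this yields $\nu_{\game}(\state) \leq -1/\statesSize < -\varepsilon$, i.e.\ $\nu_{\game_{+\varepsilon}}(\state) = \nu_{\game}(\state) + \varepsilon < 0$, contradicting the assumption that $\playerOne$ ensures a strictly positive mean-payoff in $\game_{+\varepsilon}$. Contrapositively, $\nu_{\game_{+\varepsilon}}(\state) > 0$ forces $\nu_{\game}(\state) \geq 0$, which is exactly what is required.

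The main obstacle is the rigorous justification of the bounded-denominator claim: one must argue that the guaranteed value is genuinely attained by a simple cycle (length $\leq \statesSize$, integer total weight) rather than being some arbitrary real arising only as a $\liminf$. This rests entirely on the existence of memoryless optimal strategies for both players, which reduces the interaction to a lasso-shaped path looping on a simple cycle whose mean is precisely the value; together with integrality of the original weights this pins the value into $\{p/q : 1 \leq q \leq \statesSize\}$. Everything else is the elementary ``gap'' observation that no game value can lie in the open interval $(-1/\statesSize, 0)$, so that the threshold $-\varepsilon$ with $\varepsilon < 1/\statesSize$ cannot separate a negative value from $0$.
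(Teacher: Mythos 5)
Your proof is correct, but it is organized differently from the paper's. The paper argues by direct strategy transfer: by memoryless determinacy, either $\playerOne$ has a memoryless strategy winning the non-negative mean-payoff objective in $\game$, or $\playerTwo$ has a memoryless strategy forcing a strictly negative mean-payoff; in the first case every simple cycle consistent with that strategy has integer sum $\geq 0$, hence sum $\geq n\cdot\varepsilon > 0$ in $\game_{+\varepsilon}$, and in the second case every consistent simple cycle has sum $\leq -1$, hence $\leq -1 + n\cdot\varepsilon < 0$ in $\game_{+\varepsilon}$ because $n \leq \statesSize$ and $\varepsilon < 1/\statesSize$; so the very same strategy witnesses the corresponding property in the shifted game, and determinacy closes the argument. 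You instead route everything through the game value $\nu$: shift-equivariance $\nu_{\game_{+\varepsilon}}(\state) = \nu_{\game}(\state) + \varepsilon$ together with the bounded-denominator (``gap'') property that $\nu_{\game}(\state) \in \{p/q \,:\, p \in \integ,\ 1 \leq q \leq \statesSize\}$, so no value can lie in the open interval $(-1/\statesSize, 0)$. Both proofs rest on the same two pillars --- memoryless determinacy and the arithmetic of integer-weighted simple cycles of length at most $\statesSize$ --- and your lasso argument for the denominator bound is essentially the paper's cycle analysis specialized to the single play obtained when both players fix optimal memoryless strategies. What your version buys is modularity and a transparent role for the hypothesis $\varepsilon < 1/\statesSize$ as a gap condition on values; what the paper's version buys is explicitness, since it exhibits one and the same strategy as a witness in both games without invoking the existence and attainment of game values.
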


\begin{proof}
Consider a memoryless winning strategy of $\playerOne$ in $\game$ from initial state $\state \in \states$. All simple cycles in consistent outcomes have a sum of weights at least equal to zero. Hence, the corresponding outcome in $\game_{+\varepsilon}$ is such that all simple cycles of length $n$ have sums at least equal to $n\cdot \varepsilon > 0$, which proves that the strategy is also winning in $\game_{+\varepsilon}$.

Consider a memoryless winning strategy of $\playerTwo$ in $\game$ from initial state $\state \in \states$. All simple cycles in consistent outcomes have a strictly negative sum of weights, that is the sum is at most equal to $-1$. Hence, the corresponding outcome in $\game_{+\varepsilon}$ is such that all simple cycles of length $n$ have sums at most equal to $-1 + n\cdot \varepsilon$. Since $n \leq \statesSize$ and $\varepsilon <1/\statesSize$, we have that the sum is strictly negative,  which proves that the strategy is also winning in $\game_{+\varepsilon}$.

By determinacy of mean-payoff games, we obtain the claim.
\end{proof}

\begin{lemma}
\label{lem:oneDimMPReductionTech2}
For all one-dimension game $\gameFullOneDim$ with integer weights, for all $\varepsilon$, $0 < \varepsilon <1/\statesSize$, for all initial state $\state \in \states$, if $\playerOne$ has a strategy to win the $\bounded$ window mean-payoff problem in $\game_{+\varepsilon}$, then $\playerOne$ has a strategy to win the mean-payoff threshold problem in $\game$.
\end{lemma}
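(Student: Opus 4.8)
The plan is to use Lemma~\ref{lem:relationBoundedClassical}, assertion~\textit{(a)}, as a bridge from the window objective to the classical mean-payoff objective in the \emph{shifted} game, and then to exploit that $\varepsilon$ is too small to separate the integral cycle values of $\game$ from zero. Concretely, suppose $\playerOne$ has a winning strategy for the $\bounded$ window mean-payoff problem in $\game_{+\varepsilon}$ from $\state$. Applying Lemma~\ref{lem:relationBoundedClassical}\textit{(a)} to the game $\game_{+\varepsilon}$ (whose rational weights may be cleared to integers by scaling, without affecting any of the decision problems) yields that $\playerOne$ wins the mean-payoff threshold problem for threshold $0$ in $\game_{+\varepsilon}$. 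By memoryless determinacy of mean-payoff games~\cite{EM79}, I would fix a \emph{memoryless} winning strategy $\strat_{1} \in \stratsMemoryless_{1}$ of $\playerOne$ ensuring $\mpayInf \geq 0$ in $\game_{+\varepsilon}$ from $\state$.

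Next I would transfer $\strat_{1}$ verbatim to $\game$ and reason on its consistent outcomes through their simple cycles. Since $\strat_{1}$ is memoryless, every outcome consistent with it lives in the residual one-player subgraph obtained by fixing $\playerOne$'s choices. I claim every simple cycle reachable in this subgraph has non-negative weight-sum \emph{in $\game$}. First, such a cycle must have non-negative mean \emph{in $\game_{+\varepsilon}$}: otherwise $\playerTwo$ could force the play to reach it and loop there forever (again using that $\strat_{1}$ is memoryless), producing a mean-payoff strictly below $0$ in $\game_{+\varepsilon}$ and contradicting that $\strat_{1}$ is winning there. Thus a cycle $\cycle$ of length $n \leq \statesSize$ has, in $\game_{+\varepsilon}$, a weight-sum equal to its integer weight-sum in $\game$ plus $n\cdot\varepsilon$, and this quantity is at least $0$.

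The core of the argument is then the integrality/smallness estimate: the weight-sum of $\cycle$ in $\game$ is therefore at least $-n\cdot\varepsilon > -n/\statesSize \geq -1$, using $n \leq \statesSize$ and $\varepsilon < 1/\statesSize$. Being an integer strictly greater than $-1$, it is in fact non-negative. Hence \emph{all} simple cycles reachable under $\strat_{1}$ have non-negative sum in $\game$, so the minimum cycle mean in the residual subgraph is non-negative. Since the infimum mean-payoff of any infinite play in a finite weighted graph is bounded below by its minimum reachable cycle mean, $\strat_{1}$ guarantees $\mpayInf \geq 0$ in $\game$ as well, i.e., $\playerOne$ wins the mean-payoff threshold problem in $\game$.

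I expect the main obstacle to be the careful handling of the reduction to cycles: justifying that, against a fixed memoryless $\playerOne$ strategy, it suffices to inspect simple cycles of the residual one-player graph --- both to derive non-negativity of every reachable cycle from the win in $\game_{+\varepsilon}$, and to lower-bound the limit-inferior mean-payoff by the minimum reachable cycle mean. The $\varepsilon$-smallness computation itself is routine once the cycle viewpoint is in place, and it mirrors the forward direction of Lemma~\ref{lem:oneDimMPReductionTech1}.
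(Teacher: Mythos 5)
Your proof is correct, and after its first step it takes a genuinely different route from the paper's. Both arguments open identically: apply Lemma~\ref{lem:relationBoundedClassical}, assertion~(a), in $\game_{+\varepsilon}$ (with the same scaling remark the paper makes in a footnote) to obtain that $\playerOne$ can ensure a non-negative mean-payoff in $\game_{+\varepsilon}$, and both ultimately rest on the same integrality-plus-smallness estimate: a cycle of length $n \leq \statesSize$ whose shifted sum is non-negative has integer sum greater than $-1$, hence at least $0$. The difference is in how that estimate is deployed. The paper restates the hypothesis as ``$\playerOne$ ensures mean-payoff at least $-\varepsilon > -1/\statesSize$ in $\game$'' and then argues by contradiction at the level of values, invoking memoryless determinacy on $\playerTwo$'s side: if $\playerOne$ could not ensure threshold $0$ in $\game$, then $\playerTwo$ would have a memoryless strategy forcing every simple cycle to have integer sum at most $-1$, hence mean-payoff at most $-1/\statesSize$, clashing with the bound above. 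You instead argue constructively on $\playerOne$'s side: fix a memoryless strategy winning in $\game_{+\varepsilon}$, pass to the residual graph it induces, show every reachable simple cycle has non-negative integer sum in $\game$ (else $\playerTwo$ reaches it and loops), and conclude by the standard cycle-decomposition bound that the very same strategy ensures $\mpayInf \geq 0$ in $\game$. Your version needs memoryless determinacy only in $\game_{+\varepsilon}$ and avoids any appeal to $\playerTwo$'s optimal strategies in $\game$, at the price of carrying out the residual-graph and cycle-decomposition analysis explicitly --- steps that are routine and already used elsewhere in the paper (e.g., in the proofs of Lemma~\ref{lem:relationBoundedClassical} and Lemma~\ref{lem:oneDimFinTech1}). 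The paper's value-based contradiction is shorter; yours buys a concrete transferred witness, namely that the mean-payoff-optimal memoryless strategy of $\game_{+\varepsilon}$ is itself winning in $\game$, mirroring (as you note) the forward direction of Lemma~\ref{lem:oneDimMPReductionTech1}.
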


\begin{proof}
Assume there exists a winning strategy of $\playerOne$ for the $\bounded$ window mean-payoff problem in $\game_{+\varepsilon}$ from initial state $\state \in \states$. By Lemma~\ref{lem:relationBoundedClassical}, assertion (a), we have that this strategy ensures a non-negative mean-payoff in $\game_{+\varepsilon}$. By shifting weights by $-\varepsilon$, this can be equivalently expressed as (Prop. A) the existence of a strategy of $\playerOne$ ensuring a mean-payoff at least equal to $-\varepsilon$ in the game $\game$.

For sufficiently small values of $\varepsilon$, that is for $0 < \varepsilon <1/\statesSize$, we claim that (Prop. A) implies that (Prop. B) $\playerOne$ has a strategy to ensure a non-negative mean-payoff in $\game$. By contradiction, assume this implication is false, that is we have that (Prop. A) is true and (Prop. B) is not. It implies the following.
\begin{itemize}
\item \textit{(Prop. A) is true}: $\playerOne$ has a memoryless strategy to ensure that the mean-payoff is at least equal to $-\varepsilon$, i.e., strictly greater than $-1/\statesSize$.
\item \textit{(Prop. B) is false}: $\playerTwo$ has a memoryless strategy to ensure that all simple cycles in consistent outcomes have a sum of weights at most $-1$. Hence, this strategy ensures a mean-payoff at most equal to $-1/\statesSize$.
\end{itemize}
Obviously, it is not possible to have both (Prop. A) true and (Prop. B) false for any initial state $\state \in \states$, hence proving our claim.
\end{proof}

\begin{lemma}
\label{lem:oneDimMPReduction}
The one-dimension mean-payoff problem reduces in polynomial time to the $\bounded$ window mean-payoff problem.
\end{lemma}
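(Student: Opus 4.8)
The plan is to compose the two technical lemmas (Lemmas \ref{lem:oneDimMPReductionTech1} and \ref{lem:oneDimMPReductionTech2}) with assertion (b) of Lemma \ref{lem:relationBoundedClassical} into a cycle of implications whose two endpoints are the decision problems to be related. Given an instance of the one-dimension mean-payoff problem, namely a game $\gameFullOneDim$ with initial state $\state$ and threshold zero, I construct the shifted game $\game_{+\varepsilon}$ for a fixed $\varepsilon$ with $0 < \varepsilon < 1/\statesSize$ and output it as the instance of the $\bounded$ window mean-payoff problem. Concretely, to preserve integer weights I take $\varepsilon = 1/(\statesSize+1)$ and realize $\game_{+\varepsilon}$ through the weight function $\weight' = (\statesSize+1)\cdot\weight + 1$, so that $\weight' = (\statesSize+1)\cdot\weight_{+\varepsilon}$. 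Since multiplying all weights by a fixed positive constant preserves the sign of every partial sum, the good window objective of threshold zero—and hence every window objective derived from it—coincides for $\game'$ and $\game_{+\varepsilon}$; likewise the mean-payoff threshold problem for threshold zero is invariant under positive rescaling. The transformation leaves the graph untouched and only scales each weight by $(\statesSize+1)$ and adds a constant, so the binary encoding grows by at most $\mathcal{O}(\log \statesSize)$ bits per weight, and the reduction is computable in polynomial time.

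It remains to argue correctness, i.e.\ that $\playerOne$ wins the mean-payoff threshold problem in $\game$ from $\state$ if and only if $\playerOne$ wins the $\bounded$ window mean-payoff problem in $\game_{+\varepsilon}$ from $\state$. For the forward direction, assume $\playerOne$ wins the mean-payoff threshold problem in $\game$. By Lemma \ref{lem:oneDimMPReductionTech1}, $\playerOne$ then has a strategy ensuring a strictly positive mean-payoff in $\game_{+\varepsilon}$. Because memoryless strategies suffice in mean-payoff games, the value secured by $\playerOne$ is governed by the averages of the simple cycles of a consistent memoryless outcome, all of which are strictly positive; hence this value is some fixed positive rational $\varepsilon' > 0$, and the answer to the mean-payoff threshold problem for threshold $\varepsilon'$ is $\yes$ in $\game_{+\varepsilon}$. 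Assertion (b) of Lemma \ref{lem:relationBoundedClassical} then yields that $\playerOne$ wins the $\bounded$ window mean-payoff problem in $\game_{+\varepsilon}$. For the backward direction, assume $\playerOne$ wins the $\bounded$ window mean-payoff problem in $\game_{+\varepsilon}$; this is exactly the hypothesis of Lemma \ref{lem:oneDimMPReductionTech2}, whose conclusion is that $\playerOne$ wins the mean-payoff threshold problem in $\game$. Chaining the two directions closes the equivalence.

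The only genuinely delicate point—and the step I would treat most carefully—is the passage through assertion (b) of Lemma \ref{lem:relationBoundedClassical}, which requires not merely a strictly positive mean-payoff but the existence of a \emph{fixed} strictly positive threshold that $\playerOne$ can guarantee. This is where memoryless determinacy of mean-payoff games is essential: it certifies that the secured value is bounded away from zero (indeed it is at least the minimum positive simple-cycle mean, an explicit rational in $\game_{+\varepsilon}$), rather than merely tending to a positive limit along some play. Everything else is routine bookkeeping—the invariance of both objectives under positive rescaling of the weights, and the verification that the integer realization of the shift costs only polynomially many additional bits. Since the one-dimension mean-payoff problem is not known to lie in P, this reduction shows that a polynomial-time algorithm for the $\bounded$ window problem would settle its membership in P as well.
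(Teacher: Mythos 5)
Your proof is correct and follows essentially the same route as the paper: shift the weights by $\varepsilon$ with $0 < \varepsilon < 1/\statesSize$, obtain the forward direction by chaining Lemma \ref{lem:oneDimMPReductionTech1} with assertion (b) of Lemma \ref{lem:relationBoundedClassical}, and get the backward direction directly from Lemma \ref{lem:oneDimMPReductionTech2}. The two points you elaborate on---the explicit integer realization of $\weight_{+\varepsilon}$ and the justification that a strictly positive guaranteed mean-payoff yields a \emph{fixed} positive threshold via memoryless determinacy---are handled by the paper only in a footnote and implicitly, respectively, so your additional care is sound but not a departure in method.
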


\begin{proof}
Let $\gameFullOneDim$ be a game with integer weights, and $\initState \in \states$ be the initial state. Let $\varepsilon$ be any rational value such that $0 < \varepsilon < 1/\statesSize$. We claim that the answer to the mean-payoff threshold problem in $\game$ is $\yes$ if and only if the answer to the $\bounded$ window mean-payoff problem in $\game_{+\varepsilon}$ is $\yes$.

The left-to-right implication is proved in two steps. Assume the answer to the mean-payoff threshold problem in $\game$ is $\yes$. First, by Lemma~\ref{lem:oneDimMPReductionTech1}, we have that $\playerOne$ has a strategy to ensure a strictly positive mean-payoff in $\game_{+\varepsilon}$. Second, by Lemma~\ref{lem:relationBoundedClassical}, assertion (b), this implies that the answer to the $\bounded$ window mean-payoff problem in $\game_{+\varepsilon}$ is $\yes$.

The right-to-left implication is straightforward application of Lemma~\ref{lem:oneDimMPReductionTech2}.
\end{proof}

\begin{remark}
The reduction established in Lemma~\ref{lem:oneDimMPReduction} cannot be reversed in order to solve $\bounded$ window mean-payoff games via classical mean-payoff games. Indeed, the reduction relies on the absence of simple cycles of value zero in the game $\game_{+\varepsilon}$, which is not verified in general if the reduction starts from arbitrary $\bounded$ window mean-payoff games. Indeed it does not suffice to shift the weights symmetrically by $-\varepsilon$ to obtain an equivalent mean-payoff game, as witnessed by Fig.~\ref{fig:wmpEx1}, for which any negative shift gives a game losing for the mean-payoff threshold problem, while the $\bounded$ window problem on the original game is satisfied.
\end{remark}

\smallskip\noindent\textbf{$\Bounded$ window: summary.} We close our study of two-player one-dimension games with Theorem~\ref{thm:oneDimFinite}.

\begin{theorem}
\label{thm:oneDimFinite}
In two-player one-dimension games, the $\bounded$ window mean-payoff problem is in $\NPinter$ and at least as hard as mean-payoff games. Memoryless strategies suffice for $\playerOne$ and infinite-memory strategies are required for $\playerTwo$ in general.
\end{theorem}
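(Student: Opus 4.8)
The plan is to assemble the three conjuncts of the theorem from the lemmas already established in this subsection, since each part of the statement corresponds to exactly one of them. The $\NPinter$ upper bound is Lemma \ref{lem:oneDimFiniteNPinter}; the lower bound (``at least as hard as mean-payoff games'') is Lemma \ref{lem:oneDimMPReduction}; and the memory characterization is Lemma \ref{lem:oneDimFiniteMemory}. Thus the body of the proof is a short combination argument, and all genuine content has been pushed into those lemmas and their supporting technical results (Lemmas \ref{lem:oneDimFinTech1} and \ref{lem:oneDimFinTech2}, together with Corollary \ref{cor:oneDimCor}).

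For the complexity upper bound I would argue membership in $\NPinter$ as follows. Lemma \ref{lem:oneDimFiniteNPinter} shows that $\finiteProblemAlg$ decides the winner in time $\mathcal{O}(\statesSize^{2} \cdot (\edgesSize + \mathbb{C}))$, where $\mathbb{C}$ is the cost of solving a one-dimension supremum total-payoff game (routine $\negSupTPAlg$), and the latter is in $\NPinter$ by \cite{gawlitza2009}. The one point needing care is that a polynomial-time procedure making polynomially many adaptive calls to an $\NPinter$ oracle is a priori only in $P^{\NPinter}$, not obviously in $\NPinter$ itself. I would resolve this via memoryless determinacy: by Lemma \ref{lem:oneDimFiniteMemory}, $\playerOne$ has a memoryless winning strategy, and supremum total-payoff admits memoryless optimal strategies for both players \cite{gimbert2004}, so the entire fixpoint witness (the layered winning sets produced by $\finiteProblemAlg$ and $\unbNegWindowAlg$, together with the memoryless strategies certifying each total-payoff subcomputation) has polynomial size and is verifiable in polynomial time. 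This yields an NP certificate for ``$\playerOne$ wins,'' and the dual certificate built from $\playerTwo$'s memoryless total-payoff strategies gives the coNP side, placing the problem in $\NPinter$.

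For the lower bound I would simply invoke Lemma \ref{lem:oneDimMPReduction}: the polynomial reduction from mean-payoff games to $\bounded$ window mean-payoff games immediately shows that a polynomial algorithm for the latter would solve mean-payoff games in polynomial time, so the $\bounded$ window problem is at least as hard. The memory claims are read off directly from Lemma \ref{lem:oneDimFiniteMemory}. I expect no serious obstacle at the level of this theorem itself; the real difficulty has already been discharged inside Lemma \ref{lem:oneDimFiniteNPinter}---the correctness of $\finiteProblemAlg$ and its subroutine $\unbNegWindowAlg$, in particular the construction of $\playerTwo$'s round-based infinite-memory spoiling strategy that keeps windows open for arbitrarily long---and inside the reduction of Lemma \ref{lem:oneDimMPReduction}, which rests on the $\varepsilon$-shift argument of Lemmas \ref{lem:oneDimMPReductionTech1} and \ref{lem:oneDimMPReductionTech2}.
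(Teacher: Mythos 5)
Your proposal follows exactly the paper's route: Theorem~\ref{thm:oneDimFinite} is a summary statement whose three claims are precisely Lemma~\ref{lem:oneDimFiniteNPinter} (membership in $\NPinter$), Lemma~\ref{lem:oneDimMPReduction} (hardness via the $\varepsilon$-shift reduction), and Lemma~\ref{lem:oneDimFiniteMemory} (memory bounds), combined with no further argument. Your additional care about the adaptive calls to the $\negSupTPAlg$ oracle --- resolved via polynomial-size memoryless certificates, or equivalently the standard fact that $\text{P}^{\NPinter} = \NPinter$ --- addresses a point the paper leaves implicit when it asserts $\NPinter$ membership at the end of Lemma~\ref{lem:oneDimFiniteNPinter}, and only strengthens the argument.
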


\subsection{\textbf{Games with $\dimension$ dimensions}}
\label{subsec:wmp_multiDim}

In this section, we address the case of two-player games with multi-dimension weights. For the \textit{fixed window mean-payoff problem}, we first present an EXPTIME algorithm that computes the winning states of $\playerOne$. We also establish lower bounds on the complexity of the fixed window problem: we show that the problem is EXPTIME-hard (both in the case of fixed weights and arbitrary dimensions, and in the case of a fixed number of dimensions and arbitrary weights) for arbitrary window sizes, whereas it is PSPACE-hard for polynomial window sizes. We show that exponential memory is both sufficient and necessary in general for both players, even for polynomial window sizes. For the \textit{$\bounded$ window mean-payoff problem}, we prove non-primitive recursive hardness.

\smallskip\noindent\textbf{Fixed window: algorithm.} We start by providing an EXPTIME algorithm via a reduction from a fixed window mean-payoff game $\gameFull$ to an exponentially larger unweighted \coBuchi game $\gameCB$ (where the objective of $\playerOne$ is to avoid visiting a set of bad states infinitely often).

\begin{lemma}
\label{lem:multiDimReducCB}
The fixed window mean-payoff problem over a multi-weighted game $\game$ reduces in exponential time to the \coBuchi problem on an exponentially larger game $\gameCB$.
\end{lemma}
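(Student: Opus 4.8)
The plan is to build $\gameCB$ by augmenting each state of $\game$ with a finite summary of the windows currently open in each dimension, together with a one‑bit flag recording whether a window has just timed out; the \coBuchi objective for $\playerOne$ will then be to visit the flagged states only finitely often. Concretely, a state of $\gameCB$ is a tuple $(\state, (\sigma_{1}, \tau_{1}), \ldots, (\sigma_{\dimension}, \tau_{\dimension}), b)$ where $\state \in \states$, and for each dimension $t$ the pair $(\sigma_{t}, \tau_{t})$ records the running sum $\sigma_{t}$ and the number of elapsed steps $\tau_{t}$ of the \emph{earliest still‑open} window in dimension $t$ (with a distinguished value encoding ``no open window''), and $b \in \{0,1\}$. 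By the inductive property of windows recalled above, it suffices to track this single earliest window per dimension: once it closes, every window opened in between has closed as well. The ownership of $(\state, \ldots)$ is inherited from $\state$ ($\playerOne$ if $\state \in \statesOne$, $\playerTwo$ otherwise), so $\gameCB$ is a genuine turn‑based game.

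For the transitions, from $(\state, \ldots)$ each edge $(\state, \state') \in \edges$ of weight $\weight$ induces one edge of $\gameCB$ obtained by updating every coordinate independently. In dimension $t$, if no window is open and $\weight(t) \geq 0$ the coordinate stays empty; if $\weight(t) < 0$ a window opens with sum $\weight(t)$ and one elapsed step; if a window is open we add $\weight(t)$ to $\sigma_{t}$ and increment $\tau_{t}$, closing the window (resetting the coordinate) as soon as the sum becomes non‑negative. If instead $\tau_{t}$ reaches $\sizeMax$ with the sum still negative, the window \emph{times out}: we reset that coordinate to empty and set $b' = 1$ (the edge causing the timeout does not itself open a new window; the next edge will). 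Thus $b'=1$ iff at least one dimension times out on this edge. Since $\vert\sigma_{t}\vert \leq \sizeMax \cdot \largestW$ and $1 \leq \tau_{t} \leq \sizeMax$, each coordinate has $\mathcal{O}(\sizeMax^{2} \cdot \largestW)$ values, so $\gameCB$ has $\mathcal{O}(\statesSize \cdot (\sizeMax^{2}\cdot\largestW)^{\dimension})$ states and is built in exponential time ($\sizeMax$ and $\largestW$ being encoded in binary). The \coBuchi set is the set of states with $b = 1$, and the initial state is $(\initState, \ldots, 0)$ with all coordinates empty.

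Correctness hinges on the deterministic correspondence between a play $\play$ of $\game$ and its unique lift $\playCB$ in $\gameCB$: I would show that $\play \in \fixedWindowMPObjT{0}$ iff $\playCB$ visits the flagged states only finitely often. For the forward direction, if some suffix $\playSuffix{i}$ lies in $\directFixedWindowMPObjT{0}$, then every window opened at a position $\geq i$ closes within $\sizeMax$ steps; since each reset pushes the open‑position of the next tracked window forward past the previous timeout position, the open‑positions are non‑decreasing, so any timeout whose tracked window opened at a position $\geq i$ would contradict this, leaving only finitely many timeouts. For the converse, if timeouts cease after the last reset at some position $N_{0}$, then from $N_{0}$ on every tracked window closes within $\sizeMax$ steps; the inductive property of windows upgrades this to: every window from every position $\geq N_{0}$ closes, i.e. $\playSuffix{N_{0}} \in \directFixedWindowMPObjT{0}$, hence $\play \in \fixedWindowMPObjT{0}$. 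Because the lift is deterministic, strategies of either player transfer verbatim between $\game$ and $\gameCB$, so $\playerOne$ wins $\fixedWindowMPObjT{0}$ from $\initState$ iff he wins the \coBuchi game from the initial lifted state.

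The main obstacle I expect is the bookkeeping around the reset‑after‑timeout semantics: I must check that resetting (rather than continuing to track the failed window, or tracking all windows simultaneously) neither misses a genuine failure of the prefix‑independent objective nor spuriously reports one. This is exactly where prefix‑independence and the inductive property of windows do the work — prefix‑independence lets me discard the finitely many pre‑$N_{0}$ timeouts, and the inductive property guarantees that monitoring only the earliest open window per dimension faithfully witnesses the good‑window condition at every intermediate position. The delicate step is the forward direction, and specifically the argument that consecutive tracked windows have non‑decreasing open‑positions, so that infinitely many timeouts would force infinitely many genuinely failing windows opened past position $i$.
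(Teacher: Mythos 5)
Your proposal is correct and takes essentially the same approach as the paper: augment each state of $\game$ with a per-dimension pair (running sum since the last reset, step counter bounded by $\sizeMax$), flag a bad event when a window times out, and reduce to a \coBuchi condition (finitely many bad events), with correctness resting on the inductive property of windows plus prefix-independence, exactly as in the paper's proof. The only differences are cosmetic --- you record timeouts with a flag bit and reset only the timed-out dimension, whereas the paper uses dedicated bad states $\sink_{1},\ldots,\sink_{\statesSize}$ and resets all dimensions after a bad visit --- and neither change affects the argument or the exponential state-space bound.
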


Recall that a winning play is such that, starting in some position $i \geq 0$, in all dimensions, all opening windows are closed in at most $\sizeMax$ steps.
We keep a counter of the sum over the sequence of edges and as soon as it turns non-negative (in at most $\sizeMax$ steps), we reset the sum counter and start a new sequence (which also must become non-negative in at most $\sizeMax$ steps). Hence, the reduction is based on accounting for each dimension the current negative sum of weights since the last reset, and the number of steps that remain to achieve a non-negative sum. This accounting is encoded in the states of $\gameCBFull$, as from the original state space $\states$, we go to the extended state space $\states \times (\{-\sizeMax\cdot\largestW, \ldots{}, 0\} \times \{1, \ldots{}, \sizeMax\})^{\dimension}$: states of $\gameCB$ are tuples representing a state of $\game$ and the current status of open windows in all dimensions (sum and remaining steps). 
We add states reached whenever a window reaches its maximum size $\sizeMax$ without closing. We label those as \textit{bad} states. We have one bad state for every state of $\game$.
Transitions in $\gameCB$ are built in order to accurately model the effect of transitions of $\game$ on open windows.
Clearly, a play is winning for the fixed window problem if and only if the corresponding play in $\gameCB$ is winning for the \coBuchi objective that asks that the set of bad states is not visited infinitely often, as that means that from some point on, all windows close in the required number of steps.

\begin{proof}
Let $\gameFull$ be a game with objective $\fixedWindowMPObjTW{\zeroVector}{\sizeMax \in \natStrict}$ and initial state $\initState \in \states$. Let $\largestW$ denote the maximal absolute value of any edge in $\edges$. We construct the unweighted game $\gameCBFull$ in the following way.
\begin{itemize}
\item $\statesCB_{1} = \left( \states_{1} \times \left( \{ -\largestW \cdot \sizeMax, \ldots{}, 0\} \times \{1, \ldots{}, \sizeMax\}\right)^{\dimension}\right)  \cup \{ \sink_{1}, \ldots{}, \sink_{\statesSize}\}$. States $\sink_{1}, \ldots{}, \sink_{\statesSize}$ denote special added \textit{bad sta\-tes}, one for each of the original states $\state_{1}, \ldots{}, \state_{\statesSize} \in \states$. The other states are built as tuples that represent (a) a visited state in $\game$, (b) for each dimension, a couple modeling (b.1) the current sum of weights since the last time the sum in this dimension was non-negative, and (b.2) the number of steps that remain to reach a non-negative sum in this dimension (i.e., before reaching the maximum window size).
\item $\statesCB_{2} = \states_{2} \times \left( \{ -\largestW \cdot \sizeMax, \ldots{}, 0\} \times \{1, \ldots{}, \sizeMax\}\right)^{\dimension} $.
\item We construct the edges $((\state_{a}, (\sumCB^{1}_{a}, \stepsCB^{1}_{a}), \ldots{}, (\sumCB^{\dimension}_{a}, \stepsCB^{\dimension}_{a})), (\state_{b}, (\sumCB^{1}_{b}, \stepsCB^{1}_{b}), \ldots{}, (\sumCB^{\dimension}_{b}, \stepsCB^{\dimension}_{b}))$ of $\edgesCB$ as follows. For all $(\state_{a}, \state_{b}) \in \edges$, let $\we = \weight((\state_{a}, \state_{b}))$, we have
\begin{itemize}
\item $((\state_{a}, (\sumCB^{1}_{a}, \stepsCB^{1}_{a}), \ldots{}, (\sumCB^{\dimension}_{a}, \stepsCB^{\dimension}_{a})), \sink_{b}) \in \edgesCB$, with $\sink_{b}$ the bad state associated to state $s_{b}$, iff $\exists\, t,\, 1 \leq t \leq \dimension$ such that $\stepsCB_{a}^{t} = 1$ and $\sumCB_{a}^{t} + \we(t) < 0$,
\item $((\state_{a}, (\sumCB^{1}_{a}, \stepsCB^{1}_{a}), \ldots{}, (\sumCB^{\dimension}_{a}, \stepsCB^{\dimension}_{a})), (\state_{b}, (\sumCB^{1}_{b}, \stepsCB^{1}_{b}), \ldots{}, (\sumCB^{\dimension}_{b}, \stepsCB^{\dimension}_{b})) \in \edgesCB$ iff $\forall\, t,\, 1 \leq t \leq \dimension$, we have
\begin{itemize}
\item if $\sumCB^{t}_{a} + \we(t) \geq 0$ then $\sumCB^{t}_{b} = 0, \stepsCB_{b}^{t} = \sizeMax$,
\item if $\sumCB^{t}_{a} + \we(t) < 0 \,\wedge\, \stepsCB_{a}^{t} > 1$ then $\sumCB^{t}_{b} = \sumCB^{t}_{a} + \we(t), \stepsCB_{b}^{t} = \stepsCB_{a}^{t} - 1$,
\end{itemize}
\end{itemize}
and we add edges $(\sink_{i}, (\state_{i}, (0, \sizeMax, \ldots{}, (0, \sizeMax))$ to $\edgesCB$ for all states $\state_{i} \in \states$.
\end{itemize}

Intuitively, the game $\gameCB$ is built by unfolding the game $\game$ and integrating the current sum of weights in the states of $\gameCB$, as well as the number of steps that remain to close a window, both for each dimension separately. The game~$\gameCB$ starts in the initial state $(\initState, (0, \sizeMax), \ldots{}, (0, \sizeMax))$, and each time a transition $(s, s')$ in the original game $\game$ is taken, the game $\gameCB$ is updated to a state $(s', (\sumCB^{1}, \stepsCB^{1}), \ldots{}, (\sumCB^{\dimension}, \stepsCB^{\dimension}))$ such that (a) if the current sum becomes positive in a dimension $t$, the corresponding sum counter is reset to zero and the step counter is reset to its maximum value, $\sizeMax$, (b) if the sum is still strictly negative in a dimension $t$ and the window for this dimension is not at its maximal size, the sum is updated and the step counter is decreased, and (c) if the sum stays strictly negative and the maximal size is reached in any dimension, the game visits the corresponding bad state and then, all counters are reset for all dimensions.

We argue that a play $\play$ in $\game$ is winning for the fixed window mean-payoff objective if and only if the corresponding play $\playCB$ in $\gameCB$ is winning for the \coBuchi objective asking not to visit the set $\states_{\sink} = \{ \sink_{1}, \ldots{}, \sink_{\statesSize}\}$ infinitely often. Indeed, consider a play $\play$ winning for objective $\fixedWindowMPObjTW{\zeroVector}{\sizeMax}$. By Eq.~\eqref{eq:fixedWindowObj}, this play only sees a finite number of bad windows (windows that are not closed in $\sizeMax$ steps in some dimension). By construction of $\gameCB$, the corresponding play $\playCB$ only visits the set $\states_{\sink}$ a finite number of times, hence it is winning for the \coBuchi objective. Now, let $\playCB$ be a winning play for the \coBuchi objective. By definition, there exists a position $i$ in $\playCB$ such that all states appearing after position $i$ belong to $\states \setminus \states_{\sink}$. It remains to prove that for any position $j \geq i$, for any dimension $t$, $1 \leq t \leq k$, there is a valid window of size at most $\sizeMax$. Again we use the inductive property of windows. We know by construction that a reset of the sum happens in at most $\sizeMax$ steps, otherwise we go to a bad state. Assume $j$ is a position with a sum counter of zero in some dimension $t$, and $j'$ is the next such position. Since resets are done \textit{as soon as} the sum becomes non-negative, all suffixes of the sequence from $j$ to $j'$ are non-negative. Hence, it is clear that for all position~$j''$, $j < j'' < j'$, the window from $j''$ to $j'$ in dimension $t$ is closed. Consequently, the corresponding play $\play$ in~$\game$ is winning for the fixed window mean-payoff objective of threshold $0$ and window size $\sizeMax$.
\end{proof}

As a direct corollary of this reduction, we obtain an EXPTIME algorithm to solve the fixed window mean-payoff problem on multi-dimension games, as solving \coBuchi games takes quadratic time in the size of the game~\cite{DBLP:journals/jacm/ChatterjeeH14}.

\begin{corollary}
\label{cor:multiDimEXPTIMEeasy}
Given a two-player multi-dimension game $\gameFull$ and a window size $\sizeMax \in \natStrict$, the fixed window mean-payoff problem can be solved in time $\mathcal{O}(\statesSize^{2} \cdot (\sizeMax)^{4\cdot \dimension} \cdot \largestW^{2\cdot \dimension})$ via a reduction to \coBuchi games.
\end{corollary}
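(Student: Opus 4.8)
The plan is to read the result off directly from the reduction of Lemma~\ref{lem:multiDimReducCB} together with the known complexity of solving \coBuchi games, so that the whole argument is quantitative bookkeeping rather than a new construction. First I would invoke Lemma~\ref{lem:multiDimReducCB}: it already guarantees that $\playerOne$ wins the fixed window mean-payoff objective from $\initState$ in $\game$ if and only if $\playerOne$ wins the \coBuchi objective (avoid the bad states $\states_{\sink}$ infinitely often) in the constructed unweighted game $\gameCB$. Hence correctness is free, and the only remaining task is to measure the size of $\gameCB$ and feed it into the complexity of a \coBuchi solver.

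Next I would bound $|\statesCB|$ by inspecting the state space produced by the reduction. Each non-bad state is a tuple consisting of a state of $\game$ together with, for each of the $\dimension$ dimensions, a pair recording the current (non-positive) running sum, ranging over $\{-\largestW\cdot\sizeMax, \ldots, 0\}$, i.e.\ at most $\largestW\cdot\sizeMax + 1$ values, and a remaining step count in $\{1, \ldots, \sizeMax\}$, i.e.\ $\sizeMax$ values; adding the $\statesSize$ bad states gives $|\statesCB| = \mathcal{O}\bigl(\statesSize \cdot \bigl((\largestW\cdot\sizeMax+1)\cdot\sizeMax\bigr)^{\dimension}\bigr) = \mathcal{O}\bigl(\statesSize \cdot \largestW^{\dimension} \cdot \sizeMax^{2\dimension}\bigr)$. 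The per-dimension factors are exactly what produce the $\largestW^{\dimension}$ and $\sizeMax^{2\dimension}$ terms, and the only care needed here is to raise both the sum range and the step range to the power $\dimension$ correctly.

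Finally I would apply the fact that a \coBuchi game is solvable in time quadratic in the size of its arena \cite{chatterjee_SODA2012}. Squaring the bound above yields running time $\mathcal{O}(|\statesCB|^2) = \mathcal{O}\bigl(\statesSize^2 \cdot \largestW^{2\dimension} \cdot \sizeMax^{4\dimension}\bigr)$, which is precisely the claimed complexity; I would also note that building $\gameCB$ is linear in its size (each edge is fixed by the local update rule given in the construction) and therefore does not dominate the asymptotics. There is no genuine obstacle in this corollary: the substantive content lives entirely in Lemma~\ref{lem:multiDimReducCB}, and the only pitfalls are miscounting the exponents in $|\statesCB|$ or forgetting that the \coBuchi solving step squares the state count.
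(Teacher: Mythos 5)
Your proposal is correct and follows exactly the paper's own argument: invoke Lemma~\ref{lem:multiDimReducCB} for correctness, bound the state space of $\gameCB$ by $\mathcal{O}\bigl(\statesSize \cdot (\sizeMax)^{2\cdot\dimension} \cdot \largestW^{\dimension}\bigr)$, and apply the quadratic \coBuchi algorithm of \cite{chatterjee_SODA2012}. Your additional remark that building $\gameCB$ takes time linear in its size is a harmless refinement the paper leaves implicit.
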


\begin{proof}
Lemma~\ref{lem:multiDimReducCB} uses a \coBuchi game whose state space is of size 
\begin{equation*}
\Big\vert \states \times \big( \{ -\largestW \cdot \sizeMax, \ldots{}, 0\} \times \{1, \ldots{}, \sizeMax\}\big)^{\dimension}\Big\vert + \statesSize = \mathcal{O}\Big(\statesSize \cdot (\sizeMax)^{2\cdot \dimension} \cdot \largestW^{\dimension}\Big).
\end{equation*}
The quadratic algorithm for \coBuchi games described in~\cite{DBLP:journals/jacm/ChatterjeeH14} implies the result.
\end{proof}

A natural question is whether a distinct algorithm is useful in the one-dimension case. Remark~\ref{rem:multiDimOneDimAlgo} notes that it is.

\begin{remark}
\label{rem:multiDimOneDimAlgo}
The multi-dimension algorithm described in Corollary~\ref{cor:multiDimEXPTIMEeasy} yields a procedure which is polynomial in the size of the state space, the window size, and the largest weight for the subclass of one-dimension games, hence only \textit{pseudo-polynomial} (i.e., exponential in $\bits$, the length of the encoding of weights), whereas Lemma~\ref{lem:windowMPAlg} gives a truly polynomial algorithm. 
\end{remark}

\smallskip\noindent\textbf{Fixed window: lower bounds.} We first consider the fixed \textit{arbitrary} window mean-payoff problem for which we show (i) in Lemma~\ref{lem:multiDimFixedHardMembership}, EXPTIME-hardness for $\{-1, 0, 1\}$ weights and arbitrary dimensions via a reduction from the \textit{membership problem for alternating polynomial-space Turing machines (APTMs)}~\cite{chandra_JACM1981}, and (ii) in Lemma~\ref{lem:multiDimFixedHardCountdown}, EXPTIME-hardness for two dimensions and arbitrary weights via a reduction from \textit{countdown games}~\cite{jurdzinski_LMCS2008}.

Given an APTM $\aptm$ and a word $\word \in \{0, 1\}^{\ast}$, such that the tape contains at most $p(\wordSize)$ cells, where $p$ is a polynomial function, the membership problem asks to decide if $\aptm$ accepts $\word$. We build a fixed arbitrary window mean-payoff game~$\game$ so that $\playerOne$ has to simulate the run of $\aptm$ on $\word$, and $\playerOne$ has a winning strategy in $\game$ if and only if the word is accepted by the machine. For each tape cell $\tapeCell \in \{1, 2, \ldots{}, p(\wordSize)\}$, we have two dimensions, $(\tapeCell, 0)$ and $(\tapeCell, 1)$ such that a sum of weights of value $-1$ (i.e., an open window) in dimension $(\tapeCell, i)$, $i \in \{0, 1\}$ encodes that in the current configuration of $\aptm$, tape cell $\tapeCell$ contains a bit of value $i$. In each step of the simulation (Fig.~\ref{fig:multiDimFixedMembership}), $\playerOne$ has to disclose the symbol under the tape head: if in position $\tapeCell$, $\playerOne$ discloses a $0$ (resp. a $1$), he obtains a reward $1$ in dimension $(\tapeCell, 0)$ (resp. $(\tapeCell, 1)$). To ensure that $\playerOne$ was faithful, $\playerTwo$ is then given the choice to either let the simulation continue, or assign a reward $1$ in all dimensions except $(\tapeCell, 0)$ and $(\tapeCell, 1)$ and then restart the game after looping in a zero self-loop for an arbitrary long time. If $\playerOne$ cheats by not disclosing the correct symbol under tape cell $\tapeCell$, $\playerTwo$ can punish him by branching to the restart state and ensuring a sufficiently long open window in the corresponding dimension before restarting (as in Fig.~\ref{fig:wmpEx2}). But if $\playerOne$ discloses the correct symbol and $\playerTwo$ still branches, all windows close. In the accepting state, all windows are closed and the game is restarted. The window size $\sizeMax$ of the game is function of the existing bound on the length of an accepting run. To force $\playerOne$ to go to the accepting state, we add an additional dimension, with weight~$-1$ on the initial edge of the game and weight $1$ on reaching the accepting state.

\begin{figure}[thb]
  \centering   
  \scalebox{0.9}{\begin{tikzpicture}[->,>=stealth',shorten >=1pt,auto,node
    distance=2.5cm,bend angle=45,scale=0.6, font=\scriptsize]
    \tikzstyle{p1}=[draw,circle,text centered,minimum size=12mm]
    \tikzstyle{p2}=[draw,rectangle,text centered,minimum size=8mm]
    \tikzstyle{p3}=[draw, rounded rectangle, dashed, text centered,minimum size=8mm]
    \node[p1]  (0)  at (0, 0) {$(q, \tapeCell)$};
    \node[p2]  (1) at (3, 2.5) {$(q, \tapeCell, 0)_{{\sf check}}$};
    \node[p2]  (2) at (3, -2.5)  {$(q, \tapeCell, 1)_{{\sf check}}$};
    \node[p2]  (3) at (6, 0)  {$(q, \tapeCell)_{{\sf branch}}$};
    \node[p2]  (4)  at (10, 0) {$q_{{\sf restart}}$};
    \node[p3]  (5)  at (8, 2.5) {$(q, \tapeCell, 0)$};
    \node[p3]  (6)  at (8, -2.5) {$(q, \tapeCell, 1)$};
    \path
    (0) edge (1)
    (0) edge (2)
    (1) edge (3)
    (1) edge (5)
    (2) edge (3)
    (2) edge (6)
    (3) edge (4)
    (5) edge (10, 2.5)
    (5) edge (10, 3.5)
    (5) edge (10, 1.5)
    (6) edge (10, -2.5)
    (6) edge (10, -3.5)
    (6) edge (10, -1.5);
    \draw[-,decorate,decoration={brace,amplitude=5pt},xshift=-3mm,yshift=0pt] (10.5,3.7) -- (10.5,1.3) node [black,midway,xshift=9pt] {Transitions of $(q, 0)$};
    \draw[-,decorate,decoration={brace,amplitude=5pt},xshift=-3mm,yshift=0pt] (10.5,-1.3) -- (10.5,-3.7) node [black,midway,xshift=9pt] {Transitions of $(q, 1)$};
      \end{tikzpicture}}
      \caption{Gadget ensuring a correct simulation of the APTM on tape cell $\tapeCell$.}
\label{fig:multiDimFixedMembership}
  \end{figure}

\begin{lemma}
\label{lem:multiDimFixedHardMembership}
The fixed arbitrary window mean-payoff problem is EXPTIME-hard in multi-dimension games with $\{-1, 0, 1\}$ weights and arbitrary dimensions.
\end{lemma}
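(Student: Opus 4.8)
The plan is to reduce from the membership problem for alternating polynomial-space Turing machines (APTMs), which is EXPTIME-complete \cite{chandra_JACM1981}. Given an APTM $\aptm$ that uses at most $p(\wordSize)$ tape cells on input $\word \in \{0,1\}^{\ast}$, I would build in polynomial time a game $\game$ with weights in $\{-1,0,1\}$ and $2\cdot p(\wordSize)+1$ dimensions, whose states carry only a polynomial amount of bookkeeping (the control state of $\aptm$, the head position, and the local gadget phase) and, crucially, \emph{not} the tape contents: the exponential tape information is instead represented implicitly by the open/closed status of the windows. The guiding invariant is that, between simulated steps, there is an open window in dimension $(\tapeCell,i)$, $i\in\{0,1\}$, if and only if cell $\tapeCell$ currently holds bit $i$. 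Existential control states become $\playerOne$ states and universal ones become $\playerTwo$ states, so a faithful play traces a branch of the computation tree.

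Each simulated step is implemented by the gadget sketched above: $\playerOne$ must disclose the symbol under the head, receiving weight $+1$ in dimension $(\tapeCell,i)$ when he claims bit $i$ — which closes the encoding window exactly when the claim is truthful and is wasted otherwise — after which $\playerTwo$ may either let the simulation proceed or \emph{challenge} by giving weight $+1$ in every dimension except $(\tapeCell,0)$ and $(\tapeCell,1)$ and then idling on a zero self-loop before restarting. I would set $\sizeMax$ to the (exponential, hence non-polynomial but binary-encodable) bound on the length of an accepting run, and add the extra dimension that gets $-1$ on the initial edge and $+1$ only at the accepting state, forcing $\playerOne$ to genuinely reach acceptance within $\sizeMax$ steps.

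For correctness I would argue both directions. If $\aptm$ accepts $\word$, $\playerOne$ follows the accepting computation tree (choosing accepting moves at existential states, while $\playerTwo$'s universal choices necessarily stay inside the accepting subtree) and discloses truthfully: every challenge then closes all windows, and every unchallenged run reaches the accepting state within $\sizeMax$ steps, closing all windows including the extra one. Hence the \emph{direct} fixed window objective, and a fortiori $\fixedWindowMPObjTW{\zeroVector}{\sizeMax}$, is satisfied. Conversely, if $\aptm$ rejects $\word$, I would give a spoiling strategy for $\playerTwo$ playing a dual role: on each run he either catches a dishonest disclosure — whereupon the challenge leaves the corresponding encoding window open while he idles on the zero loop for more than $\sizeMax$ steps — or he steers the universal choices into a non-accepting branch, so the accepting state is never reached and the extra-dimension window opened at the start of that run never closes. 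Either way some window stays open for more than $\sizeMax$ steps in that run.

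The step I expect to be the main obstacle is pushing this per-run statement through the \emph{prefix-independence} of the objective: since a single over-long window is harmless inside a finite prefix, I must ensure $\playerTwo$ forces a window to remain open beyond $\sizeMax$ \emph{infinitely often} (or forever, in the non-halting case), so that no suffix of the play lies in $\directFixedWindowMPObj$. This requires verifying that $\playerTwo$'s challenge-or-steer strategy recurs at every restart, and that on a non-accepting branch that happens to halt the extra-dimension window is genuinely kept open (the rejecting gadget must not let it close). The matching soundness check is that a faithful accepting run never itself produces a window exceeding $\sizeMax$, which is exactly why $\sizeMax$ is chosen as the run-length bound. Polynomiality of the construction — polynomially many states and dimensions, and $\log \sizeMax$ polynomial — then yields EXPTIME-hardness for weights $\{-1,0,1\}$ and arbitrary dimensions.
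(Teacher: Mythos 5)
Your proposal is correct and follows essentially the same reduction as the paper: an APTM membership reduction in which the tape contents are encoded by open windows in dimensions $(\tapeCell,0)$, $(\tapeCell,1)$, with the same disclose-then-challenge gadget, the same extra dimension forcing acceptance, and the same exponential choice of $\sizeMax$. The prefix-independence subtlety you flag is exactly what the paper handles by having $\playerTwo$ restart and repeat his spoiling behaviour infinitely often, so nothing in your plan diverges from the published argument.
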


\begin{proof}
An \textit{alternating Turing machine} (ATM)~\cite{chandra_JACM1981} is a tuple $\aptm = (Q, q_{0}, \alphabet_{{\sf in}}, \delta, q_{{\sf acc}})$ where:
\begin{itemize}
\item $Q$ is the finite set of control states with a partition $(Q_{\vee}, Q_{\wedge})$ of $Q$ into existential and universal states;
\item $q_{0} \in Q$ is the initial state;
\item $\alphabet_{{\sf in}} = \{0, 1\}$ is the input alphabet and $\alphabet_{{\sf tape}} = \alphabet_{{\sf in}} \cup \{\#\}$ the tape alphabet, with $\#$ the blank symbol;
\item $\delta \subseteq Q \times \alphabet_{{\sf tape}} \times Q \times \alphabet_{{\sf tape}} \times \{-1, 1\}$ is a transition relation;
\item there is a special accepting state $q_{{\sf acc}} \in Q_{\vee}$ (without loss of generality).
\end{itemize}
We say that $\aptm$ is a \textit{polynomial-space} alternating Turing machine (APTM) if for some polynomial function $p$, the space used by $\aptm$ on any input word $\word \in \alphabet_{{\sf in}}^{\ast}$ is bounded by $p(\wordSize)$.

We define the AND-OR graph of the APTM $(\aptm, p)$ on the input word $\word \in \alphabet_{{\sf in}}^{\ast}$ as $\mathcal{G}(\aptm, p) = \left\langle S_{\vee}, S_{\wedge}, s_{0}, \Delta, R \right\rangle $ where
\begin{itemize}
\item $S_{\vee} = \{ (q, h, t) \,\vert\, q \in Q_{\vee},\, 1 \leq h \leq p(\wordSize)$ and $t \in \alphabet_{{\sf tape}}^{p(\wordSize)}\}$; 
\item $S_{\wedge} = \{ (q, h, t) \,\vert\, q \in Q_{\wedge},\, 1 \leq h \leq p(\wordSize)$ and $t \in \alphabet_{{\sf tape}}^{p(\wordSize)}\}$;
\item $s_{0} = (q_{0}, 1, t)$ where $t = \word \cdot \#^{p(\wordSize) - \wordSize}$;
\item $((q_{1}, h_{1}, t_{1}), (q_{2}, h_{2}, t_{2})) \in \Delta$ iff there exists $(q_{1}, t_{1}(h_{1}), q, \gamma, d) \in \delta$ such that $q_{2} = q$, $h_{2} = h_{1} + d$, $t_{2}(h_{1}) = \gamma$ and $t_{2}(\tapeCell) = t_{1}(\tapeCell)$ for all $\tapeCell \neq h_{1}$;
\item $R = \{(q, h, t) \in S_{\vee} \,\vert\, q = q_{{\sf acc}}\}$.
\end{itemize}
Intuitively, states of the graph correspond to configurations $(q, h, t)$ where $q$ is a control state of the machine, $h$ the position of the tape head, and $t$ the current word written on the tape. Given a state $q$ of the machine $\aptm$, tape head on cell $\tapeCell$ and a word $t$ on the tape, a transition from $(q, \tapeCell, t)$ to $(q', \tapeCell', t')$ exists in the graph $\mathcal{G}(\aptm, p)$ if the transition relation $\delta$ of the machine $\aptm$ admits a transition that given this configuration, updates the content of cell $\tapeCell$ to the symbol $t'(\tapeCell)$, such that the tape now contains the word $t'$, and then goes to control state $q'$ and moves the tape head to an adjacent cell $\tapeCell'$.

A word $\word \in \alphabet_{{\sf in}}^{\ast}$ is \textit{accepted} by an APTM $(\aptm, p)$ if there exists a run tree (obtained by choosing a child in existential nodes and keeping all children in universal nodes) of $\aptm$ on $\word$ such that all leafs are accepting configurations. That is, a word is accepted if and only if, in the two-player game defined by $\mathcal{G}(\aptm, p)$, player $\player_{\vee}$ has a strategy to reach the set of accepting states $R$. Deciding the acceptance of a word by an APTM is an EXPTIME-complete problem, known as the membership problem~\cite{chandra_JACM1981}.

We construct a fixed window mean-payoff game $\gameFull$ simulating the machine $(\aptm, p)$ as follows. Let $\dimension = 2\cdot p(\wordSize) + 1$: there is a dimension for each pair $(\tapeCell, 0)$ and $(\tapeCell, 1)$, for all $1 \leq \tapeCell \leq p(\wordSize)$, and one additional dimension. The set of states $\states$ of the game is
\begin{align*}
S = &\{q_{{\sf restart}}\} \cup \{q_{{\sf in}}\} \cup \{\widehat{q_{{\sf acc}}}\} \cup \{(q, \tapeCell) \,\vert\, q \in Q,\, 1 \leq \tapeCell \leq p(\wordSize)\} \cup \{(q, \tapeCell, i)_{{\sf check}} \,\vert\, q \in Q,\, 1 \leq \tapeCell \leq p(\wordSize), i \in \{0, 1\}\}\\ &\cup \{(q, \tapeCell)_{{\sf branch}} \,\vert\, q \in Q,\, 1 \leq \tapeCell \leq p(\wordSize)\} \cup \{(q, \tapeCell, i) \,\vert\, q \in Q,\, 1 \leq \tapeCell \leq p(\wordSize), i \in \{0, 1\}\}.
\end{align*}
States of the form $(q, \tapeCell)$ belong to $\playerOne$. States of the form $(q, \tapeCell, i)$ belong to $\playerOne$ if $q \in Q_{\vee}$ in the machine $\aptm$. All other states belong to $\playerTwo$. The initial state is $q_{{\sf restart}}$. It has two outgoing edges with weights zero in all dimensions: one self-loop, and one edge to $q_{{\sf in}}$. The latter is assigned the following weights: $-1$ for dimension $(\tapeCell, i)$ if the letter at position $\tapeCell$ of $\word$ is $i$, $-1$ in the very last dimension ($2\cdot p(\wordSize) + 1$), and zero everywhere else. From $q_{{\sf in}}$, the game goes to $(q_{0}, 1)$ and the simulation of $\aptm$ begins.
  
The game mimics runs of $\aptm$, and it is ensured that if the current state of the game is $(q, \tapeCell)$ and the cell content is~$i$, then the sum of weights since the last visit of $q_{{\sf in}}$ in dimension $(\tapeCell, i)$ is $-1$. We refer to the segment of play since the last visit of $q_{{\sf in}}$ as the \textit{current round}. We depict a step of the simulation in Fig.~\ref{fig:multiDimFixedMembership}. At state $(q, \tapeCell)$, $\playerOne$ has the choice between states $(q, \tapeCell, 0)_{{\sf check}}$ and $(q, \tapeCell, 1)_{{\sf check}}$, resp. corresponding to declaring a content $0$ or $1$ of the tape cell $\tapeCell$. The reward for dimension $(\tapeCell, i)$, $i \in \{0, 1\}$ is $1$ on state $(q, \tapeCell, i)_{{\sf check}}$. At state $(q, \tapeCell, i)_{{\sf check}}$, a state of $\playerTwo$, $\playerTwo$ checks whether $\playerOne$ has correctly revealed the tape content as follows: (i) Player $\playerTwo$ can choose to go to state $(q, \tapeCell)_{{\sf branch}}$, in which all dimensions other than $(\tapeCell, 0)$ and $(\tapeCell, 1)$, including the very last, are increased by $1$, and then go to $q_{{\sf restart}}$ on which $\playerTwo$ will be able to delay the play; (ii) Player $\playerTwo$ can choose to proceed and continue the simulation: the game then goes to state $(q, \tapeCell, i)$. State $(q, \tapeCell, i)$ is either a state of $\playerOne$ or $\playerTwo$, depending on the affiliation of state $q$ in the APTM. Such a gadget ensures that if $\playerOne$ cheats by not disclosing the correct symbol, $\playerTwo$ can force an open window of arbitrary length in the current round by looping on $q_{{\sf restart}}$ for some time, and then restarting the game. On the other hand, if $\playerOne$ is faithful and $\playerTwo$ still decides to branch to $(q, \tapeCell)_{{\sf branch}}$, then all windows will be closed for the current round.

If $\playerOne$ does not cheat and $\playerTwo$ acknowledges it by not branching, the game advances to a state of the form $(q, \tapeCell, i)$. At such a state, we add transitions as follows: if there exists a transition from $(q, \tapeCell, i)$ to $(q', \tapeCell', i')$ in $\aptm$, then we add an edge from $(q, \tapeCell, i)$ to $(q', \tapeCell')$ in the game $\game$, and assign weight $-1$ in dimension $(\tapeCell, i')$, as the tape cell at position $\tapeCell$ contains $i'$ and we ensure that the sum in dimension $(\tapeCell, i')$ in the current round is $-1$. At the accepting states $(q_{{\sf acc}}, \tapeCell)$, all dimensions are assigned reward $1$, and the next state is $\widehat{q_{{\sf acc}}}$. State $\widehat{q_{{\sf acc}}}$ is followed by $q_{{\sf restart}}$. Again there is no risk in looping as all dimensions are now non-negative.

Formally, blank symbols need to be added. For brevity and simplicity of the presentation, we omit these technical details.

We fix the window size $\sizeMax$ equal to three times the size of the configuration graph (bound on the length of a run) plus three, and we argue that the game $\game$ is a faithful simulation of the machine $\aptm$, that is, $\playerOne$ wins the fixed window mean-payoff game if and only if the word $\word$ is accepted by $\aptm$. Notice that the construction ensures that if $\playerOne$ cheats in the current round, $\playerTwo$ can make this round losing, as discussed before. Similarly, if $\playerOne$ does not cheat but does not reach the accepting state, dimension $2\cdot p(\wordSize) + 1$ will remain negative when arriving in $q_{{\sf restart}}$ and $\playerTwo$ will be able to cycle long enough to make the round losing as the window in the last dimension will remain open for $\sizeMax$ steps. Clearly, $\playerOne$ cannot see losing rounds infinitely often otherwise the play is losing. Assume the word $\word$ is accepted by the machine. Then there is an accepting run tree, and the winning strategy of $\playerOne$ is to follow this run tree and always reveal the correct symbol. This way, either $\playerTwo$ restarts and the round is winning because all dimensions are non-negative, or $\playerTwo$ does not restart and an accepting state $(q_{{\sf acc}}, \tapeCell)$ is reached within the maximum allowed window size. Indeed, in the APTM, there is a strategy to reach the accepting state in a number of steps bounded by the size of the configuration graph. In that case, the round is also winning. Conversely, assume that the word $\word$ is not accepted by the APTM. Consider any strategy $\strat_{1}$ of $\playerOne$. Clearly, $\playerOne$ cannot cheat as otherwise, he loses. So assume he does not cheat. Then there is a path in the run tree obtained from playing the strategy $\strat_{1}$ in $\aptm$ such that the path never reaches an accepting state. Hence, the strategy $\strat_{2}$ of $\playerTwo$ that follows this path in the game $\game$ ensures that the sum in dimension $2\cdot p(\wordSize) + 1$ is always strictly negative, and after waiting till the bound $\sizeMax$ on the window size is met, $\playerTwo$ has made the round losing and he can restart the game safely. Acting this way infinitely often, $\playerTwo$ can violate the fixed window objective for $\playerOne$. It follows that $\playerOne$ wins in $\game$ if and only if the word $\word$ is accepted by the APTM $\aptm$.
\end{proof}

We now prove EXPTIME-hardness for two dimensions and arbitrary weights via a reduction from countdown games. A countdown game $\cdGame$ consists of a weighted graph $(\cdStates, \cdTransitions)$, with $\cdStates$ the set of states and $\cdTransitions \subseteq \cdStates \times \natStrict \times \cdStates$ the transition relation. Configurations are of the form $(s, c)$, $s \in \cdStates$, $c \in \nat$. The game starts in an initial configuration $(\initState, c_{0})$ and transitions from a configuration $(s, c)$ are performed as follows: first $\playerOne$ chooses a duration $d$, $0 < d \leq c$ such that there exists $t = (s, d, s') \in \cdTransitions$ for some $s' \in \cdStates$, second $\playerTwo$ chooses a state $s' \in \cdStates$ such that $t = (s, d, s') \in \cdTransitions$. Then, the game advances to $(s', c-d)$. Terminal configurations are reached whenever no legitimate move is available. If such a configuration is of the form $(s, 0)$, $\playerOne$ wins the play. Otherwise, $\playerTwo$ wins the play. Deciding the winner in countdown games given an initial configuration $(\initState, c_{0})$ is EXPTIME-complete~\cite{jurdzinski_LMCS2008}.

Given a countdown game $\cdGame$ and an initial configuration $(\initState, c_{0})$, we create a game $\gameFull$ with $\dimension = 2$ and a fixed window objective for $\sizeMax = 2\cdot c_{0} + 2$.
The two dimensions are used to store the value of the countdown counter and its opposite. Each time a duration $d$ is chosen, an edge of value $(-d, d)$ is taken. The game simulates the moves available in $\cdGame$: a strict alternation between states of $\playerOne$ (representing states of $\cdStates$) and states of $\playerTwo$ (representing transitions available from a state of $\cdStates$ once a duration has been chosen). On states of $\playerOne$, we add the possibility to branch to a state $s_{{\sf restart}}$ of $\playerTwo$, in which $\playerTwo$ can either take a zero cycle, or go back to the initial state and force a restart of the game. By placing weights $(0, -c_{0})$ on the initial edge, and $(c_{0}, 0)$ on the edge branching to $s_{{\sf restart}}$, we ensure that the only way to win for $\playerOne$ is to accumulate a value exactly equal to $c_{0}$ in the game before switching to $s_{{\sf restart}}$. This is possible if and only if $\playerOne$ can reach a configuration of value zero in $\cdGame$.

\begin{lemma}
\label{lem:multiDimFixedHardCountdown}
The fixed arbitrary window mean-payoff problem is EXPTIME-hard in multi-dimension games with two dimensions and arbitrary weights.
\end{lemma}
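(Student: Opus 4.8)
The plan is to give a polynomial-time reduction from \emph{countdown games}, whose winner problem is EXPTIME-complete \cite{jurdzinski_LMCS2008}, to the fixed arbitrary window mean-payoff problem with $\dimension = 2$. Given a countdown game $\cdGame = (\cdStates, \cdTransitions)$ with initial configuration $(\initState, c_{0})$, I build a two-dimension game $\game$ with window bound $\sizeMax = 2\cdot c_{0} + 2$ as sketched above: each state of $\cdStates$ is a $\playerOne$ state, each (state, chosen duration) pair is a $\playerTwo$ state resolving the successor choice of $\cdTransitions$, the edge choosing a duration $d$ carries weight $(-d, d)$, the initial edge carries $(0, -c_{0})$, and from every $\playerOne$ state there is a branch to a $\playerTwo$ state $s_{\sf restart}$ (via an edge of weight $(c_{0}, 0)$) that carries a zero self-loop and an edge back to the initial state. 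The two dimensions record the accumulated duration and its opposite: after durations $d_{1}, \ldots, d_{k}$ have been played since the last restart, the window opened by the $(0, -c_{0})$ edge has local sum $-c_{0} + \sum_{i} d_{i}$ in the second dimension, while the window opened by the first duration edge has local sum $-\sum_{i} d_{i}$ in the first; firing the branch edge $(c_{0}, 0)$ turns these into $\sum_{i} d_{i} - c_{0}$ and $c_{0} - \sum_{i} d_{i}$, which are \emph{simultaneously} non-negative if and only if $\sum_{i} d_{i} = c_{0}$. This is the crux of the encoding: exceeding $c_{0}$ keeps the first-dimension window open, falling short keeps the second-dimension window open, so a round closes all its windows exactly when $\playerOne$ drives the counter to $0$ by a legal countdown play.

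For the forward direction I would show that if $\playerOne$ wins the countdown game, then $\playerOne$ wins $\fixedWindowMPObjTW{\zeroVector}{\sizeMax}$ (in fact its direct variant). $\playerOne$ follows his countdown-winning strategy, which forces the counter to $0$ within $c_{0}$ moves regardless of $\playerTwo$'s successor choices, and then takes the branch edge. By the computation above both windows close at that branch, and since each move uses two edges and there are at most $c_{0}$ of them (durations are $\geq 1$), closure happens within $1 + 2\cdot c_{0} + 1 = \sizeMax$ steps; the inductive property of windows then closes every intermediate window as well. After the branch both running sums are back to $(0,0)$, so $\playerTwo$'s zero self-loop opens no window and restarting only re-enters a fresh, identical round. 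Hence every window along the play closes within $\sizeMax$.

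The backward direction is where I expect the main difficulty, and it is handled by pairing a counter-preventing strategy of $\playerTwo$ in $\cdGame$ with the delay gadget. If the countdown game is losing for $\playerOne$, then $\playerTwo$ resolves successor choices so that the accumulated duration never hits $c_{0}$ at a $\playerOne$ state. I would then argue that whatever $\playerOne$ does in a round, at least one window stays open: branching with $\sum_{i} d_{i} < c_{0}$ leaves the second-dimension window open, branching with $\sum_{i} d_{i} > c_{0}$ leaves the first-dimension window open, and never branching drives the first-dimension sum to $-\infty$ so a window stays open forever. Note that the reduction does \emph{not} enforce the constraint $d \leq c$ at the graph level; it is the winning condition that penalizes overshooting, so the useful plays of $\game$ are exactly the legal plays of $\cdGame$. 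Whenever a window is open, $\playerTwo$ loops on the zero self-loop of $s_{\sf restart}$ for more than $\sizeMax$ steps, producing a bad window, and then restarts. Because the objective is prefix-independent, $\playerTwo$ must do this infinitely often; the restart mechanism makes each new round independently losing for $\playerOne$, so $\playerTwo$ forces bad windows infinitely often and spoils the fixed window objective.

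Finally I would verify that the reduction is polynomial: $\game$ has $\mathcal{O}(\vert\cdStates\vert + \vert\cdTransitions\vert)$ states, and the weights ($d$, $c_{0}$) together with the window bound $\sizeMax = 2\cdot c_{0} + 2$ are written in binary. Since $c_{0}$ is given in binary, $\sizeMax$ is exponential in its encoding --- which is exactly why this argument establishes hardness only for \emph{arbitrary} (non-polynomial) window sizes and requires \emph{arbitrary} weights, matching the statement. Combining both directions yields that $\playerOne$ wins $\fixedWindowMPObjTW{\zeroVector}{\sizeMax}$ in $\game$ if and only if $\playerOne$ wins $\cdGame$, establishing EXPTIME-hardness.
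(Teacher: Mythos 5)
Your proposal is correct and follows essentially the same route as the paper's proof: the identical reduction from countdown games, with the same state structure, weights $(-d,d)$, $(0,-c_0)$, $(c_0,0)$, the zero self-loop on $s_{\sf restart}$, and the same window size $\sizeMax = 2\cdot c_0 + 2$, with the same key argument that all windows of a round close simultaneously iff the accumulated duration is exactly $c_0$. Your additional observations (legality of moves enforced by the objective rather than the graph, and $\playerTwo$'s delay-then-restart spoiling strategy) are precisely the points the paper's proof also relies on, just spelled out in more detail.
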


\begin{proof}
We establish a polynomial-time reduction from the countdown game problem to the fixed arbitrary window problem. Let $\cdGame = (\cdStates, \cdTransitions)$ be a countdown game~\cite{jurdzinski_LMCS2008}, with initial configuration $(\initState, c_{0})$. We create a corresponding game $\gameFull$ as follows.
\begin{itemize}
\item $\states_{1} = \cdStates$.
\item Let $\states^{\cdTransitions} \subseteq \cdStates \times \natStrict$ be the subset of pairs $(s, d)$ such that there exists a transition $(s, d, s') \in \cdTransitions$. Then, $\states_{2} = \states^{\cdTransitions} \cup \{s_{{\sf restart}}\}$. State $s_{{\sf restart}}$ is the initial state of game $\game$.
\item For each transition $(s, d, s') \in \cdTransitions$, we add edges $(s, (s,d))$, with $s \in \states_{1}$ and $(s,d) \in \states_{2}$, and $((s,d), s')$, with $s' \in \states_{1}$, to the set of edges $\edges$. Edge $(s, (s,d))$ has weight $(-d, d)$ and edge $((s, d), s')$ has weight $(0, 0)$.
\item For all $s \in \states_{1}$, we add an edge $(s, s_{{\sf restart}})$ of weight $(c_{0}, 0)$.
\item From $s_{{\sf restart}}$, we add an edge $(s_{{\sf restart}}, \initState)$ of value $(0, -c_{0})$.
\item On $s_{{\sf restart}}$, we add a self-loop $(s_{{\sf restart}}, s_{{\sf restart}})$ of weight $(0, 0)$.
\end{itemize}

We fix the window size $\sizeMax = 2\cdot c_{0} + 2$, and we claim that $\playerOne$ wins the fixed window problem if and only if he wins the countdown game. Recall that to win a countdown game, $\playerOne$ must be able to reach a configuration $(s, 0)$ in the game~$\cdGame$. The key idea to our construction is that in the game $\game$, the only way to avoid seeing infinitely often open windows of size larger than $\sizeMax$ is to accumulate exactly $c_{0}$ before restarting, which is equivalent to reaching a configuration of value $0$ in $\cdGame$.

Notice that the game $\game$ starts by visiting an edge of value $(0, -c_{0})$ and afterwards, all edges from states of $\playerOne$ have a value $(-d, d)$ corresponding to the duration he chooses in the countdown game. All except the edge he can decide to take to go to $s_{{\sf restart}}$, which value is $(c_{0}, 0)$. Clearly, if $\playerOne$ decides to go in $s_{{\sf restart}}$, he has to close all windows, as otherwise $\playerTwo$ can use the self-loop to delay the play long enough and provoke a sufficiently long bad window, which if done repeatedly, induces a losing play. On the other hand, if $\playerOne$ decides to never go towards $s_{{\sf restart}}$, he will keep accumulating negative values in the first dimension and he is guaranteed to lose. So obviously the behavior of $\playerOne$ should be to play as in the countdown game to accumulate exactly $c_{0}$ in dimension $2$ (and $-c_{0}$ in dimension $1$) before switching to $s_{{\sf restart}}$, so that $\playerTwo$ can do no harm by delaying the play as all windows will be closed. The accumulated value has to be \textit{exactly} $c_{0}$ as (a) if it is less than $c_{0}$, dimension $2$ will remain negative, and (b) if it is more than $c_{0}$, dimension $1$ will stay negative (i.e., the edge $(s, s_{{\sf restart}})$ will not suffice to get it back above zero). Since the minimal increase is of $1$ every two edges by construction, the allowed window size $\sizeMax$ is sufficient to enforce such a behavior, if possible. This shows that $\playerOne$ wins the fixed window problem from initial state $s_{{\sf restart}}$ in $\game$ if and only if he also wins the countdown game $\cdGame$ from $(\initState, c_{0})$, as accumulating $c_{0}$ in $\game$ is equivalent to reaching a configuration of value zero in $\cdGame$.
\end{proof}

For the case of polynomial windows, Lemma~\ref{lem:multiDimGenReachReduc} proves PSPACE-hardness via a reduction from generalized reachability games~\cite{fijalkow_CORR2010}. Filling the gap with the EXPTIME-membership given by Corollary~\ref{cor:multiDimEXPTIMEeasy} is an open problem. The generalized reachability objective is a conjunction of reachability objectives: a winning play has to visit a state of each of a series of $\dimension$ reachability sets. If $\playerOne$ has a winning strategy in a generalized reachability game $\game^{r} = (\states_{1}^{r}, \states_{2}^{r}, \edges^{r})$, then he has one that guarantees visit of all sets within $\dimension \cdot \statesSizeP{r}$ steps. We create a modified weighted version of the game, $\gameFull$, such that the weights are $\dimension$-dimension vectors. The game starts by opening a window in all dimensions and the only way for $\playerOne$ to close the window in dimension $t$, $1 \leq t \leq \dimension$ is to reach a state of the $t$-th reachability set. We modify the game by giving $\playerTwo$ the ability to close all open windows and restart the game such that the prefix-independence of the fixed window objective cannot help $\playerOne$ to win without reaching the target sets. Then, a play is winning in $\game$ for the fixed window objective of size $\sizeMax = 2\cdot\dimension\cdot\statesSizeP{r}$ if and only if it is winning for the generalized reachability objective in $\game^{r}$.

\begin{lemma}
\label{lem:multiDimGenReachReduc}
The fixed polynomial window mean-payoff problem is PSPACE-hard.
\end{lemma}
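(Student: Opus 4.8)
The plan is to reduce from the generalized reachability problem, which is PSPACE-complete \cite{fijalkow_CORR2010}. Given a generalized reachability game $\game^{r} = (\states_{1}^{r}, \states_{2}^{r}, \edges^{r})$ with target sets $R_{1}, \ldots, R_{\dimension} \subseteq \states^{r}$, I would build in polynomial time a $\dimension$-dimension weighted game $\game$ and fix the window size to $\sizeMax = 2 \cdot \dimension \cdot \statesSizeP{r}$, which is polynomial in the size of $\game^{r}$. I would then show that $\playerOne$ wins the fixed window mean-payoff objective of threshold $\zeroVector$ and size $\sizeMax$ in $\game$ if and only if $\playerOne$ wins the generalized reachability objective in $\game^{r}$; since the reduction is polynomial and produces only a polynomial window, this establishes PSPACE-hardness of the fixed polynomial window problem.

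For the construction I would dedicate dimension $t$ to the objective $R_{t}$. I would keep the arena of $\game^{r}$ with weight $\zeroVector$ on all its edges, and add a single restart state $s_{{\sf restart}}$ belonging to $\playerTwo$. A round begins with an initial edge entering the arena that carries weight $-1$ in every dimension, thereby opening one window per dimension at the start of each round; I would assign weight $+1$ in dimension $t$ to every edge entering a state of $R_{t}$, so that the window opened in dimension $t$ closes exactly when $\playerOne$ first reaches $R_{t}$ (the sum in dimension $t$ receives no further negative contribution until the next restart, so a single visit suffices). Finally, from the arena $\playerTwo$ may branch to $s_{{\sf restart}}$ along an edge whose weight is large and positive in all dimensions, closing every currently open window; from $s_{{\sf restart}}$ a $0$-weighted self-loop lets $\playerTwo$ stall and a return edge re-enters the arena through the initial $-1$ edge, opening all windows afresh. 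The sole purpose of $s_{{\sf restart}}$ is to defeat the prefix-independence of the fixed window objective: without the ability to re-open windows, a single never-closing window would be confined to one position and $\playerOne$ would win by a suffix argument.

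For correctness I would argue both directions. If $\playerOne$ wins $\game^{r}$, then he can guarantee visiting all $\dimension$ target sets within $\dimension \cdot \statesSizeP{r}$ steps; replaying this strategy round after round closes each window opened by an initial edge within $\dimension \cdot \statesSizeP{r} \leq \sizeMax$ steps, while every window opened at a non-restart position is trivially good since the only negative weights lie on initial edges. Hence every window is good from the start, so $\playerOne$ wins even the direct fixed window objective, no matter when $\playerTwo$ chooses to close windows and restart. Conversely, if $\playerOne$ loses $\game^{r}$, then by determinacy $\playerTwo$ has a strategy that, in each round, prevents $\playerOne$ from ever visiting some $R_{t}$; $\playerTwo$ follows it while remaining in the arena for more than $\sizeMax$ steps, so that the round-opening window in dimension $t$ stays open too long and becomes bad, and only then branches to $s_{{\sf restart}}$ to begin a fresh round. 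Iterating produces infinitely many bad windows, so $\playerOne$ loses the fixed window objective.

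The main obstacle I expect is calibrating the gadget so that the equivalence is exact for the fixed polynomial bound $\sizeMax = 2 \cdot \dimension \cdot \statesSizeP{r}$: I must ensure simultaneously that an early branch of $\playerTwo$ can never hurt $\playerOne$ (which is precisely why the branch edge closes all windows rather than leaving them open), that a reachability-winning $\playerOne$ can always close every window within $\sizeMax$ steps regardless of how $\playerTwo$ schedules restarts, and that a reachability-losing $\playerOne$ can always be forced into an open window surviving more than $\sizeMax$ steps before each restart. Making the per-target $+1$ increments and the positive closing weights on restart interact correctly with the inductive property of windows, so that closing one window never spuriously opens or closes another, is the delicate bookkeeping; the remaining arguments are routine.
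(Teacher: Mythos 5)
Your proposal follows the same route as the paper's proof: a polynomial reduction from generalized reachability games (PSPACE-complete by the cited work of Fijalkow and Horn), with one dimension per target set $R_{t}$, weight $-1$ in every dimension on the round-opening edge, weight $+1$ in dimension $t$ on edges entering $R_{t}$, a positive branch edge to a restart state that closes all windows, window size $\sizeMax = 2\cdot\dimension\cdot\statesSizeP{r}$, and the same two-directional correctness argument (replay a $\dimension\cdot\statesSizeP{r}$-step reachability strategy round after round; conversely, let $\playerTwo$ follow a spoiling strategy for more than $\sizeMax$ steps and only then restart).

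The one point you leave unspecified is exactly the point that needs care: \emph{how} $\playerTwo$ gets the option to branch to $s_{{\sf restart}}$. In a turn-based game an edge to $s_{{\sf restart}}$ can only be taken by the owner of its source state, so if the branch edges are attached to the original arena states, then in rounds where the play remains in $\playerOne$'s states $\playerTwo$ never gets to restart. This breaks your right-to-left direction: take an arena in which all states belong to $\playerOne$ and some $R_{t}$ is simply unreachable from the initial state; $\playerOne$ loses the generalized reachability game, yet the only bad window sits at the single round-opening position, so $\playerOne$ wins $\fixedWindowMPObjT{\zeroVector}$ by prefix-independence. (Attaching the branch edges to $\playerOne$'s states is no better: since the branch edge is positive in all dimensions, $\playerOne$ would restart immediately and win unconditionally.) The paper's fix is to subdivide every edge $(s,s')$ of $\edges^{r}$ by a fresh $\playerTwo$-state $b_{s,s'}$ from which $\playerTwo$ either continues to $s'$ or branches to $s_{{\sf restart}}$; this subdivision is also what the factor $2$ in $\sizeMax = 2\cdot\dimension\cdot\statesSizeP{r}$ accounts for, whereas in your write-up that factor is unexplained slack. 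With that subdivision made explicit, your construction and argument coincide with the paper's.
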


\begin{proof}
We show the PSPACE-hardness by a reduction from the generalized reachability problem~\cite{fijalkow_CORR2010}. Given a game graph $\game^{r} = (\states_{1}^{r}, \states_{2}^{r}, \edges^{r})$, a series of reachability sets $R_{t} \subseteq \states^{r}$, for $1 \leq t \leq \dimension$, with $\dimension \leq \vert\states^{r}\vert$, and an initial state $\initState^{r} \in \states^{r}$, the generalized reachability problem asks if there exists a strategy of $\playerOne$ such that any consistent outcome starting in $\initState^{r}$ visits a state of each set $R_{t}$ at least once. It is known that if such a strategy exists, then there exists one which ensures reaching all sets in at most $\dimension \cdot \vert\states^{r}\vert$ steps.

We build a $\dimension$-dimension fixed window mean-payoff game $\gameFull$ as follows. 
We define $\states_{{\sf branch}} \subset \states_{2}$, a set of additional states belonging to $\playerTwo$ and of the form $b_{s, s'}$, one for each $(s,s') \in \edges^{r}$. Let $\states_{1} = \states_{1}^{r}$ and $\states_{2} = \states_{2}^{r} \cup \states_{{\sf branch}} \cup \{s_{{\sf restart}}\}$. Let $\edges$ be the set of edges such that for all $(s, s') \in \edges^{r}$, we have that $(s, b_{s,s'}) \in \edges$, $(b_{s,s'}, s') \in \edges$, $(b_{s,s'}, s_{{\sf restart}}) \in \edges$, and such that $(s_{{\sf restart}}, \initState^{r}) \in \edges$.
That is, we introduce in all edges of $\edges^{r}$ a state of $\playerTwo$ that let him branch to an added state $s_{\text{restart}}$ or continue as in $\game^{r}$. The new initial state in $\game$ is $s_{\text{restart}}$, and there is an edge from $s_{\text{restart}}$ to the old initial state $\initState^{r}$. The weights are as follows: all edges from states $b_{s,s'}$ to $s_{\text{restart}}$ have value $1$ in all dimensions. The edge from $s_{\text{restart}}$ to $\initState^{r}$ has value $-1$ in all dimensions. All other edges of the game have value zero, except edges entering a state that belongs to a reachability set $R_{t}$, which have value $1$ in dimension $t$ and $0$ in the other dimensions. If a state belongs to several sets, then all corresponding dimensions get a $1$.

We claim that $\playerOne$ has a winning strategy for $\fixedWindowMPObjTW{\zeroVector}{\sizeMax = 2 \cdot \dimension \cdot \vert\states^{r}\vert}$ if and only if he has a winning strategy for the generalized reachability objective in $\game^{r}$. Consider the game $\game$. The only edge involving negative values is $(s_{\text{restart}}, \initState^{r})$, which value is $(-1, \ldots{}, -1)$. Therefore, a losing play for Eq.~\eqref{eq:fixedWindowObj} should see this edge infinitely often, as it is the starting position of all open windows. Notice that on the other hand, going from a state $b_{s,s'}$ to $s_{\text{restart}}$ involves an edge of value $(1, \ldots{}, 1)$, hence if the open window starting in $s_{\text{restart}}$ comes back in $s_{\text{restart}}$ before hitting its maximal size, the window will close. So the strategy of $\playerTwo$ should be to wait for $\sizeMax = 2 \cdot \dimension \cdot \vert\states^{r}\vert$ steps before forcing a restart. Now, consider a winning strategy $\strat_{1}$ of $\playerOne$ in $\game$. Because of the strategy of $\playerTwo$, $\strat_{1}$ has to ensure obtaining~$+1$ in all dimensions by only using transitions entering in states of $\states^{r}$. By construction, this implies that $\strat_{1}$ enforces a visit of all reachability sets, and thus is winning for the generalized reachability problem. Consider the converse implication. Let $\strat_{1}^{r}$ be a winning strategy in $\game^{r}$. There exists such a strategy that ensures seeing all reachability sets (thus closing all windows) in at most $\sizeMax = 2 \cdot \dimension \cdot \vert\states^{r}\vert$ steps if $\playerTwo$ does not branch to $s_{\text{restart}}$. On the other hand, if $\playerTwo$ does branch before $\sizeMax$ steps, all windows also close, as branching edges have value $(1, \ldots{}, 1)$. Hence, this strategy is also winning for $\fixedWindowMPObjT{\zeroVector}$. This shows the correctness of the reduction and concludes our proof.
\end{proof}

We conclude our study of the multi-dimension fixed window problem by considering memory bounds. A direct corollary of Lemma~\ref{lem:multiDimReducCB} is the existence of winning strategies of at most exponential size for both players, as memoryless strategies are sufficient in \coBuchi games~\cite{emerson_FOCS1991}. A corollary of the reduction from generalized reachability games to the fixed polynomial window problem used to prove Lemma~\ref{lem:multiDimGenReachReduc} and the results of~\cite[Lemma 2]{fijalkow_CORR2010} (showing exponential lower bounds on memory for generalized reachability objectives) is that such memory is needed in general, again for both players.

Another example of a family of games in which $\playerOne$ requires exponential memory (in the number of dimensions) is given by the family defined in~\cite[Lemma 8]{DBLP:journals/acta/ChatterjeeRR14} (Fig.~\ref{fig:multiDimExpFamily}), introduced in the context of multi energy games.

\begin{example}
We define a family of games $(\game(\gadgets))_{\gadgets \geq 1}$ which is an assembly of $\dimension = 2\cdot\gadgets$ gadgets, the first $\gadgets$ belonging to $\playerTwo$, and the remaining $\gadgets$ belonging to $\playerOne$ (Fig.~\ref{fig:multiDimExpFamily}). Precisely, we have $\vert\states_{1}\vert = \vert\states_{2}\vert = 3\cdot\gadgets$, $\vert\states\vert = \vert\edges\vert = 6\cdot\gadgets = 3\cdot \dimension$ (linear in $\dimension$), $\dimension = 2\cdot\gadgets$, and $\weight$ defined as:
\begin{align*}
\forall\, 1 \leq i \leq \gadgets,\, &\weight((\circ, s_{i})) = \weight((\circ, t_{i})) = (0, \ldots{}, 0),\\
&\weight((s_{i}, s_{i, L})) = - \weight((s_{i}, s_{i, R})) = \weight((t_{i}, t_{i, L})) = - \weight((t_{i}, t_{i, R})),\\
&\forall\, 1 \leq j \leq \dimension,\, \weight((s_{i}, s_{i, L}))(j) = \begin{cases}1 \text{ if } j = 2\cdot i - 1\\-1 \text{ if } j = 2\cdot i\\0 \text{ otherwise}\end{cases},
\end{align*}
where $\circ$ denotes any valid predecessor state.

Essentially, in each state $s_{i}$, $\playerTwo$ can open a window on either dimension $2\cdot i -1$ or dimension $2 \cdot i$ by choosing the corresponding edge. In this game, $\playerOne$ wins objective $\fixedWindowMPObjTW{\zeroVector}{\sizeMax = \statesSize/2}$ only if he is able to make in $t_{i}$ the opposite choice of $\playerTwo$ in $s_{i}$, as this ensures closure of the corresponding window. This requires a strategy encoded as a Moore machine with at least $2^{\dimension / 2}$ states. Indeed, if $\playerOne$ cannot differentiate between the exponential number of histories from $s_{i}$ up to $t_{i}$, he is not able to enforce closure of the needed windows.
\end{example}

\begin{figure}[tb]
  \centering   
  \scalebox{0.8}{\begin{tikzpicture}[->,>=stealth',shorten >=1pt,auto,node
    distance=2.5cm,bend angle=45,scale=0.4, font=\small]
    \tikzstyle{p1}=[draw,circle,text centered,minimum size=8mm]
    \tikzstyle{p2}=[draw,rectangle,text centered,minimum size=8mm]
    \node[p2]  (0)  at (0, 0) {$s_{1}$};
    \node[p2]  (1) at (4, 2) {$s_{1,L}$};
    \node[p2]  (2) at (4, -2)  {$s_{1,R}$};
    \node[p2]  (3) at (8, 0)  {$s_{\gadgets}$};
    \node[p2]  (4)  at (12, 2) {$s_{\gadgets,L}$};
    \node[p2]  (5)  at (12, -2) {$s_{\gadgets,R}$};
    \node[p1]  (6)  at (16, 0) {$t_{1}$};
    \node[p1]  (7) at (20, 2) {$t_{1,L}$};
    \node[p1]  (8) at (20, -2)  {$t_{1,R}$};
    \node[p1]  (9) at (24, 0)  {$t_{\gadgets}$};
    \node[p1]  (10)  at (28, 2) {$t_{\gadgets,L}$};
    \node[p1]  (11)  at (28, -2) {$t_{\gadgets,R}$};
    \coordinate[shift={(-5mm,0mm)}] (init) at (0.west);
    \path
    (init) edge (0);
	\draw[->,>=latex] (0) to (1);
	\draw[->,>=latex] (0) to (2);
	\draw[->,>=latex] (3) to (4);
	\draw[->,>=latex] (3) to (5);
	\draw[dotted,->,>=latex] (1) to (3);
	\draw[dotted,->,>=latex] (2) to (3);
	\draw[->,>=latex] (4) to (6);
	\draw[->,>=latex] (5) to (6);
	\draw[->,>=latex] (6) to (7);
	\draw[->,>=latex] (6) to (8);
	\draw[dotted,->,>=latex] (7) to (9);
	\draw[dotted,->,>=latex] (8) to (9);
	\draw[->,>=latex] (9) to (10);
	\draw[->,>=latex] (9) to (11);
	\draw[->,>=latex] (10) to[out=160,in=60] (0);
	\draw[->,>=latex] (11) to[out=200,in=300] (0);
      \end{tikzpicture}}
\vspace*{-8mm}
      \caption[Family of games requiring exponential memory]{Family of multi-dimension games requiring exponential memory for $\playerOne$, for the fixed window objective.}
      \label{fig:multiDimExpFamily}
  \end{figure}

\begin{lemma}
\label{lem:multiDimFixedMemory}
In multi-dimension games with a fixed window mean-payoff objective, exponential memory is both sufficient and necessary for both players in general, even for polynomial window sizes.
\end{lemma}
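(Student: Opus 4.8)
The plan is to treat the two halves of the statement --- sufficiency and necessity of exponential memory --- by entirely different means, each leaning on a reduction already established in this section.

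For the \emph{sufficiency} of exponential memory, I would invoke the reduction of Lemma \ref{lem:multiDimReducCB}, which turns the fixed window mean-payoff game $\game$ into an unweighted \coBuchi game $\gameCB$ whose state space $\states \times \left(\{-\largestW\cdot\sizeMax, \ldots, 0\} \times \{1, \ldots, \sizeMax\}\right)^{\dimension}$ (together with the bad states) has size $\mathcal{O}(\statesSize \cdot (\sizeMax)^{2\dimension} \cdot \largestW^{\dimension})$, i.e. exponential in the size of the encoding of $\game$. Since \coBuchi games are memoryless-determined for both players \cite{emerson_FOCS1991}, each player has a memoryless winning strategy on $\gameCB$. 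The key observation is that the extra components of a state of $\gameCB$ --- the current negative sum and the number of remaining steps in each dimension --- are precisely the information a strategy of $\game$ must carry as memory: projecting a memoryless strategy of $\gameCB$ down to $\game$ yields a finite-memory strategy whose memory set is the set of reachable window-status tuples, of size $\mathcal{O}((\sizeMax)^{2\dimension} \cdot \largestW^{\dimension})$. This gives the exponential upper bound for both players.

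For the \emph{necessity}, I would argue through the reduction from generalized reachability games used to prove Lemma \ref{lem:multiDimGenReachReduc}, recalling that \cite[Lemma 2]{fijalkow_CORR2010} exhibits generalized reachability games in which winning requires memory exponential in the number of target sets. Because that reduction uses only a polynomial window size ($\sizeMax = 2\cdot\dimension\cdot\statesSizeP{r}$) and is memory-faithful in both directions --- a winning strategy of small memory in the fixed window game would, by the correctness established in Lemma \ref{lem:multiDimGenReachReduc}, project back to a winning strategy of comparable memory in the generalized reachability game --- the exponential lower bound transfers, and it does so already for polynomial windows. As a concrete and self-contained witness for $\playerOne$, I would also present the family of Fig. \ref{fig:multiDimExpFamily}, drawn from \cite[Lemma 6]{chatterjee_CONCUR2012}: here $\playerOne$ wins $\fixedWindowMPObjTW{\zeroVector}{\sizeMax = \statesSize/2}$ only by making, in each gadget $t_{i}$, the choice opposite to the one $\playerTwo$ made in $s_{i}$, so that distinguishing the $2^{\dimension/2}$ possible sequences of $\playerTwo$'s choices forces any winning Moore machine to have at least $2^{\dimension/2}$ states.

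The main obstacle I anticipate lies in the necessity direction for $\playerTwo$. The explicit family of Fig. \ref{fig:multiDimExpFamily} only witnesses a lower bound for $\playerOne$, so the exponential bound for $\playerTwo$ must come solely through the generalized reachability reduction; there I must verify that the memory lower bound of \cite{fijalkow_CORR2010} genuinely constrains the \emph{avoiding} player, and that the branching/restart gadget introduced in Lemma \ref{lem:multiDimGenReachReduc} does not let $\playerTwo$ economize on memory relative to the original game. Upgrading the winner-preserving correspondence of Lemma \ref{lem:multiDimGenReachReduc} into a genuinely memory-faithful correspondence in both directions is the delicate part of the argument.
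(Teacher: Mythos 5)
Your proposal is correct and follows essentially the same route as the paper: the exponential upper bound for both players via the \coBuchi reduction of Lemma \ref{lem:multiDimReducCB} and memoryless determinacy \cite{emerson_FOCS1991}, and the lower bound via the generalized reachability reduction of Lemma \ref{lem:multiDimGenReachReduc} combined with \cite[Lemma 2]{fijalkow_CORR2010}, supplemented by the explicit family of Fig. \ref{fig:multiDimExpFamily} for $\playerOne$. The subtlety you flag about $\playerTwo$ is handled the same way in the paper, which invokes the fact that the lower bounds of \cite{fijalkow_CORR2010} hold for both players and transfer through the reduction.
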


\smallskip\noindent\textbf{Fixed window: summary.} We summarize the complexity of the fixed window problem in Theorem~\ref{thm:multiDimFixed}.

\begin{theorem}
\label{thm:multiDimFixed}
In two-player multi-dimension games, the fixed arbitrary window mean-payoff problem is EXPTIME-complete, and the fixed polynomial window mean-payoff problem is PSPACE-hard. For both players, exponential memory is sufficient and is required in general.
\end{theorem}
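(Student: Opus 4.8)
The plan is to treat this as an assembly theorem: every quantitative claim it makes has already been established in isolation by the preceding lemmas and corollary, so the work is to combine them correctly and to check that the parameter regimes line up. I would organize the argument into three independent parts — EXPTIME-completeness of the arbitrary window problem, PSPACE-hardness of the polynomial window problem, and the two-sided exponential memory bound — and prove each by invoking the corresponding result.

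For EXPTIME-completeness I would first dispatch membership. By Corollary~\ref{cor:multiDimEXPTIMEeasy}, the fixed window problem reduces to a \coBuchi game whose size is $\mathcal{O}(\statesSize \cdot (\sizeMax)^{2\cdot\dimension} \cdot \largestW^{\dimension})$, solvable in quadratic time. The step that needs a sentence of justification is that this bound is genuinely singly-exponential in the \emph{input} size: since weights are encoded in binary, $\largestW = 2^{\bits}$, and the arbitrary window size $\sizeMax$ is likewise given in binary, both $(\sizeMax)^{2\dimension}$ and $\largestW^{\dimension}$ are of the form $2^{\mathcal{O}(\dimension\cdot(\log\sizeMax + \bits))}$, hence exponential but no worse. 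This places the problem in EXPTIME. For the matching lower bound I would cite either Lemma~\ref{lem:multiDimFixedHardMembership} (EXPTIME-hardness already with $\{-1,0,1\}$ weights and arbitrarily many dimensions, via the APTM membership problem) or Lemma~\ref{lem:multiDimFixedHardCountdown} (EXPTIME-hardness already with two dimensions and arbitrary weights, via countdown games); either alone suffices for hardness of the general problem, and quoting both yields the stronger statement that hardness persists in each of these restricted regimes. Together these give EXPTIME-completeness.

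The remaining two clauses are immediate: PSPACE-hardness for the polynomial window problem is exactly Lemma~\ref{lem:multiDimGenReachReduc} (reduction from generalized reachability games, where the window bound $\sizeMax = 2\cdot\dimension\cdot\statesSizeP{r}$ is polynomial), and the memory characterization is exactly Lemma~\ref{lem:multiDimFixedMemory}: the exponential \emph{upper} bound follows from the reduction of Lemma~\ref{lem:multiDimReducCB} together with memorylessness of \coBuchi strategies, while the exponential \emph{lower} bound for both players comes from the generalized-reachability family (via \cite{fijalkow_CORR2010}) and the family of Fig.~\ref{fig:multiDimExpFamily}, both already established to force $2^{\Omega(\dimension)}$ memory even at polynomial window size.

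The main obstacle, such as it is, is not any new mathematical content but bookkeeping across parameter regimes: I must be careful that the EXPTIME-hardness lemmas apply to the \emph{arbitrary} window setting (where membership is only EXPTIME, not PSPACE) rather than the polynomial one, and conversely that the PSPACE-hardness and the exponential memory lower bound are correctly attributed to the \emph{polynomial} window regime so that they sharpen, rather than contradict, the arbitrary-window results. I would therefore state explicitly in the write-up which window regime each cited bound governs, so that the reader sees that the EXPTIME upper and lower bounds meet precisely in the arbitrary case and that PSPACE-hardness is the best lower bound currently available in the polynomial case, leaving the PSPACE-versus-EXPTIME gap there open, as already flagged before Lemma~\ref{lem:multiDimGenReachReduc}.
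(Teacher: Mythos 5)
Your proposal is correct and follows exactly the paper's own route: the theorem is stated as a summary that assembles Corollary~\ref{cor:multiDimEXPTIMEeasy} (EXPTIME membership via the \coBuchi reduction of Lemma~\ref{lem:multiDimReducCB}), Lemmas~\ref{lem:multiDimFixedHardMembership} and~\ref{lem:multiDimFixedHardCountdown} (EXPTIME-hardness in the two restricted regimes), Lemma~\ref{lem:multiDimGenReachReduc} (PSPACE-hardness for polynomial windows), and Lemma~\ref{lem:multiDimFixedMemory} (exponential memory bounds). Your added remark that the \coBuchi game size $\mathcal{O}(\statesSize \cdot (\sizeMax)^{2\dimension} \cdot \largestW^{\dimension})$ is singly exponential in the binary-encoded input is a sound and worthwhile explicit check that the paper leaves implicit.
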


\smallskip\noindent\textbf{$\Bounded$ window.}
Unlike the one-dimension case, in which it is easier to decide the $\bounded$ problem than the fixed arbitrary one (i.e., the problem becomes easier when the fixed window size is sufficiently large), we prove that the complexity of the $\bounded$ window problem in multi-weighted games is at least non-primitive recursive.\footnote{That is, there exists no primitive recursive function that computes the answer to the $\bounded$ window problem. A well-known example of a decidable but non-primitive recursive function is the Ackermann function~\cite{ackermann1928}.} Hence, there is no hope for efficient algorithms on the complete class of two-player multi-weighted games.
This result is obtained by reduction from the problem of deciding the existence of an infinite execution in a \textit{marked reset net}, also known as the \textit{termination problem}. A marked reset net~\cite{dufourd98} is a Petri net with \textit{reset arcs} together with an initial marking of its places. Reset arcs are special arcs that reset a place (i.e., empty it of all its tokens). The termination problem for reset nets is decidable but non-primitive recursive hard (as follows from~\cite{schnoebelen02}, also discussed in~\cite{lazic08}).
 
\begin{figure}[htb]
  \centering   
  \scalebox{0.85}{\begin{tikzpicture}[->,>=stealth',shorten >=1pt,auto,node
    distance=2.5cm,bend angle=45,scale=0.3, font=\small]
    \tikzstyle{p1}=[draw,circle,text centered,minimum size=10mm]
    \tikzstyle{p2}=[draw,rectangle,text centered,minimum size=10mm]
    \tikzstyle{p3}=[draw,dashed,rectangle,text centered,minimum height=30mm, minimum width=40mm]
    \node[p3] (gadget1) at (0,0) {};
    \node[p3] (gadget2) at (30,0) {};
    \node[p1] (zt1) at (4,-2) {${\sf test}_{t}$};
    \node[p2] (trans1) at (0,2.8) {${\sf fire}$}; 
    \node[p1] (zt2) at (34,-2) {${\sf test}_{t}$};
    \node[p2] (trans2) at (30,2.8) {${\sf fire}$};
    \node[p1] (close1) at (11,-2) {${\sf close}_p$};
    \node[p2] (delay1) at (19,-2) {${\sf delay}$};
    \node[p1] (close2) at (19,10) {${\sf close}_p$};
    \node[p2] (delay2) at (11,10) {${\sf delay}$};
    \path
    (close1) edge [loop below, out=240, in=300,looseness=3, distance=4cm] node [below] {$(\resetNetOneZeroVector, 1, 1, -1)$} (close1)
    (delay1) edge [loop below, out=240, in=300,looseness=3, distance=4cm] node [below] {$(\resetNetZeroVector, 0, 1, 1)$} (delay1)
    (close2) edge [loop above, out=60, in=120,looseness=3, distance=4cm] node [above] {$(\resetNetOneZeroVector, 1, -1, 1)$} (close2)
    (delay2) edge [loop above, out=60, in=120,looseness=3, distance=4cm] node [above] {$(\resetNetZeroVector, 0, 1, 1)$} (delay2)
    ;
	\draw[->,>=latex] (delay1) to[out=90,in=90,looseness=1.5] node[left,yshift=-8mm,xshift=-11mm] {$(-\resetNetInitMarking - \resetNetOneVector, 0, 0, 0)$} node[left,yshift=-5mm,xshift=-18mm] {\textit{restart}} (trans2);
	\draw[->,>=latex] (delay2) to[out=180,in=90] node[above,yshift=2mm,xshift=-5mm] {$(-\resetNetInitMarking - \resetNetOneVector, 0, 0, 0)$} node[above,yshift=5.5mm,xshift=-5.5mm] {\textit{restart}} (trans1);
	\draw[->,>=latex] (zt2) .. controls ([xshift=8cm,yshift=4cm] zt2) and ([xshift=20cm] delay2) .. node[above,yshift=6mm,xshift=-4mm] {\textit{place} $p$} (close2);
	\draw[->,>=latex] (close1) to (delay1);
	\draw[->,>=latex] (close2) to (delay2);
	\draw[->,>=latex] (zt1) to node[below] {\textit{place} $p$} (close1);
      \end{tikzpicture}}
      \caption{Careful alternation between gadgets is needed in order for $\playerOne$ to win.}
\label{fig:multiDimFiniteAlternatingGadgets}
  \end{figure}
  
Given a reset net $\resetNet$ with an initial marking $\resetNetInitMarking \in \nat^{\resetNetPlacesSize}$ (where $\resetNetPlaces$ is the set of places of the net), we build a two-player multi-weighted game $\game$ with $\dimension = \resetNetPlacesSize + 3$ dimensions such that $\playerOne$ wins the $\bounded$ window objective for threshold $\zeroVector$ if and only if $\resetNet$ does not have an infinite execution from $\resetNetInitMarking$.

A high level description of our reduction is as follows. The structure of the game (Fig.~\ref{fig:multiDimFiniteAlternatingGadgets}) is based on the alternance between two gadgets simulating the net (Fig.~\ref{fig:multiDimFiniteGadgetZoom}). Edges are labeled by $\dimension$-dimension weight vectors such that the first $\resetNetPlacesSize$ dimensions are used to encode the number of tokens in each place. In each gadget, $\playerTwo$ chooses transitions to simulate an execution of the net. During a faithful simulation, there is always a running open window in all the first $\resetNetPlacesSize$ dimensions: if place $p$ contains $n$ tokens then the negative sum from the start of the simulation is $-(n+1)$. This is achieved as follows: if a transition $t$ consumes $\resetNetInputT{t}(p)$ tokens from $p$, then this value is added on the corresponding dimension, and if $t$ produces $\resetNetOutputT{t}(p)$ tokens in $p$, then $\resetNetOutputT{t}(p)$ is removed from the corresponding dimension. When a place $p$ is reset, a gadget ensures that dimension $p$ reaches value $-1$ (the coding of zero tokens). This is thanks to the monotonicity property of reset nets: if $\playerOne$ does not simulate a full reset, then the situation gets easier for $\playerTwo$ as it leaves him more tokens available. If all executions terminate, $\playerTwo$ has to choose an unfireable transition at some point, consuming unavailable tokens from some place $p \in \resetNetPlaces$. If so, the window in dimension $p$ closes. After each transition choice of $\playerTwo$, $\playerOne$ can either continue the simulation or branch out of the gadget to close all windows, except in some dimension $p$ of his choice. Then $\playerTwo$ can arbitrarily extend any still open window in the first $(\resetNetPlacesSize + 1)$ dimensions and restart the game afterwards. Dimension $(\resetNetPlacesSize + 1)$ prevents $\playerOne$ from staying forever in a gadget. If an infinite execution exists, $\playerTwo$ simulates it and never has to choose an unfireable transition. Hence, when $\playerOne$ branches out, the window in some dimension $p$ stays open. The last two dimensions force him to alternate between gadgets so that he cannot take profit of the prefix-independence to win after a faithful simulation. So, $\playerTwo$ can delay the closing of the open window for longer and longer, thus winning the game.

\vspace{-3mm}

\begin{figure}[thb]
  \centering   
  \scalebox{0.85}{\begin{tikzpicture}[->,>=stealth',shorten >=1pt,auto,node
    distance=2.5cm,bend angle=45,scale=0.35, font=\small]
    \tikzstyle{p1}=[draw,circle,text centered,minimum size=10mm]
    \tikzstyle{p2}=[draw,rectangle,text centered,minimum size=10mm]
    \tikzstyle{p3}=[draw,dashed,rectangle,text centered,minimum height=60mm, minimum width=60mm]
    \tikzstyle{p4}=[]
    \node[p2] (trans) at (5,5) {${\sf fire}$};
    \node[p1] (test) at (5,-5) {${\sf test}_{t_{i}}$};
    \node[p1] (reset) at (-5,-5) {${\sf reset}_q$};
    \node[p1] (out) at (-5,5) {${\sf out}$};
    \node[p4] (trans1) at (2, 1) {};
    \node[p4] (trans2) at (8, 1) {};
    \node[p4] (a) at (3.5, 2) {$\ldots{}$};
    \node[p4] (b) at (6.5, 2) {$\ldots{}$};
    \coordinate[shift={(0mm,10mm)}] (init) at (trans.north);
    \coordinate[shift={(16mm,0mm)}] (close) at (test.east);
    \path
    (trans) edge node[left,xshift=-6mm,yshift=-1mm] {$t_{i}$} node[left,yshift=-4mm] {$(\resetNetInputT{t_{i}},-1,0,0)$} (test)
    (test) edge node[below] {$(\resetNetZeroVector, -1, 0, 0)$} (reset)
    (reset) edge node [above,rotate=90] {$(\resetNetZeroMinusOneVector, -1, 0, 0)$} (out)
    (reset) edge [loop below, out=240, in=300,looseness=3, distance=3cm] node [below] {$(\resetNetZeroOneVector, -1, 0, 0)$} (reset)
    (out) edge node[above] {$(-\resetNetOutputT{t_{i}},-1,0,0)$} (trans)
    (test) edge node[below,xshift=-2mm] {\textit{place} $p$} (close)
    (init) edge node[above,yshift=1mm,xshift=-12mm] {$(-\resetNetInitMarking - \resetNetOneVector, 0, 0, 0)$} node[above,yshift=5mm,xshift=-12mm] {\textit{restart}} (trans)
    ;
    \draw[->,>=latex] (trans) to[out=210, in=90] node[left] {$t_{1}$} (trans1);
    \draw[->,>=latex] (trans) to[out=330, in=90] node[right] {$t_{\resetNetTransSize}$} (trans2);
    \draw[-,dashed] (-9, 7.5) -- (10, 7.5) -- (10,-10) -- (-9,-10) -- (-9, 7.5);
      \end{tikzpicture}}
      \caption{Gadget simulating an execution of the reset net.}
\label{fig:multiDimFiniteGadgetZoom}
  \end{figure}
  
\vspace{-2mm}
  
\begin{theorem}
\label{thm:multiDimFinite}
In two-player multi-dimension games, the $\bounded$ window mean-payoff problem is non-primitive recursive hard.
\end{theorem}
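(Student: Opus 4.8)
The plan is to prove hardness by a reduction from the \emph{termination problem} for marked reset nets---given a reset net $\resetNet$ with initial marking $\resetNetInitMarking$, decide whether $\resetNet$ admits an infinite execution---which is decidable but non-primitive recursive hard, as recalled above. From $(\resetNet, \resetNetInitMarking)$ I would build a two-player game $\game$ with $\dimension = \resetNetPlacesSize + 3$ dimensions such that $\playerOne$ wins the $\bounded$ window mean-payoff problem for threshold $\zeroVector$ if and only if $\resetNet$ \emph{terminates} (has no infinite execution). The central idea is to encode a marking by open windows: a place $p$ holding $n$ tokens is represented by a running sum of $-(n+1)$ in dimension $p$, so that during a faithful simulation all of the first $\resetNetPlacesSize$ dimensions carry permanently open windows. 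Firing a transition $t$ then adds $\resetNetInputT{t}$ to the place-dimensions (consuming tokens raises the sum toward zero) and subtracts $\resetNetOutputT{t}$ (producing tokens lowers it), while a reset of a place $q$ is realised by a short $\playerOne$-controlled loop that raises dimension $q$ to the zero-token value $-1$.

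Responsibility would be split so that $\playerTwo$ chooses which transition to fire---driving the simulated execution---while $\playerOne$ decides after each choice whether to continue or to branch out. Three auxiliary dimensions enforce the intended dynamics. Every edge inside a simulation gadget carries $-1$ in dimension $\resetNetPlacesSize+1$, which forbids $\playerOne$ from simulating forever and forces him periodically into a $\mathsf{close}_p$ state whose self-loop adds $+1$ to every place-dimension \emph{except} one dimension $p$ of his choice (and closes dimension $\resetNetPlacesSize+1$); a subsequent $\mathsf{delay}$ state lets $\playerTwo$ keep any still-open window open arbitrarily long before a restart edge re-installs the marking $\resetNetInitMarking$. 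Finally I would use two mirror copies of the gadget whose $\mathsf{close}$ loops carry opposite signs in the last two dimensions, so that the window each $\mathsf{close}$ phase necessarily opens in one of these dimensions can only be closed inside the \emph{other} gadget; this forces an infinite alternation, hence infinitely many branch-out rounds, and is what prevents $\playerOne$ from exploiting the prefix-independence of the objective.

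For correctness I would argue both directions. If $\resetNet$ has an infinite execution, $\playerTwo$ simulates it faithfully in every round and never fires an unfireable transition, so no place-window ever closes on its own; whatever dimension $p$ $\playerOne$ spares when branching out, its window is still open, and in the $n$-th round $\playerTwo$ stays in $\mathsf{delay}$ for at least $n$ steps, producing open windows of unbounded size infinitely often and thus falsifying $\finiteWindowMPObjT{\zeroVector}$. Conversely, if $\resetNet$ terminates then, since its reachability tree is finitely branching and all of its branches are finite, König's lemma makes it finite and yields uniform bounds $M$ on reachable token counts and $L$ on execution length. Playing faithful resets, $\playerOne$ forces $\playerTwo$ within $L$ steps to fire an unfireable transition (or to reach a deadlock), which raises some place-dimension to a non-negative value; $\playerOne$ then branches out sparing exactly that already-closed dimension, so $\mathsf{close}_p$ closes all remaining windows in at most $M+1$ loops, and the auxiliary dimensions close within one alternation. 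A single window size depending only on $M$, $L$ and $\resetNetPlacesSize$ therefore works from the start, so $\playerOne$ wins.

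The main obstacle is to show that $\playerOne$ can gain nothing by deviating from a faithful simulation and to make the window bookkeeping exact. The essential tool is the monotonicity of reset nets: any partial reset leaves $\playerTwo$ at least as many tokens, which by monotonicity can only preserve an existing infinite execution, so imperfect resets never help $\playerOne$; the cost recorded in dimension $\resetNetPlacesSize+1$ and the timing of the branch-out---available only after $\playerTwo$'s next move, during which a still-open place can be re-fed---are what neutralise the remaining deviations. Turning this into a clean invariant, namely that at each faithful position the value in dimension $p$ equals $-(\text{tokens in }p)-1$, every place-window is open, and windows close precisely when the net can no longer proceed, and verifying that the two-gadget alternation genuinely forces infinitely many delayable rounds, is where the bulk of the careful case analysis lies.
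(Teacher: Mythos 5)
Your proposal follows essentially the same reduction as the paper: termination of marked reset nets encoded into a game with $\resetNetPlacesSize + 3$ dimensions, markings represented as open windows with sum $-(n+1)$ per place, $\playerTwo$ firing transitions, $\playerOne$ performing resets and branching out to $\mathsf{close}_p$/$\mathsf{delay}$ states, the extra dimension forbidding indefinite simulation, and two mirrored gadgets whose opposite-sign loops force alternation and defeat prefix-independence, with monotonicity of reset nets handling partial resets. The correctness argument in both directions (bounded closing time when the net terminates, unboundedly delayed windows when an infinite execution exists) matches the paper's, so the proposal is correct and essentially identical in approach.
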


\begin{proof}
We prove a reduction from the termination problem on reset nets to the $\bounded$ window problem on two-player multi-weighted games. The former is known to be non-primitive recursive hard~\cite{schnoebelen02,lazic08}.

Let $\resetNet = \left\langle \resetNetPlaces, \resetNetTrans, \resetNetInput, \resetNetOutput, \resetNetReset\right\rangle $ be a \textit{reset net} such that
\begin{itemize}
\item $\resetNetPlaces = \{ p_{1}, p_{2}, \ldots{}, p_{\resetNetPlacesSize}\}$ is the set of places;
\item $\resetNetTrans = \{ t_{1}, t_{2}, \ldots{}, t_{\resetNetTransSize}\}$ is the set of transitions;
\item $\resetNetInput\colon \resetNetTrans \rightarrow \nat^{\resetNetPlacesSize}$ is the input function, such that for each transition $t \in \resetNetTrans$, $\resetNetInputT{t}$ is a $\resetNetPlacesSize$-dimen\-sion vector such that for all dimension $p \in \{1, \ldots{}, \resetNetPlacesSize\}$, $\resetNetInputT{t}(p)$ specifies the number of tokens from place $p$ consumed by the transition~$t$;\footnote{For simplicity, we use $p$ to refer to a place $p \in \resetNetPlaces$ and to the number $i \in \{1, \ldots{}, \resetNetPlacesSize\}$ such that $p_{i} = p$, that is $p$ indistinctly refers to the place and the corresponding dimension in the weight vectors.}
\item $\resetNetOutput\colon \resetNetTrans \rightarrow \nat^{\resetNetPlacesSize}$ is the output function, such that for each transition $t \in \resetNetTrans$, $\resetNetOutputT{t}$ is a $\resetNetPlacesSize$-dimension vector such that for all dimension $p \in \{1, \ldots{}, \resetNetPlacesSize\}$, $\resetNetOutputT{t}(p)$ specifies the number of tokens produced in place $p$ by the transition~$t$;
\item $\resetNetReset\colon \resetNetTrans \rightarrow \resetNetPlaces$ is the reset function, such that for all transition $t \in \resetNetTrans$, $\resetNetReset(t)$ specifies the unique place (w.l.o.g.) which is reset by transition~$t$.
\end{itemize}
Given an initial marking of the places (i.e., an initial number of tokens in each place) $\resetNetInitMarking \in \nat^{\resetNetPlacesSize}$, the termination problem asks if there exists an infinite execution of the net, that is, if there exists an infinite sequence of transitions that can be fired from $\resetNetInitMarking$. A transition $t$ is \textit{fireable} from marking $\resetNetMarking \in \nat^{\resetNetPlacesSize}$ if for all place $p \in \resetNetPlaces$, $\resetNetInput(t)(p) \leq \resetNetMarking(p)$. An execution terminates if no transition can be fired because the necessary tokens are unavailable. We first note an important \textit{monotonicity} property of reset nets: for all reset net $\resetNet = \left\langle \resetNetPlaces, \resetNetTrans, \resetNetInput, \resetNetOutput, \resetNetReset\right\rangle$, for all markings $\resetNetMarking, \resetNetMarkingBis \in \nat^{\resetNetPlacesSize}$, if $\resetNetMarking \leq \resetNetMarkingBis$ and $\rho \in \resetNetTrans^{\omega}$ is an infinite sequence of transitions fireable from $\resetNetMarking$, then $\rho$ is also fireable from $\resetNetMarkingBis$. This property is used later on.

We claim that given a reset net $\resetNet$ and an initial marking $\resetNetInitMarking$, we can build in polynomial time a multi-weighted game $\game$ in which $\playerOne$ has a winning strategy for objective $\finiteWindowMPObjT{0}$ if and only if there exists no infinite execution of the net $\resetNet$ from $\resetNetInitMarking$.

We build the game $\gameFull$ with $\dimension = \resetNetPlacesSize + 3$ as represented in Fig.~\ref{fig:multiDimFiniteAlternatingGadgets} and Fig.~\ref{fig:multiDimFiniteGadgetZoom}. Unlabeled edges have value zero in all dimensions. For clarity, we define the following $\resetNetPlacesSize$-dimension integer vectors: $\resetNetOneVector = (1, \ldots{}, 1)$ is the unit vector, $\resetNetZeroVector = (0, \ldots{}, 0)$ is the zero vector, and, for $a, b \in \integ$, $p \in \resetNetPlaces$, the vector $\resetNetVector$ represents the vector $(a, \ldots{}, a, b, a, \ldots{}, a)$ which has value $b$ in dimension $p$ and $a$ in the other dimensions. The first $\resetNetPlacesSize$ dimensions of the game are used to encode the tokens present in each place, whereas the last three are used to compel $\playerOne$ to act fairly. Our construction will ensure that at all times along a valid execution of the net in a gadget, if a place $p \in \resetNetPlaces$ possess $n$ tokens, then the running sum of weights over the largest open window has value $(-n-1)$ in dimension $p$.

The states and edges of the game are built as follows.
\begin{itemize}
\item Inside a gadget, we have a state ${\sf fire}$ belonging to $\playerTwo$, with $\resetNetTransSize$ outgoing edges corresponding to the $\resetNetTransSize$ transitions of the net. Each transition $t$ is encoded as follows:
\begin{itemize}
\item an edge from ${\sf fire}$ to a state ${\sf test}_{t}$ belonging to $\playerOne$, of value $(\resetNetInputT{t},-1,0,0)$, such that the running sum is updated to accurately encode the consumption of tokens;
\item in state ${\sf test}_{t}$, $(\resetNetPlacesSize + 1)$ outgoing edges, giving $\playerOne$ the possibility to either branch out of the gadget, going to the state ${\sf close}_p$ corresponding to the dimension $p$ of his choice, or continuing via an edge of value $(\resetNetZeroVector, -1, 0, 0)$ to the ${\sf reset}_q$ state, a state of $\playerOne$ such that $q = \resetNetReset(t)$ is the unique place reset by transition $t$;
\item a self-loop of value $(\resetNetZeroOneVector, -1, 0, 0)$ on the ${\sf reset}_q$ state;
\item an edge from ${\sf reset}_q$ to ${\sf out}_{t}$ of value $(\resetNetZeroMinusOneVector, -1, 0, 0)$ which purpose is to ensure that in dimension $q$, there is a new open window of sum $-1$ after a full reset (i.e., it encodes that the number of tokens in place $q$ is zero);
\item an edge from ${\sf out}_{t}$ back to ${\sf fire}$ of value $(-\resetNetOutputT{t},-1,0,0)$, producing tokens according to the output of transition $t$. 
\end{itemize}
\item Branching from the left gadget leads to a state ${\sf close}_{p}^{\text{left}}$ of $\playerOne$ with a self-loop of weight $(\resetNetOneZeroVector, 1, 1, -1)$ and an outgoing edge to state ${\sf delay}^{\text{left}}$ of $\playerTwo$.
\item State ${\sf delay}^{\text{left}}$ possess a self-loop of value $(\resetNetZeroVector, 0, 1, 1)$ and an edge going to the right gadget with value $(-\resetNetInitMarking - \resetNetOneVector, 0, 0, 0)$.
\item The right gadget is constructed symmetrically, the only change being that the self-loop on states ${\sf close}_{p}^{\text{right}}$ of $\playerOne$ now has value $(\resetNetOneZeroVector, 1, -1, 1)$.
\end{itemize}
The game starts in the left gadget with an initial edge of value $(-\resetNetInitMarking - \resetNetOneVector, 0, 0, 0)$ corresponding to the initial marking of the net.

We claim that (i) if there exists no infinite execution $\rho \in \resetNetTrans^{\omega}$ of the net $\resetNet$, then $\playerOne$ has a winning strategy in $\game$ for the $\bounded$ window objective, and (ii) if there exists such an execution, then $\playerTwo$ has a winning strategy in $\game$. By determinacy, proving both claims will conclude our proof.

Case (i). Assume that there exists no infinite execution $\rho \in \resetNetTrans^{\omega}$ of the net. Then there exists a bound $b \in \nat$ on the length of any valid execution. Hence, $\playerTwo$ can only simulate the net faithfully for $b$ steps, so after at most $(b+1)$ steps, he needs to use an unfireable transition. That is, the next chosen transition requires more tokens than available in some place $p \in \resetNetPlaces$. We define a winning strategy $\strat_{1} \in \strats_{1}$ of $\playerOne$ in $\game$ as follows:
\begin{enumerate}
\item In a state ${\sf test}_{t}$, if the last transition $t$ was valid (i.e., all first $\resetNetPlacesSize$ dimensions have a negative running sum), go to the corresponding ${\sf reset}_{q}$ state. Otherwise, there exists a dimension $p$ in which the sum has become non-negative and all windows are closed: exit the gadget and go to the corresponding state ${\sf close}_{p}$.
\item In a state ${\sf reset}_{q}$, cycle until the sum in dimension $q$ takes value $0$, then go to state ${\sf out}_{t}$.
\item In a state ${\sf close}_{p}$, take the loop exactly $f(b)$ times before going to state ${\sf delay}$, where $f\colon \nat \rightarrow \nat$ is a well-chosen function that we define below (hence $f(b)$ is constant along the play).
\end{enumerate}
We claim that it is possible to define $f(b)$ sufficiently large to ensure that this strategy is winning. Let $\resetNetLargestM \in \nat$ be the largest number of tokens produced as output of any transition of the net, on any place. We consider the value of the negative sum in any of the first $(\resetNetPlacesSize +1)$ dimensions at the moment when $\playerOne$ decides to exit the gadget according to the strategy $\strat_{1}$. Notice that for any dimension $p \in \{1, \ldots{}, \resetNetPlacesSize\}$, this sum is bounded by $x = (- \resetNetInitMarking(p) - 1 - b\cdot \resetNetLargestM)$. Hence, the number of loops taken on any visit of state ${\sf reset}_{q}$ is bounded by $x$. The sum in dimension $(\resetNetPlacesSize +1)$ is thus bounded by $(b\cdot (4 + x) + 1)$, which we define as $f(b)$. The last two dimensions are not modified inside a gadget. Now clearly, looping in state ${\sf close}_{p}$ for $f(b)$ steps is sufficient to close all windows in all dimensions corresponding to places (recall that dimension $p$ is closed by $\playerTwo$ cheating on place $p$), as well as in dimension $(\resetNetPlacesSize +1)$. However, this loop opens a window in one of the last two dimensions (the last for the left gadget, and the second to last for the right gadget). As the ${\sf delay}$ state of $\playerTwo$ has a positive effect in those dimensions, if $\playerTwo$ decides to delay the play for $f(b)$ steps, all windows will be closed. If he does not delay, the play will proceed to the next gadget, in which $\playerTwo$ is also forced to cheat before $(b+1)$ transitions. Hence after looping for $f(b)$ steps in the corresponding ${\sf close}_{p}$ state, the open window will close (and another will open in the other dimension which will in turn be closed after the next gadget). By keeping this behavior, $\playerOne$ can thus enforce that any open window along the play will close in at most $(4 \cdot f(b) + 4)$ steps. Thus the outcome is winning for the $\bounded$ window objective.

Case (ii). Assume that there exists an infinite execution $\rho \in \resetNetTrans^{\omega}$ of the net. We define a winning strategy $\strat_{2} \in \strats_{2}$ of $\playerTwo$ as follows. The strategy is played in rounds, with the initial round being round $1$.
\begin{enumerate}
\item Every time a gadget is entered, start playing in state ${\sf fire}$ according to the infinite execution $\rho$, that is, choose transitions in order to obtain the same trace.
\item When a state ${\sf delay}$ is visited during round $n$, take the self-loop $n$ times then continue to state ${\sf fire}$ and start round $n+1$.
\end{enumerate}
Notice that this strategy requires infinite memory. We claim that any consistent outcome of the game is winning for $\playerTwo$, that is, it does not belong to $\finiteWindowMPObjT{0}$. First, $\playerOne$ cannot stay forever in a gadget, thanks to dimension $(\resetNetPlacesSize +1)$: he has to branch at some point otherwise the play is lost. Second, if in state ${\sf reset}_{q}$, $\playerOne$ decides to cycle for less than necessary for a full reset, the situation gets better for $\playerTwo$ by the monotonicity property of the reset net (as $\playerTwo$ gets to continue with more tokens than expected). Notice that $\playerOne$ cannot accumulate positive values in the sum, as the next edge will restart a new window and all accumulation will be forgotten with regard to the objective. Third, if $\playerOne$ branches and exits the gadget to go to some state ${\sf close}_{p}$, then all dimensions corresponding to places, including dimension $p$, have a running open window (dimension $p$ has a strictly negative value since $\playerTwo$ does not cheat). Hence, no matter how long $\playerOne$ chooses the self-loop, the window in dimension $p$ will stay open (and $\playerOne$ cannot stay here forever because of the last two dimensions). Fourth, when the play reaches a state ${\sf delay}$ with an open window in dimension $p \in \{1, \ldots{}, \resetNetPlacesSize\}$, the strategy $\strat_{2}$ prescribes that $\playerTwo$ will loop for longer and longer periods of time, thus enforcing open windows of constantly growing length. As a consequence, any consistent outcome is such that the $\bounded$ window objective is not satisfied, which proves our point and further concludes our proof.
\end{proof}

\begin{remark}
Theorem~\ref{thm:multiDimFinite} establishes that the $\bounded$ window mean-payoff problem is non-primitive recursive hard, and the decidability of the problem remains open. Note that Theorem~\ref{thm:multiDimFinite} also implies that $\playerOne$ may require a window size of non-primitive recursive length to win a multi-dimension bounded window mean-payoff game (in contrast to the pseudo-polynomial bound of the one-dimension case given in Corollary~\ref{cor:oneDimCor}). The main motivation to study window objectives as a strengthening and approximation of the original objectives is to ensure the objectives in every sliding window of reasonable size. A prohibitively large window size of non-primitive recursive length suggests that the decidability of the bounded window problem is purely of theoretical interest, and the fixed window problem is the more relevant question.
\end{remark}

\subsection{\textbf{On direct objectives}}
\label{sec:directObj}
Through this paper, we have studied the prefix-independent versions of the objectives defined in Sec.~\ref{subsec:wmp_def}. In this section, we briefly argue that similar complexity results are obtained for the \textit{direct} variants (Table~\ref{table:complexityAndMemoryDirect}), by slight modifications of the presented proofs. Notice that memory requirements however change, as it is now sufficient to force one sufficiently long (for the fixed problem) or never closing (for the bounded problem) window to make an outcome losing.

\vspace{-4mm}
\renewcommand{\arraystretch}{1.2}
\begin{table}[htb]
  \centering   
\begin{footnotesize}
\begin{tabular}{|c||c|c|c||c|c|c|}
\cline{2-7} \multicolumn{1}{c|}{} &  \multicolumn{3}{c||}{~one-dimension~} & \multicolumn{3}{c|}{~$\dimension$-dimension~} \\ 
\cline{2-7} \multicolumn{1}{c|}{} &  ~~complexity~~ & ~~$\playerOne$ mem.~~ & ~~$\playerTwo$ mem.~~ & ~~complexity~~ & ~~$\playerOne$ mem.~~ & ~~$\playerTwo$ mem.~~\\ 
\hline ~~direct fixed~~ & \multirow{2}{*}{P-c.} & \multicolumn{2}{c||}{}  & PSPACE-h. & \multicolumn{2}{c|}{\multirow{4}{*}{exponential}}\\
~polynomial window~ & & \multicolumn{2}{c||}{mem. req.} & EXP-easy & \multicolumn{2}{c|}{} \\
\cline{1-2}\cline{5-5} ~~direct fixed~~ & \multirow{2}{*}{P($\vert\states\vert, \bits, \sizeMax$)} & \multicolumn{2}{c||}{$\leq$ linear($\vert\states\vert \cdot \sizeMax$)} & \multirow{2}{*}{EXP-c.} & \multicolumn{2}{c|}{} \\
~~arbitrary window~~ & & \multicolumn{2}{c||}{} & & \multicolumn{2}{c|}{}\\
\hline ~~direct $\bounded$~~ & \multirow{2}{*}{\NPinter} & \multirow{2}{*}{mem-less} & \multirow{2}{*}{\textbf{linear}} & \multirow{2}{*}{NPR-h.} & \multirow{2}{*}{-} & \multirow{2}{*}{-}\\
~~window problem~~ & & & & & &\\
\hline
\end{tabular}
\end{footnotesize}
\vspace*{2mm}
\caption{Complexities and memory requirements for the direct objectives. Differences with the prefix-independent objectives are in bold.}
\label{table:complexityAndMemoryDirect}
\end{table}
\vspace{-6mm}

\smallskip\noindent\textbf{One-dimension direct fixed window problem}. The polynomial algorithm in the size of the game and the size of the window is given by Lemma~\ref{lem:directWinAlg}. For polynomial windows, we obtain P-hardness using the proof of Lemma~\ref{lem:oneDimPHardness} and window size $\sizeMax = 2\cdot\statesSize$, as if $\playerOne$ can win the reachability game, he has a strategy to do it in at most $\statesSize$ steps. Lemma~\ref{lem:memoryOneDimFixed} extends to direct objectives, and provides linear upper bounds on memory with the same arguments. In particular, the provided examples of games require memory for both players when the direct fixed window objective is considered.

\smallskip\noindent\textbf{One-dimension direct $\bounded$ window problem}. We obtain an $\NPinter$ algorithm for the direct $\bounded$ problem by simplifying $\finiteProblemAlg$ (Lemma~\ref{lem:oneDimFiniteNPinter}) as follows: $\finiteProblemAlg(\game) = \states \setminus \unbNegWindowAlg(\game)$. Indeed, as the objective is no longer prefix-independent, it is sufficient for $\playerTwo$ to force one window that never closes to make the play losing. Hence, the attractor of the set $\states\setminus L$ in algorithm $\finiteProblemAlg$ cannot be declared winning for $\playerOne$. While memoryless strategies still suffice for $\playerOne$ (applying the arguments of Lemma~\ref{lem:oneDimFiniteNPinter}), winning strategies for $\playerTwo$ do not need infinite memory anymore, but at most linear memory. Indeed, a winning strategy of $\playerTwo$ is the one described in the proof of Lemma~\ref{lem:oneDimFiniteNPinter}, but without taking rounds into account (i.e., the play stays forever in round one). To illustrate that memoryless strategies still do not suffice for $\playerTwo$, consider a variation of Fig.~\ref{fig:wmpEx2}, with the initial state being $s_{2}$. Clearly, $\playerTwo$ must first take the cycle to $s_{1}$ then loop forever on $s_{2}$ to ensure a never closing window. Corollary~\ref{cor:oneDimCor} extends in the direct case and gives the same bound on the window size. Finally, the reduction of mean-payoff games developed in Lemma~\ref{lem:oneDimMPReduction} carries over to the direct $\bounded$ window objective, as the game with shifted weights is such that the mean-payoff is strictly positive. In which case, the supremum total-payoff is infinite and Lemma~\ref{lem:relationBoundedClassical} applies, implying the result.

\smallskip\noindent\textbf{Multi-dimension direct fixed window problem}. The following results extend to the direct case.
\begin{itemize}%
\item \textit{EXPTIME algorithm}. Lemma~\ref{lem:multiDimReducCB} presents a reduction from fixed window games to exponentially larger co-B\"uchi games. It is easy to obtain a similar reduction from direct fixed window games by considering a safety objective for $\playerOne$ (i.e., reachability for the set of bad states for $\playerTwo$). This also implies an exponential-time algorithm.
\item \textit{EXPTIME-hardness of the arbitrary window problem for weights $\{-1, 0, 1\}$ and arbitrary dimensions}. The reduction of the membership problem for polynomial space alternating Turing machines immediately yields the result for the direct objective. Indeed, the strategies proposed in the proof stay winning for this objective. Note that actually the strategy of $\playerTwo$ may be simpler, as he may cycle forever on $s_{{\sf restart}}$ after branching to punish an unfaithful symbol disclosure by keeping a window indefinitely open.
\item \textit{EXPTIME-hardness of the arbitrary window problem for two dimensions and arbitrary weights}. The reduction from countdown games established in Lemma~\ref{lem:multiDimFixedHardCountdown} extends straightforwardly to direct objectives, and $\playerTwo$ can use a simpler winning strategy consisting in looping forever in its zero cycle.
\item \textit{PSPACE-hardness of the polynomial window problem}. The reduction of generalized reachablity games also holds without modification for the direct fixed polynomial window objective.
\item \textit{Exponential memory bounds}. Exponential upper bounds follow from the modified Lemma~\ref{lem:multiDimReducCB}, using safety games. Lower bounds witnessed by Lemma~\ref{lem:multiDimFixedMemory} are also verified in the presented game as well as from the reduction of generalized reachablity games.
\end{itemize}

\smallskip\noindent\textbf{Multi-dimension direct $\bounded$ window problem}. Non-primitive recursive hardness (Theorem~\ref{thm:multiDimFinite}) extends to the direct objective with a simpler construction. Indeed, it is sufficient to consider the game using only the first $(\resetNetPlacesSize + 1)$ dimensions, and consisting of only one gadget, with the branching out of the gadget now going to an absorbing state with a self-loop of weight $\resetNetOneZeroVector$ such that when $\playerOne$ decides to branch, all windows get closed eventually, except in the dimension $p$ of his choice, for which the window is only closed if $\playerTwo$ cheats and stays open forever otherwise.

\section{Discussion}

\smallskip\noindent\textbf{Conclusion.} The strong relation between mean-payoff and total-payoff in single dimension breaks in multi-weighted games as the total-payoff threshold problem becomes undecidable. 
We introduced the concept of window objectives, which provide conservative approximations with timing guarantees. We believe that window objectives are interesting on the standpoint of \textit{expressiveness}, as they permit to consider quantitative objectives in a time frame context. Furthermore, window objectives constitute an attractive alternative in terms of \textit{tractability}.

We provided algorithms and optimal complexity bounds for one-dimension games. We notably showed that the fixed window variant can be solved in \textit{polynomial time}, which is not known to be the case for the mean-payoff and total-payoff objectives~\cite{ZP96,jurdzinski98,gawlitza2009,BCDGR11}.

In multi-dimensions, fixed window games hold an interesting position. While the associated decision problem is easier to solve than the mean-payoff threshold problem in one-dimension (P instead of $\NPinter$), it becomes comparatively harder in multi-dimension (PSPACE-hard even for polynomial windows instead of coNP). However, it remains EXPTIME-complete for arbitrary windows, in contrast to the total-payoff which becomes undecidable. In terms of complexity, the problem stands in an interesting middle ground between mean-payoff and total-payoff objectives. For the specific case of polynomial windows, there remains a gap between our exponential-time algorithm and the PSPACE lower bound. Whether we can obtain PSPACE-membership or EXPTIME-hardness for the fixed polynomial window problem in multi-dimension games is an open question. 

We also established a prohibitive lower bound on the complexity of multi-dimension bounded window games: they are at least non-primitive recursive hard. It would still be of theoretical interest to know if those games are decidable or not. Techniques used for the undecidability proof of multi-dimension total-payoff games (Thm.~\ref{thm:undecidableTP}) cannot be extended easily to the bounded window setting. In particular, our reduction to two-counter machines requires to ``memorize'' sums of weights both negatively and positively. In the window context, such sums can only be memorized negatively (i.e., while windows stay open), as positive windows are closed and forgotten immediately (this corresponds to so-called resets in Sect.~\ref{subsec:wmp_multiDim}). 

\smallskip\noindent\textbf{Future work.} We mention two interesting questions to investigate. 
First, in the multi-dimension setting, our definitions of window objectives (Sect.~\ref{subsec:wmp_def}) are asynchronous: windows on different dimensions are not required to close simultaneously. \textit{Synchronous variants} may be interesting to study but some useful properties are lost in that setting, such as the inductive property on windows. Hence our techniques cannot be extended straightforwardly. Second, conjunction of window objectives with a parity objective would be interesting to consider. Indeed, a similar notion of time bounds on liveness properties was studied by Chatterjee et al. through the concept of \textit{finitary} winning~\cite{CH06,DBLP:journals/tocl/ChatterjeeHH09}. Combining a similar approach with our window objectives seems natural.

\bibliographystyle{plain}
\bibliography{altMP_bib}

\begin{thebibliography}{10}

\bibitem{ackermann1928}
Wilhelm Ackermann.
\newblock Zum hilbertschen aufbau der reellen zahlen.
\newblock {\em Mathematische Annalen}, 99(1):118--133, 1928.

\bibitem{AH98}
Rajeev Alur and Thomas~A. Henzinger.
\newblock Finitary fairness.
\newblock {\em ACM Trans. Program. Lang. Syst.}, 20(6):1171--1194, 1998.

\bibitem{BV07}
Henrik Bj{\"o}rklund and Sergei~G. Vorobyov.
\newblock A combinatorial strongly subexponential strategy improvement
  algorithm for mean payoff games.
\newblock {\em Discrete Applied Mathematics}, 155:210--229, 2007.

\bibitem{DBLP:conf/tacas/BohyBFR13}
Aaron Bohy, V{\'e}ronique Bruy{\`e}re, Emmanuel Filiot, and Jean-Fran\c{c}ois
  Raskin.
\newblock Synthesis from {LTL} specifications with mean-payoff objectives.
\newblock In {\em Proc. of TACAS}, LNCS 7795, pages 169--184. Springer, 2013.

\bibitem{BCDGR11}
Lubos Brim, Jakub Chaloupka, Laurent Doyen, Raffaella Gentilini, and
  Jean-Fran\c{c}ois Raskin.
\newblock Faster algorithms for mean-payoff games.
\newblock {\em Formal Methods in System Design}, 38(2):97--118, 2011.

\bibitem{DBLP:conf/stacs/BruyereFRR14}
V{\'{e}}ronique Bruy{\`{e}}re, Emmanuel Filiot, Mickael Randour, and
  Jean-Fran{\c{c}}ois Raskin.
\newblock Meet your expectations with guarantees: Beyond worst-case synthesis
  in quantitative games.
\newblock In {\em Proc. of STACS}, LIPIcs 25, pages 199--213. Schloss Dagstuhl
  - LZI, 2014.

\bibitem{CdAHS03}
Arindam Chakrabarti, Luca de~Alfaro, Thomas~A. Henzinger, and Mari{\"{e}}lle
  Stoelinga.
\newblock Resource interfaces.
\newblock In {\em Proc. of EMSOFT}, LNCS 2855, pages 117--133. Springer, 2003.

\bibitem{chandra_JACM1981}
Ashok~K. Chandra, Dexter Kozen, and Larry~J. Stockmeyer.
\newblock Alternation.
\newblock {\em J. ACM}, 28(1):114--133, 1981.

\bibitem{chatterjee_FSTTCS10}
Krishnendu Chatterjee, Laurent Doyen, Thomas~A. Henzinger, and
  Jean-Fran{\c{c}}ois Raskin.
\newblock Generalized mean-payoff and energy games.
\newblock In {\em Proc. of FSTTCS}, LIPIcs 8, pages 505--516. Schloss Dagstuhl
  - LZI, 2010.

\bibitem{chatterjee_ATVA2013}
Krishnendu Chatterjee, Laurent Doyen, Mickael Randour, and Jean-Fran\c{c}ois
  Raskin.
\newblock Looking at mean-payoff and total-payoff through windows.
\newblock In {\em Proc. of ATVA}, LNCS 8172, pages 118--132. Springer, 2013.

\bibitem{CF_CSL13}
Krishnendu Chatterjee and Nathana{\"{e}}l Fijalkow.
\newblock Infinite-state games with finitary conditions.
\newblock In {\em In Proc. of CSL}, LIPIcs 8, pages 181--196. Schloss Dagstuhl
  - LZI, 2013.

\bibitem{DBLP:journals/jacm/ChatterjeeH14}
Krishnendu Chatterjee and Monika Henzinger.
\newblock Efficient and dynamic algorithms for alternating {B}{\"{u}}chi games
  and maximal end-component decomposition.
\newblock {\em J. {ACM}}, 61(3):15, 2014.

\bibitem{CHKLR14}
Krishnendu Chatterjee, Monika Henzinger, Sebastian Krinninger, Veronika
  Loitzenbauer, and Michael~A. Raskin.
\newblock Approximating the minimum cycle mean.
\newblock {\em Theor. Comput. Sci.}, 547:104--116, 2014.

\bibitem{CHKN14}
Krishnendu Chatterjee, Monika Henzinger, Sebastian Krinninger, and Danupon
  Nanongkai.
\newblock Polynomial-time algorithms for energy games with special weight
  structures.
\newblock {\em Algorithmica}, 70(3):457--492, 2014.

\bibitem{CH06}
Krishnendu Chatterjee and Thomas~A. Henzinger.
\newblock Finitary winning in omega-regular games.
\newblock In {\em Proc. of TACAS}, LNCS 3920, pages 257--271. Springer, 2006.

\bibitem{DBLP:journals/tocl/ChatterjeeHH09}
Krishnendu Chatterjee, Thomas~A. Henzinger, and Florian Horn.
\newblock Finitary winning in omega-regular games.
\newblock {\em ACM Trans. Comput. Log.}, 11(1):1--27, 2009.

\bibitem{DBLP:journals/acta/ChatterjeeRR14}
Krishnendu Chatterjee, Mickael Randour, and Jean-Fran\c{c}ois Raskin.
\newblock Strategy synthesis for multi-dimensional quantitative objectives.
\newblock {\em Acta Informatica}, 51(3-4):129--163, 2014.

\bibitem{dufourd98}
Catherine Dufourd, Alain Finkel, and Philippe Schnoebelen.
\newblock Reset nets between decidability and undecidability.
\newblock In {\em Proc. of ICALP}, LNCS 1443, pages 103--115. Springer, 1998.

\bibitem{EM79}
Andrzej Ehrenfeucht and Jan Mycielski.
\newblock Positional strategies for mean payoff games.
\newblock {\em Int. Journal of Game Theory}, 8(2):109--113, 1979.

\bibitem{emerson_FOCS1991}
E.~Allen Emerson and Charanjit~S. Jutla.
\newblock Tree automata, mu-calculus and determinacy.
\newblock In {\em Proc. of FOCS}, pages 368--377. IEEE Computer Society, 1991.

\bibitem{fijalkow_CORR2010}
Nathana{\"e}l Fijalkow and Florian Horn.
\newblock The surprizing complexity of generalized reachability games.
\newblock {\em CoRR}, abs/1010.2420:1--15, 2010.

\bibitem{filar1997}
Jerzy Filar and Koos Vrieze.
\newblock {\em Competitive {M}arkov Decision Processes}.
\newblock Springer, 1997.

\bibitem{gawlitza2009}
Thomas Gawlitza and Helmut Seidl.
\newblock Games through nested fixpoints.
\newblock In {\em Proc. of CAV}, LNCS 5643, pages 291--305. Springer, 2009.

\bibitem{gimbert2004}
Hugo Gimbert and Wieslaw Zielonka.
\newblock When can you play positionally?
\newblock In {\em Proc. of MFCS}, LNCS 3153, pages 686--697. Springer, 2004.

\bibitem{GKK88}
Vladimir~A. Gurvich, Alexander~V. Karzanov, and L.G. Khachivan.
\newblock Cyclic games and an algorithm to find minimax cycle means in directed
  graphs.
\newblock {\em USSR Computational Mathematics and Mathematical Physics},
  28(5):85--91, 1988.

\bibitem{jurdzinski98}
Marcin Jurdzi\'nski.
\newblock Deciding the winner in parity games is in {UP} $\cap$ co-{UP}.
\newblock {\em Inf. Process. Lett.}, 68(3):119--124, 1998.

\bibitem{jurdzinski_LMCS2008}
Marcin Jurdzi\'nski, Jeremy Sproston, and Fran\c{c}ois Laroussinie.
\newblock Model checking probabilistic timed automata with one or two clocks.
\newblock {\em Logical Methods in Computer Science}, 4(3):1--28, 2008.

\bibitem{Karp78}
Richard~M. Karp.
\newblock A characterization of the minimum cycle mean in a digraph.
\newblock {\em Discrete Mathematics}, 1978.

\bibitem{KL93}
Alexander~V. Karzanov and Vasilij~N. Lebedev.
\newblock Cyclical games with prohibitions.
\newblock {\em Math. Program.}, 60:277--293, 1993.

\bibitem{KPV09}
Orna Kupferman, Nir Piterman, and Moshe~Y. Vardi.
\newblock From liveness to promptness.
\newblock {\em Formal Methods in System Design}, 34(2):83--103, 2009.

\bibitem{lazic08}
Ranko Lazic, Tom Newcomb, Jo{\"e}l Ouaknine, A.~W. Roscoe, and James Worrell.
\newblock Nets with tokens which carry data.
\newblock {\em Fundam. Inform.}, 88(3):251--274, 2008.

\bibitem{LP07}
Yuri~M. Lifshits and Dmitri~S. Pavlov.
\newblock Potential theory for mean payoff games.
\newblock {\em Journal of Mathematical Sciences}, 145(3):4967--4974, 2007.

\bibitem{liggett_SR69}
Thomas~M. Liggett. and Steven~A. Lippman.
\newblock Stochastic games with perfect information and time average payoff.
\newblock {\em Siam Review}, 11(4):604--607, 1969.

\bibitem{martin_AM75}
Donald~A. Martin.
\newblock Borel determinacy.
\newblock {\em Annals of Mathematics}, 102(2):363--371, 1975.

\bibitem{minsky1961}
Marvin~L. Minsky.
\newblock Recursive unsolvability of {P}ost's problem of ``tag'' and other
  topics in theory of {T}uring machines.
\newblock {\em The Annals of Mathematics}, 74(3):437--455, 1961.

\bibitem{P99}
Nicolai~N. Pisaruk.
\newblock Mean cost cyclical games.
\newblock {\em Mathematics of Operations Research}, 24(4):817--828, 1999.

\bibitem{schnoebelen02}
Philippe Schnoebelen.
\newblock Verifying lossy channel systems has nonprimitive recursive
  complexity.
\newblock {\em Inf. Process. Lett.}, 83(5):251--261, 2002.

\bibitem{velner_corr2012}
Yaron Velner, Krishnendu Chatterjee, Laurent Doyen, Thomas~A. Henzinger,
  Alexander Rabinovich, and Jean-Fran\c{c}ois Raskin.
\newblock The complexity of multi-mean-payoff and multi-energy games.
\newblock {\em CoRR}, abs/1209.3234:1--27, 2012.

\bibitem{VR11}
Yaron Velner and Alexander Rabinovich.
\newblock Church synthesis problem for noisy input.
\newblock In {\em Proc. of FOSSACS}, LNCS 6604, pages 275--289. Springer, 2011.

\bibitem{ZP96}
Uri Zwick and Mike Paterson.
\newblock The complexity of mean payoff games on graphs.
\newblock {\em Theoretical Computer Science}, 158:343--359, 1996.

\end{thebibliography}

\end{document}